\newcommand{\mylabel}[2]{#1#2}
\newtheorem{theorem}{Theorem}
\newtheorem{lemma}{Lemma}
\newtheorem{problem}{Problem}
\newtheorem*{pb_mjls*}{Finite Horizon MJLS Optimal Control Problem}
\newtheorem{claim}{Claim}
\newtheorem{corollary}{Corollary}
\newtheorem{definition}{Definition}
\newtheorem{assumption}{Assumption}
\newtheorem{remark}{Remark}
\tikzstyle{block} = [draw,rectangle, rounded corners, minimum width=1cm, minimum height=0.8cm,text centered, line width=2pt ]
\tikzstyle{arrow} = [thick,->,>=stealth,line width=2pt]
\tikzset{cross/.style={cross out, draw=black, minimum size=2*(#1-\pgflinewidth), inner sep=0pt, outer sep=0pt},
cross/.default={1pt}}
\tikzset{
  shift left/.style ={commutative diagrams/shift left={#1}},
  shift right/.style={commutative diagrams/shift right={#1}}
}
\newcommand\rsmraise[1]{%
  \ifx#1\displaystyle .8\else
    \ifx#1\textstyle .8\else
      \ifx#1\scriptstyle .6\else
        .45%
      \fi
    \fi
  \fi}
\tikzstyle{block} = [draw,rectangle, rounded corners, minimum width=1cm, minimum height=0.8cm,text centered, line width=2pt ]
\tikzstyle{arrow} = [thick,->,>=stealth,line width=2pt]
\tikzset{
    addarrow/.style={decoration={markings, mark=at position 1 with {\arrow{stealth}}},
                     postaction={decorate}}
}
\newcommand{\tp}{\intercal}		
\newcommand{\R}{\mathbb{R}}			
\DeclareMathOperator{\ee}{\mathbb{E}}			
\DeclareMathOperator{\prob}{\mathbb{P}}			
\DeclareMathOperator{\vecc}{\mathbf{vec}}		
\DeclareMathOperator{\tr}{\mathbf{tr}}			
\DeclareMathOperator{\cov}{\mathbf{cov}}		
\DeclareMathOperator{\diag}{\mathbf{diag}}	
\newcommand{\bmat}[1]{\begin{bmatrix}#1\end{bmatrix}}
\newcounter{l1}
\newcounter{l2}
\newcommand{\bdashlist}{\begin{list}{$-$}{} }
\newcommand{\bromalist}{\begin{list}{\roman{l1}.}{\usecounter{l1}}}
\newcommand{\balphlist}{\begin{list}{(\alph{l2})}{\usecounter{l2}}}
\title{
Optimal Infinite Horizon Decentralized Networked Controllers with Unreliable Communication
}
\author{Yi Ouyang, Seyed Mohammad Asghari, and Ashutosh Nayyar
\thanks{Preliminary version of this paper appears in the proceedings of the 2018 American Control Conference (ACC) (see \cite{Ouyang_Asghari_Nayyar:2018}).}
\thanks{
Y. Ouyang is with the University of California, Berkeley. 
S. M. Asghari and A. Nayyar are with the Department of Electrical
Engineering, University of Southern California, Los Angeles, CA. 
Email: ouyangyi@berkeley.edu; asgharip@usc.edu; ashutosn@usc.edu.
}
}
\begin{document}

\maketitle

\begin{abstract}
We consider a decentralized networked control system (DNCS) consisting of a remote controller and a collection of linear plants, each
associated with a local controller. Each local controller directly observes the state of its co-located plant and can inform the remote controller of the plant's state through an unreliable uplink channel. The downlink channels from the remote controller to local controllers were assumed to be perfect. The objective of the local controllers and the remote controller is to cooperatively minimize the infinite horizon time average of expected quadratic cost. The finite horizon version of this problem was solved in our prior work \cite{asghari_ouyang_nayyar_tac_2018}. The optimal strategies in the finite horizon case were shown to be characterized by coupled Riccati recursions.  In this paper, we show that if the link failure probabilities are below certain critical thresholds, then the coupled Riccati recursions of the finite horizon solution reach a steady state and the corresponding decentralized strategies are optimal. Above these thresholds, we show that no strategy can achieve finite cost. We exploit a connection between our DNCS Riccati recursions and the coupled Riccati recursions of an auxiliary Markov jump linear system to obtain our results. Our main results  in Theorems \ref{thm:DC_infinite_2C} and \ref{thm:DC_infinite_NC} explicitly identify the critical thresholds for the link failure probabilities and  the  optimal decentralized control strategies  when all link failure probabilities are below their thresholds. 
\end{abstract}


\section{Introduction}
\label{sec:intro}

%

Many cyber-physical systems can be viewed as Networked Control Systems (NCSs) consisting of  several components such as physical systems, controllers, actuators and sensors that are interconnected by communication networks.
One key question in the design and operation of such systems is the following: what effect do communication limitations and imperfections such as packet loss, delays, noise and data rate limits have on the system performance? A well-studied communication model in the context of NCSs is that of an unreliable communication link that randomly loses packets. This means that the receiver in this unreliable link (e.g., a controller, an actuator etc.) receives information intermittently and has to perform its functions (selecting a control action, applying a control on the plant etc.) despite the interruptions in communication.

Networked control and  estimation problems in which there is only a single controller in the NCS and the unreliable links are from sensor(s) to the controller and/or from the controller to actuator(s) have been a focus of significant research (see, for example, \cite{sinopoli2004kalman,Imer2006optimal,Sinopoli2005,Sinopoli2006,Elia2004, Garone2008,Schenato2007, Knorn_Dey_2015, Gupta_Martins_2009,Gupta_Dana_2009, Gupta_Hassibi_2007,BansalBasar:1989a,TatikondaSahaiMitter:2004,NairFagnaniZampieriEvan:2007,molin2013optimality,rabi2014separated} ).
In many complex NCSs, however, there are multiple controllers which may need to communicate with each other to control the overall system. In such cases, the unreliable communication may not be just between sensors and controllers or controllers and actuators but also among controllers themselves.
Thus, multiple controllers may need to make decentralized decisions while communicating intermittently with each other and with the sensors and actuators of the system. We will refer to such a NCS as a Decentralized Networked Control System (DNCS) since the control decisions need to be made in a decentralized manner.

The fact that multiple controllers need to make decentralized decisions means that control problems in DNCSs  can be viewed as decentralized control problems. 
Optimal decentralized control problems are generally difficult to solve (see \cite{Witsenhausen:1968, LipsaMartins:2011b,blondel2000survey, mahajan_martins_yuksel}). 
In general, linear control strategies may not be optimal, and even the problem of finding the best linear control strategies may not be a  convex problem  \cite{linearPHS}, \cite{YukselBasar:2013}. 
Existing methods for computing optimal decentralized controllers require specific information structures and system properties such as  partial nestedness of the information structure \cite{HoChu:1972}, stochastic nestedness \cite{Yuksel:2009} quadratic invariance \cite{RotkowitzLall:2006}, substitutability \cite{AsghariNayyar:2015, asghari2016dynamic} etc. A common feature of the prior work in decentralized control is that the underlying communication structure of the decentralized system is assumed to be fixed and unchanging. For example, several works assume a fixed communication graph among controllers whose (directed) edges represent perfect communication links between controllers
\cite{SwigartLall:2010, ShahParrilo:2013, kim2012separable, swigart_lall_2011, LessardLall:2011, lessard_lall_2012, Lessard_Lall_2015, Lessard_2012, tanaka2014optimal, NayyarL14}. Similarly, when the communication graph incorporates delays, the delays are assumed to be fixed \cite{VaraiyaWalrand:1978, Kurtaran_Sivan_1974, Yoshikawa:1975, Nayyar_mahajan_Teneketzis_2011, NayyarKalathilJain:conf, Rantzer:2007, LamperskiDoyle:2011, FeyzmahdavianGattamiJohansson:2012, lamperski_doyle_2015,lamperski2015optimal}. Such models, however, do not incorporate the intermittent nature of communication over unreliable links between controllers. 
While some works \cite{chang2011explicit,chang2011synthesis} have investigated  unreliable controller-actuator communication  in the context of a decentralized control problem, they require that the inter-controller communication be perfect.

In this paper, we investigate a decentralized control problem with unreliable inter-controller communication. In particular, we consider a DNCS consisting of a remote controller and a collection of linear plants, each
associated with a local controller. Each plant is directly controlled by a local controller which can perfectly observe the state of the plant. The remote controller can control all plants, but it does not have direct access to the states as its name suggests. The remote controller and the local controllers are connected
by a communication network in which the downlinks from  remote controller to  local controllers are perfect but the uplinks
from local controllers to  remote controller are unreliable channels with random packet drops.  The objective of the local controllers and the remote controller is to cooperatively
minimize an overall quadratic performance cost of the DNCS. The information structure of this DNCS does not fit into the standard definition of partially nested information structures due to the unreliable links between controllers.  

For the finite horizon version of our  problem, 
we obtained optimal  decentralized controllers  in \cite{ouyang2016optimal, asghari_ouyang_nayyar_tac_2018} using ideas from the common information approach \cite{nayyar2013decentralized}. The optimal strategies in the finite horizon case were shown to be characterized by coupled Riccati recursions. Another approach based on  Pontryagin's maximum principle was used in \cite{liang2018control} for the finite horizon problem with only two controllers. In this paper, we will focus on the infinite time horizon average cost problem. The infinite horizon problem differs from its finite horizon counterpart in several key ways: \\
(i) In the finite horizon problem, the optimal cost is always finite. In the infinite horizon problem, however, it may be the case that no strategy can achieve a finite cost over the infinite horizon. In fact, we will show that this is the case  if the link failure probabilities are above certain thresholds. \\
(ii) Similarly, the finite horizon problem does not have to deal with the issue of stability since under any reasonable finite horizon strategy the system state cannot become ``too large'' in a finite time. The stability of the state becomes a key issue in the infinite horizon. In addition to proving optimality of control strategies, we need to make sure that the optimal strategies keep the state mean-square stable. \\
(iii) Finally, the analytical approaches for the finite and infinite horizon problems are fundamentally different. In the finite horizon case, we were able to use the common information approach to obtain a coordinator-based dynamic program. In the infinite horizon case, our essential task is to show that the value functions of the coordinator-based finite horizon dynamic program converge to a steady state as the horizon approaches infinity. Since the value functions were characterized by coupled Riccati recursions, this boils down to showing that these coupled recursions reach a steady state.  Further, we need to show that the decentralized control strategies characterized by the steady-state coupled Riccati equations are indeed optimal. We achieve these goals by establishing a connection between our DNCS and an auxiliary (and fictitious) Markov jump linear system (MJLS)\footnote{Note  that due to the presence of multiple controllers, our DNCS cannot be viewed as a standard MJLS (with one controller). Nevertheless, we show that it is still possible to use some MJLS results for our DNCS.}.
An alternative  approach for the two-controller version of our infinite horizon problem was used  in \cite{liang2018control} to find optimal strategies if certain coupled Riccati equations have solutions.

%

\subsection{Contributions of the Paper}
\begin{enumerate}
\item  We investigate an infinite time horizon decentralized stochastic control problem in which local controllers send their information to a
remote controller over unreliable links. To the best of our
knowledge, this is the first paper that solves an infinite time horizon optimal
decentralized control problem with unreliable communication
between controllers.  The finite time horizon version of our problem was solved in \cite{ouyang2016optimal, asghari_ouyang_nayyar_tac_2018} and our results  in this paper use the finite horizon solutions obtained there. However, unlike the finite horizon case, we have to address the possibility that no control strategy may achieve finite cost over infinite time horizon. Due to such stability related issues, our approach for the infinite horizon problem is markedly different from the  common information based approach adopted in \cite{ouyang2016optimal, asghari_ouyang_nayyar_tac_2018}.

\item We show that there are critical thresholds for link failure probabilities above which no control strategy can achieve a finite cost in our problem. When the link failure probabilities are  below their critical thresholds, we show that the optimal control strategies of this infinite horizon decentralized control problem admit simple structures: the optimal remote controller strategy is a time-invariant linear function of the common estimates of system states and the optimal strategies for  local controllers are time-invariant linear functions of the common estimates of system states and the perfectly observed local states. The main strengths of our result are that (i) it provides simple strategies that are proven to be optimal: not only are the strategies in Theorems \ref{thm:DC_infinite_2C} and \ref{thm:DC_infinite_NC} linear, they use estimates that can be easily updated; (ii) it shows that the optimal strategies are completely characterized by solution of coupled Riccati equations. 


\item If the local controllers act only as sensors and the remote controller is the only controller in the system, then our model reduces to a NCS with multiple sensors  observing different components of the system state and communicating with the remote controller over independent unreliable links. Thus, we obtain optimal strategy and critical probabilities for a multi-sensor, single-controller NCS as a corollary of our result in Theorem \ref{thm:DC_infinite_NC}.
 
\item Finally, our problem can be viewed as a dynamic team problem  by viewing  each controller's actions at different time instants as the actions of distinct players \cite{HoChu:1972}. Since we are interested in infinite time horizon, this team-theoretic viewpoint means that our dynamic team has infinitely many players. Further, due to the unreliable links, our problem does not directly fit into the partially nested LQG team problem. Thus, the standard results for partially nested LQG teams with finitely many players \cite{HoChu:1972} do no apply to our problem.  As observed in \cite{infinite_team}, results for teams with finitely many players cannot be directly extended to teams with infinitely many players even when the information structure is static or partially nested.
 In spite of this,  our dynamic team problem turns out to have simple optimal strategies.
\end{enumerate}


\subsection{Organization}
The rest of the paper is organized as follows. Section \ref{sec:pre} summarizes the notations and operators used in this paper. 
In Section \ref{sec:model_2_controllers}, we formulate the finite horizon and infinite horizon optimal control problems for a DNCS with one remote controller and one local controller. We briefly review Markov Jump Linear Systems (MJLSs) in Section \ref{sec:mjls}. We establish a connection between the  DNCS of Section \ref{sec:model_2_controllers} and an auxiliary MJLS in Section \ref{sec:infinite_2_controllers} and use this connection to provide our main results for the  DNCS of Section \ref{sec:model_2_controllers}. In Section \ref{sec:model_N_controllers}, we extend our DNCS model to the case with multiple local controllers and provide our main results for  this DNCS. We discuss some key aspects of our approach in Section \ref{sec:discussion}. Section \ref{sec:conclusion} concludes the paper.
The proofs of all  technical results are in the Appendices.


\section{Preliminaries}
\label{sec:pre}
\subsection{Notations}
In general, subscripts are used as time indices while superscripts are used to index controllers.
For time indices $t_1\leq t_2$, $X_{t_1:t_2}$ is a short hand notation for the collection variables $(X_{t_1},X_{t_1+1},...,X_{t_2})$.
Random variables/vectors are denoted by upper case letters, their realizations by the corresponding lower case letters.
For a sequence of column vectors $X, Y, Z, \ldots$, the notation $\vecc(X,Y,Z,\ldots)$ denotes the vector $[X^{\tp}, Y^{\tp}, Z^{\tp},...]^{\tp}$. $\prob(\cdot)$ denotes the probability of an event, and $\ee[\cdot]$ and $\cov(\cdot)$ denote the expectation and the covariance matrix of a random variable/vector. The transpose, trace, and spectral radius of a matrix $A$ are denoted by $A^{\tp}$, $\tr(A)$, and $\rho(A)$, respectively. 
For two symmetric matrices $A,B$, $A\succeq B$ (resp. $A\succ B$) means that $(A-B)$ is positive semi-definite (PSD) (resp. positive definite (PD)).
For a block matrix $A$, we use $[A]_{m,:}$ to denote  the $m$-th block row and $[A]_{:,n}$ to denote  the $n$-th block column of $A$. Further, $[A]_{m,n}$ denotes the block  located at the $m$-th block row and $n$-th block column of $A$. For example, if 
\begin{align}
A = \begin{bmatrix}
A^{11} & A^{12} & A^{13} \\
A^{21} & A^{22} & A^{23} \\
A^{31} & A^{32} & A^{33} 
\end{bmatrix},\notag
\end{align}
then $[A]_{2,:} = \begin{bmatrix}
A^{21} & A^{22} & A^{23}
\end{bmatrix}$, $[A]_{:,3} = \begin{bmatrix}
A^{13} \\ A^{23} \\ A^{33}
\end{bmatrix}$, and $[A]_{2,3} = A^{23}$.
We use $\R^n$ to denote the $n$-dimensional Euclidean space and $\R^{m \times n}$ to denote the space of all real-valued $m \times n$ matrices. We use $\otimes$ to denote the Kronecker product.

\subsection{Operator Definitions}\label{sec:operators}
We define the following operators. 
\begin{itemize}
\item Consider matrices $P,Q,R,A,B$ of appropriate dimensions with $P,Q$ being PSD matrices and $R$ being  a PD matrix. We define $\Omega(P,Q,R,A,B)$ and $\Psi(P,R,A,B)$ as follows: 
\begin{align}
\label{Omega}
&\Omega(P,Q,R,A,B) 
:= Q+A^\tp P A- \notag\\
& ~~~~~~~~ ~~~~~~~~ ~~~~~~~~A^\tp P B(R+B^\tp P B)^{-1}B^\tp P A.
\\
&\Psi(P,R,A,B) 
:=
-(R+B^\tp P B)^{-1}B^\tp P A.
\label{Psi}
\end{align}
Note that $P = \Omega(P,Q,R,A,B) $ is the discrete time algebraic Riccati equation.

\item Let $P$ be a block matrix with $M_1$ block rows and $M_2$ block columns. Then, for numbers $m_1, m_2$ and matrix $Q$, $\mathcal{L}_{zero}(P,Q,m_1,m_2)$ is a matrix with the same size as $P$ defined as follows:
\begin{align}
&\mathcal{L}_{zero}(P,Q,m_1,m_2) := \notag \\
&\begin{blockarray}{cccl}
\text{$1:m_2-1$} &m_2 &\text{$m_2+1:M_2$}  &  \\
\begin{block}{[ccc]l}
                       \mathbf{0} & \mathbf{0} & \mathbf{0} & \text{$1:m_1-1$} \\
   \mathbf{0} & Q & \mathbf{0} &m_1 \\
                    \mathbf{0} & \mathbf{0} & \mathbf{0} &\text{$m_1+1:M_1$}  \\
\end{block}
\end{blockarray}
\label{L_zero}
\end{align}

\item Let $P$ be a block matrix with $M_1$ block rows and $M_1$ block columns. Then, for number $m_1$ and matrix $Q$, $\mathcal{L}_{iden}(P,Q,m_1)$ is a matrix with the same size as $P$ defined as follows:
\begin{align}
&\mathcal{L}_{iden}(P,Q,m_1) := \notag \\
&\begin{blockarray}{cccl}
\text{$1:m_1-1$} &m_1 &\text{$m_1+1:M_1$}  &  \\
\begin{block}{[ccc]l}
                       \mathbf{I} & \mathbf{0} & \mathbf{0} & \text{$1:m_1-1$} \\
   \mathbf{0} & Q & \mathbf{0} &m_1 \\
                    \mathbf{0} & \mathbf{0} & \mathbf{I} &\text{$m_1+1:M_1$}  \\
\end{block}
\end{blockarray}
\label{L_iden}
\end{align}
\end{itemize}

\section{System Model and Problem Formulation}
\label{sec:model_2_controllers}
\begin{figure}
\begin{center}
\begin{tikzpicture}
\node [rectangle,draw,minimum width=1.5cm,minimum height=1cm,line width=1pt,rounded corners]at (-1.1,-0.1) (1) {
\begin{small}
\begin{tabular}{c}
Remote \\ Controller $C^0$
\end{tabular}
\end{small}}; 
\node [rectangle,draw,minimum width=1.5cm,minimum height=1cm,line width=1pt,rounded corners]at (5.1,-0.1) (2) {
\begin{small}
\begin{tabular}{c}
Local \\ Controller $C^1$
\end{tabular}
\end{small}
}; 
\node [rectangle,draw,minimum width=7cm,minimum height=1cm,line width=1pt,rounded corners]at (2,2) (3) {Plant}; 


\draw [line width=1pt] (3.65,-0.15) to[cspst] (0.5,-0.15);
\draw [->,line width=1pt,>=stealth]  (0.5,-0.15) -- (0.30,-0.15);

\draw [->,line width=1pt]  (0.30,0.15) -- (3.65,0.15);

\draw [->,line width=1pt,dashed,>=stealth] (5,0.5) --  (5,1.5);   
\draw [->,line width=1pt,>=stealth] (4.7,1.5)  -- (4.7,0.5); 
\draw [->,line width=1pt,dashed,>=stealth] (-1,0.5) -- (-1,1.5);  

\node[] at (4.4,1) {$X_t$};
\node[] at (5.4,1) {$U_t^1$};
\node[] at (-0.6,1) {$U_t^0$};
\node[] at (2.3,-0.6) {$\Gamma_t^1$};
\node[] at (3.2,-0.4) {$X_t$};
\node[] at (0.9,-0.4) {$Z_t^1$};
\node[] at (2.1,0.4) {$Z_t^1, U_{t-1}^0$};

\end{tikzpicture}
\caption{Two-controller system model. The binary random variable $\Gamma_t^1$ indicates whether packets are transmitted successfully. Dashed lines indicate control links and solid lines indicate communication links.}
\label{fig:SystemModel_2C}
\end{center}
\end{figure}
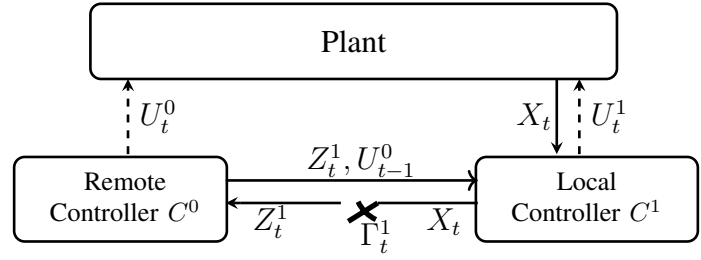

Consider a discrete-time system with a local controller $C^1$ and a remote controller $C^0$  as shown in Fig. \ref{fig:SystemModel_2C}. 
The linear plant dynamics are given by
\begin{align}
X_{t+1} = & A X_t +  B^{10} U^0_t+ B^{11}U^1_t + W_t
\notag\\
= & A X_t +  B U_t + W_t,
 \label{Model:system_2C}
\end{align}
where $X_t\in \R^{d_X}$ is the state of the plant at time $t$, $U_t = \vecc(U^0_t,U^1_t)$, 
$U^0_t\in \R^{d^0_U}$ is the control action of the remote controller $C^0$, $U^1_t \in \R^{d^1_U}$ is the control action of the local controller $C^1$, and $W_t$ is the noise at time $t$.  $A, B := [B^{10},B^{11}]$ are matrices with appropriate dimensions.We assume that $X_0 = 0$ and $W_t, t=0,1,\dots,$ is an i.i.d. noise process with $\cov(W_t) = \mathbf I$.


\subsection{Communication Model}
\label{subs:comm_model_2C}
 At each time $t$, the local controller $C^1$  observes the state $X_t$ perfectly and sends the observed state to the remote controller $C^0$ through an unreliable link with packet drop probability $p^1$. 
 Let $\Gamma_t^1$ be a Bernoulli random variable describing the state of this link, that is, $\Gamma_t^1=0$ if the link is broken (i.e., the packet is dropped) and  $\Gamma_t^1=1$ if the link is active. We assume that $\Gamma_{t}^1, t \geq 0,$ is an i.i.d. process and is independent of the noise process $W_{0:t}, t \geq 0$.  Let $Z_t^1$ be the output of the unreliable link. Then,
 \begin{align}
\Gamma_t^1 = &\left\{\begin{array}{ll}
1 & \text{ with probability }(1-p^1),\\
0 & \text{ with probability }p^1.
\end{array}\right. \label{eq:gamma}
\\
Z_t^1 = &
\left\{\begin{array}{ll}
X_t & \text{ when } \Gamma_t^1 = 1,\\
\emptyset & \text{ when } \Gamma_t^1 = 0.
\end{array}\right.
\label{Model:channel_2C}
\end{align}
We assume that $Z_t^1$ is perfectly observed by $C^0$. Further,  we assume that  $C^0$ sends an acknowledgment to the local controller $C^1$ if it receives the state value. Thus, effectively,  $Z_t^1$ is perfectly observed by $C^1$ as well.  The two controllers select their control actions at time $t$ after observing $Z_t^1$. We assume that the links for sending acknowledgments as well as the links from the controllers to the plant are perfectly reliable.

\subsection{Information structure and cost}
\label{subs:info_cost_2C}

Let $H_t^0$ and $H_t^1$ denote the information available to the controllers $C^0$ and $C^1$ to make decisions at time $t$, respectively. Then,
\begin{align}
H^0_t = \{Z_{0:t}^1, U^0_{0:t-1}\}, \hspace{2mm} H^1_t= \{X_{0:t}, Z_{0:t}^1, U^1_{0:t-1}, U^0_{0:t-1}\}. 
\label{Model:info_2C}
\end{align}
$H^0_t$ will be referred to as the \textit{common information} among the two controllers at time $t$\footnote{ We assumed that $U^0_{0:t-1}$  is pat of $H^1_t$. This is not a restriction because even if $U^0_{0:t-1}$ is not directly observed by $C^1$ at time $t$,  $C^1$ can still compute it using $C^0$'s strategy since it knows everything $C^0$ knows.}.

Let $\mathcal{H}^0_t$ and $\mathcal{H}^1_t$ be the spaces of all possible 
realizations of $H^0_t$ and $H^1_t$, respectively.
Then, the control actions are selected according to
\begin{align}
U^0_t = g^0_t(H^0_t), \hspace{2mm} U^1_t = g^1_t(H^1_t), 
\label{Model:strategy_2C}
\end{align}
where the control laws $g^0_t:\mathcal{H}^0_t \to \R^{d_0}$ and $g^1_t:\mathcal{H}^1_t \to \R^{d_1}$ are measurable mappings.
We use $g:=(g^0_{0},g^0_1,\dots,g^1_{0},g^1_1,\dots)$ to denote the control strategies of $C^0$ and $C^1$.

The instantaneous cost $c(X_t,U_t)$ of the system is a  quadratic function given by
\begin{align}
&c(X_t,U_t) = 
X_t^\tp Q X_t + U_t^\tp R U_t,
\label{Model:cost_2C}
\end{align}
where $Q$ is a symmetric positive semi-definite (PSD) matrix, and $R = \begin{bmatrix}
R^{00} & R^{01}\\R^{10} & R^{11}
\end{bmatrix}$ is a symmetric positive definite (PD) matrix.

\subsection{Problem Formulation}
\label{subs:prob_formulaiton_2C}

Let $\mathcal{G}$ denote the set of all possible control strategies of $C^0$ and $C^1$. The performance of control strategies $g$ over a
finite horizon $T$ is measured by the total expected cost\footnote{Because the cost function $c(X_t,U_t)$ is always non-negative, the expectation is well-defined on the the extended real line $\R \cup \{\infty\}$.}:
\begin{align}
J_T(g):=\ee^{g}\left[\sum_{t=0}^T c(X_t,U_t)\right].
\label{Model:J_T_2C}
\end{align}

We refer to the system described by \eqref{Model:system_2C}-\eqref{Model:cost_2C} as the \emph{decentralized networked control system} (DNCS).
We consider the problem of strategy optimization for the DNCS over finite and infinite time horizons. These two problems are formally defined below.
\begin{problem}[Finite Horizon DNCS Optimal Control]
\label{problem_finite_2C}
For the DNCS described by \eqref{Model:system_2C}-\eqref{Model:cost_2C}, determine decentralized control strategies $g$ that optimize the  total expected cost over a finite horizon of duration $T$. In other words, solve the following strategy optimization problem:
\begin{align}
\inf_{g \in\mathcal{G}} 
J_T(g).
\label{Model:obj_finite_2C}
\end{align}
\end{problem}

\begin{problem}[Infinite Horizon DNCS Optimal Control]
\label{problem_infinite_2C}
For the DNCS described by \eqref{Model:system_2C}-\eqref{Model:cost_2C}, find decentralized strategies $g$ that minimize the  infinite horizon average cost.  In other words, solve the following strategy optimization problem:
\begin{align}
&\inf_{g \in\mathcal{G}} 
J_{\infty}(g)
:= \inf_{g \in\mathcal{G}} 
\limsup_{T\rightarrow\infty} \frac{1}{T+1} J_{T}(g).
\label{Model:obj_infinite_2C}
\end{align}

\end{problem}

We make the following  standard assumption on the system and cost matrices \cite{Bertsekas:1995}. 
\begin{assumption}
\label{assum:det_stb_2C}
$(A,Q^{1/2})$ is detectable and $(A,B)$ is stabilizable.
\end{assumption}

The finite horizon DNCS optimal control problem (Problem \ref{problem_finite_2C}) has been solved in \cite{ouyang2016optimal, asghari_ouyang_nayyar_tac_2018}.  We summarize the finite horizon results below. 

\begin{lemma}(\cite[Theorem 2]{ouyang2016optimal})
\label{lm:opt_strategies_2C}
The optimal control strategies of Problem \ref{problem_finite_2C} are given by
\begin{align}
\bmat{U^{0*}_t \\ U^{1*}_t \\} = K_t^0\hat X_t + \bmat{\mathbf{0} \\ K_t^1 }\left(X_t - \hat X_t\right), \label{eq:opt_U_2C}
\end{align}
where $\hat X_t = \ee[X_t|H^0_t]$ is the  estimate (conditional expectation) of $X_t$ based on the common information $H^0_t$.  The  estimate can be computed recursively according to
\begin{align}
\hat X_0 = & 0,
\label{eq:estimator_0_2C}
\\
\hat X_{t+1}
= &\left\{
\begin{array}{ll}
 \big(A + B K^0_t\big)\hat X_t& \text{ if }Z_{t+1}= \emptyset,\\
 X_{t+1} & \text{ if }Z_{t+1} = X_{t+1}.
\end{array}\right.
\label{eq:estimator_t_2C}
\end{align}
The gain matrices are given by
\begin{align}
& K_t^0 = \Psi(P_{t+1}^0,R,A,B),
\label{eq:K_finite_2C}
\\
& K_t^1 = \Psi((1-p^1)P_{t+1}^0+p^1 P_{t+1}^1,R^{11},A,B^{11}),
\label{eq:tildeK_finite_2C}
\end{align}
where $P_t^0$ and $P_t^1$ are PSD matrices obtained recursively as follows:
\begin{align}
&P_{T+1}^0 =  P_{T+1}^1 = \mathbf{0},  \label{eq:P_initial}\\
&P_t^0 =  \Omega(P_{t+1}^0,Q,R,A,B),
\label{eq:P_finite_2C}
\\
&P_t^1 =  \Omega((1-p^1)P_{t+1}^0+p^1 P_{t+1}^1,Q,R^{11},A,B^{11}).
\label{eq:tildeP_finite_2C}
\end{align}
Furthermore, the optimal cost is given by
\begin{align}
J^*_{T} = & \sum_{t = 0}^T \tr \big((1-p^1)P_{t+1}^0+p^1 P_{t+1}^1 \big).
\label{eq:opcost_finite_2C}
\end{align}
\end{lemma}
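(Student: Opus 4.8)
The plan is to use the common-information approach of \cite{nayyar2013decentralized}, converting the decentralized problem into an equivalent centralized (coordinator) problem solvable by dynamic programming. First I would split the plant state into its common-information-measurable part and a private residual: set $\hat X_t = \ee[X_t\mid H^0_t]$ and $\tilde X_t = X_t - \hat X_t$. Because $X_0=0$ and both controllers learn the link state $\Gamma^1_t$ through the acknowledgment, $\hat X_t$ is a deterministic function of $H^0_t$, and the conditional law of $\tilde X_t$ given $H^0_t$ is zero-mean Gaussian with covariance $\Sigma_t := \cov(\tilde X_t\mid H^0_t)$ that is itself determined by the link history, hence known to the fictitious coordinator. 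The coordinator, observing only $H^0_t$, selects $U^0_t$ directly together with a prescription $\gamma^1_t$ mapping the local private information to $U^1_t$; the pair $(\hat X_t,\Sigma_t)$ then serves as the information state for the resulting Markov decision problem.

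The core is a guess-and-verify dynamic program: I would posit the value function $V_t(\hat X_t,\Sigma_t) = \hat X_t^\tp P^0_t \hat X_t + \tr(P^1_t\Sigma_t) + c_t$. Substituting into the Bellman equation and using $X_t = \hat X_t + \tilde X_t$ with $\ee[\tilde X_t\mid H^0_t]=0$, the instantaneous cost \eqref{Model:cost_2C} and the next-stage value separate into a purely quadratic term in $\hat X_t$ and a trace term in $\Sigma_t$, with all cross terms vanishing in conditional expectation. The $\hat X_t$-quadratic piece is a standard centralized LQR problem in the full control $U_t$ with cost matrices $Q,R$ and dynamics $(A,B)$; minimizing over the gain acting on $\hat X_t$ reproduces the operators \eqref{Omega} and \eqref{Psi}, yielding $P^0_t$ and $K^0_t$ as in \eqref{eq:P_finite_2C} and \eqref{eq:K_finite_2C}. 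Minimizing the $\Sigma_t$-trace piece over the (linear) prescription produces a local LQR problem driven only by $B^{11}$ and penalized only by $R^{11}$, giving $P^1_t$ and $K^1_t$ as in \eqref{eq:tildeP_finite_2C} and \eqref{eq:tildeK_finite_2C}.

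The step I expect to be delicate is the emergence of the convex combination $\bar P_{t+1}:=(1-p^1)P^0_{t+1}+p^1P^1_{t+1}$ inside the local recursions. Here I would compute $\hat X_{t+1}$ and $\Sigma_{t+1}$ explicitly under the two outcomes of $\Gamma^1_{t+1}$: on a successful transmission $\hat X_{t+1}=X_{t+1}$ and $\Sigma_{t+1}=\mathbf{0}$, so the current noise $W_t$ and the propagated error are absorbed into the common estimate and priced by $P^0_{t+1}$; on a failure $\hat X_{t+1}=(A+BK^0_t)\hat X_t$ as in \eqref{eq:estimator_t_2C} and $\Sigma_{t+1}=(A+B^{11}K^1_t)\Sigma_t(A+B^{11}K^1_t)^\tp + \mathbf I$, priced by $P^1_{t+1}$. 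Averaging over $\Gamma^1_{t+1}$ with weights $1-p^1$ and $p^1$ collapses both contributions onto the single weighted matrix $\bar P_{t+1}$, which is exactly the matrix appearing inside $\Omega$ and $\Psi$ for the local quantities. The remaining obstacle is the rigorous justification that a linear prescription is optimal among all measurable ones; this follows because, conditioned on $H^0_t$, the relevant cost is a quadratic functional of $\gamma^1_t(\tilde X_t)$ evaluated on a Gaussian $\tilde X_t$, whose pointwise minimizer is linear in $\tilde X_t$. Finally, collecting the constant (noise) terms across the backward recursion gives $c_t = \sum_{s=t}^{T}\tr(\bar P_{s+1})$; evaluating at $t=0$ with $\hat X_0=\mathbf 0$ and $\Sigma_0=\mathbf 0$ yields the optimal cost \eqref{eq:opcost_finite_2C}, while the minimizers assembled along the way give the control law \eqref{eq:opt_U_2C} and the estimator \eqref{eq:estimator_0_2C}, \eqref{eq:estimator_t_2C}.
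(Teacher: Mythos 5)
Your proposal is correct and is essentially the approach behind this lemma: the paper itself gives no proof, deferring to \cite[Theorem 2]{ouyang2016optimal} (see also \cite{asghari_ouyang_nayyar_tac_2018}), and those proofs proceed exactly as you describe --- the common-information/coordinator reformulation of \cite{nayyar2013decentralized}, a quadratic value function of the form $\hat X_t^\tp P_t^0 \hat X_t + \tr\big(P_t^1 \Sigma_t\big) + c_t$, averaging over the two outcomes of $\Gamma^1_{t+1}$ to produce the mixture $(1-p^1)P^0_{t+1}+p^1 P^1_{t+1}$ inside the local Riccati step, and pointwise quadratic minimization to establish optimality of linear prescriptions. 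The one blemish is your appeal to Gaussianity of $\tilde X_t$ given $H^0_t$: the model assumes only i.i.d.\ noise with identity covariance (and the $N$-controller section explicitly disclaims Gaussianity), but this is harmless since your own pointwise-minimization argument needs only that the conditional law is zero mean with $H^0_t$-measurable covariance, not that it is Gaussian.
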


\begin{remark}
Note that remote controller's action $U^{0*}_t$ in \eqref{eq:opt_U_2C} is a function  of $\hat{X}_t$ only while the local controller's action $U^{1*}_t$ is a function of both $\hat{X}_t$ and $X_t$. Further, as per \eqref{eq:estimator_0_2C} and \eqref{eq:estimator_t_2C}, $\hat{X}_t$ is computed recursively based only on the knowledge of $Z_{0:t}$.
\end{remark}

In this paper, we will focus on solving the infinite horizon problem (Problem \ref{problem_infinite_2C}). Our solution will employ results from Markov Jump Linear Systems (MJLS). We provide a review of the relevant results from the theory of Markov jump linear systems before describing our solution to Problem \ref{problem_infinite_2C}.

\section{Review of Markov Jump Linear Systems}
\label{sec:mjls}
A discrete-time Markov Jump Linear System (MJLS) is described by the dynamics
\begin{align}
&X_{t+1}^{\diamond} = A^{\diamond}(M_t) X_t^{\diamond} + B^{\diamond}(M_t) U_t^{\diamond},
\label{eq:MJ_X}
\end{align}
where $X^{\diamond}_t  \in \R^{d_X}$ represents the state, $M_t \in \mathcal{M} = \{0,1,\ldots,M\}$ the mode, and $U^{\diamond}_t$ the control action at time $t$. $A^{\diamond}(M_t), B^{\diamond}(M_t)$ are mode-dependent matrices.
The mode $M_t$ evolves as a Markov chain described by the transition probability matrix $\Theta = [\theta^{ij}]_{i,j \in \mathcal{M}}$  such that
\begin{align}
\prob(M_{t+1} = j | M_t = i) = \theta^{ij}.
\label{eq:MJ_M}
\end{align}
The initial state $X_0^{\diamond}$ and mode $M_0$ are independent and they have  probability distributions $\pi_{X_0^{\diamond}}$ and $\pi_{M_0}$, respectively.

The information available at time $t$ to the controller of MJLS is  $H^{\diamond}_t = \{X^{\diamond}_{0:t},M_{0:t}, U^{\diamond}_{0:t-1}\}$.  The instantaneous cost incurred at time $t$ is given by
\begin{align}
&c^{\diamond}(X_t^{\diamond},U_t^{\diamond},M_t)  =
(X_t^{\diamond})^\tp Q^{\diamond}(M_t) X_t^{\diamond} + (U_t^{\diamond})^\tp R^{\diamond}(M_t) U_t^{\diamond},
\label{eq:MJ_c}
\end{align}
where $Q^{\diamond}(M_t), R^{\diamond}(M_t)$ are mode-dependent matrices.
An admissible control strategy is a sequence of measurable mappings $g^{\diamond} = (g^{\diamond}_0,g^{\diamond}_1,\dots)$ such that
$U^{\diamond}_t = g^{\diamond}_t(H^{\diamond}_t)$
and each $U^{\diamond}_t$ has finite second moment. Let $\mathcal G^{\diamond}$ be the set of all admissible control strategies for the MJLS.

The MJLS has been extensively studied in the literature (see \cite{costa2006discrete} and references therein). In the following, we state the finite horizon  optimal control problem for the MJLS and provide the optimal control strategy for this problem. 

\begin{pb_mjls*}
\label{problem_MJ_finite}
For the MJLS described by \eqref{eq:MJ_X}-\eqref{eq:MJ_c}, solve the following finite horizon strategy optimization problem
\begin{align}
\inf_{g^{\diamond} \in \mathcal G^{\diamond}}\sum_{t=0}^T \ee^{g^{\diamond}}\Big[ c^{\diamond}(X_t^{\diamond},U_t^{\diamond},M_t) \Big].
\end{align}

\end{pb_mjls*}

The solution of the finite horizon MJLS optimal control problem is given in the following lemma.
\begin{lemma}(\cite[Theorem 4.2]{costa2006discrete})
\label{opt_mjls_finite}
The optimal controller for finite horizon MJLS optimal control problem is given by
\begin{align}
U_t^{\diamond}  = K^{\diamond}_t(M_t) X_t^{\diamond},
\label{eq:optcontrol_MJ_finite}
\end{align}
where $K^{\diamond}_t(M_t)$ is a mode-dependent gain matrix. 
The gain matrices $K^{\diamond}_t(m)$, $m \in \mathcal{M}$, are given by
\begin{align}
& K^{\diamond}_t(m) = \Psi \big( \sum_{k=0}^M \theta^{mk} P^{\diamond}_{t+1}(k), R^{\diamond}(m),A^{\diamond}(m),B^{\diamond}(m)\big),
\label{eq:K_MJ_finite}
\end{align}
where the matrices  $P^{\diamond}_t(m)$,  $m \in \mathcal{M}$, are recursively computed as follows
\begin{align}
&P^{\diamond}_{T+1}(m) = \mathbf{0}, \label{eq:CARE_init} \\
&P^{\diamond}_t(m) =  \notag \\
&\Omega\big( \sum_{k=0}^M \theta^{mk} P^{\diamond}_{t+1}(k), Q^{\diamond}(m), R^{\diamond}(m),A^{\diamond}(m),B^{\diamond}(m)\big).
\label{eq:CARE_finite}
\end{align}
Further, the optimal cost is
\begin{align}
\ee [(X^{\diamond}_0)^\tp P^{\diamond}_{0}(M_0) X^{\diamond}_0].
\label{eq:optcost_MJ_finite}
\end{align}
\end{lemma}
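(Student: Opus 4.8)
The plan is to prove the lemma by backward dynamic programming on the controlled Markov chain $(X_t^{\diamond}, M_t)$. First I would observe that, conditioned on $(X_t^{\diamond}, M_t, U_t^{\diamond})$, the pair $(X_{t+1}^{\diamond}, M_{t+1})$ is independent of the remaining history in $H_t^{\diamond}$: by \eqref{eq:MJ_X} the next state is the \emph{deterministic} affine map $A^{\diamond}(M_t) X_t^{\diamond} + B^{\diamond}(M_t) U_t^{\diamond}$, and by \eqref{eq:MJ_M} the next mode depends only on $M_t$. Hence $(X_t^{\diamond}, M_t)$ is a sufficient statistic, and standard finite-horizon Markov decision theory guarantees that it suffices to search over Markov policies $U_t^{\diamond} = g_t^{\diamond}(X_t^{\diamond}, M_t)$, with the optimal cost-to-go $V_t(x,m) := \inf \ee[\sum_{s=t}^T c^{\diamond}(X_s^{\diamond}, U_s^{\diamond}, M_s) \mid X_t^{\diamond} = x, M_t = m]$ satisfying the Bellman recursion.

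Next I would prove by backward induction on $t$ the invariant $V_t(x,m) = x^{\tp} P_t^{\diamond}(m)\, x$ with $P_t^{\diamond}(m)$ PSD and given by \eqref{eq:CARE_init}--\eqref{eq:CARE_finite}. The base case at $t = T+1$ is $V_{T+1} \equiv 0 = x^{\tp} P_{T+1}^{\diamond}(m)\, x$ by \eqref{eq:CARE_init}. For the inductive step I write
\begin{align}
V_t(x,m) = \inf_{u} \Big[ & x^{\tp} Q^{\diamond}(m) x + u^{\tp} R^{\diamond}(m) u \notag\\
& + \ee\big[ V_{t+1}(X_{t+1}^{\diamond}, M_{t+1}) \,\big|\, X_t^{\diamond} = x,\, M_t = m,\, U_t^{\diamond} = u \big] \Big]. \notag
\end{align}
Using the induction hypothesis together with the dynamics, the conditional expectation equals $\big(A^{\diamond}(m) x + B^{\diamond}(m) u\big)^{\tp} \bar P \big(A^{\diamond}(m) x + B^{\diamond}(m) u\big)$, where $\bar P := \sum_{k=0}^M \theta^{mk} P_{t+1}^{\diamond}(k)$ is PSD as a convex combination of PSD matrices.

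Then I would carry out the (routine) quadratic minimization over $u$. Since $R^{\diamond}(m) \succ 0$ and $\bar P \succeq 0$, the Hessian $R^{\diamond}(m) + (B^{\diamond}(m))^{\tp} \bar P\, B^{\diamond}(m)$ is PD, so the strictly convex objective has the unique minimizer $u^{*} = \Psi\big(\bar P, R^{\diamond}(m), A^{\diamond}(m), B^{\diamond}(m)\big)\, x = K_t^{\diamond}(m)\, x$, matching \eqref{eq:K_MJ_finite} and \eqref{eq:optcontrol_MJ_finite}. Substituting $u^{*}$ (completing the square) collapses the bracket to $x^{\tp}\, \Omega\big(\bar P, Q^{\diamond}(m), R^{\diamond}(m), A^{\diamond}(m), B^{\diamond}(m)\big)\, x = x^{\tp} P_t^{\diamond}(m)\, x$, which is exactly \eqref{eq:CARE_finite}; positive semidefiniteness of $P_t^{\diamond}(m)$ is automatic because $V_t(x,m) \ge 0$ for every $x$ (as $c^{\diamond} \ge 0$). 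This closes the induction and yields the optimal cost $\ee[V_0(X_0^{\diamond}, M_0)] = \ee[(X_0^{\diamond})^{\tp} P_0^{\diamond}(M_0) X_0^{\diamond}]$ after averaging over $\pi_{X_0^{\diamond}}$ and $\pi_{M_0}$, as in \eqref{eq:optcost_MJ_finite}.

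The main obstacle I anticipate is not the algebra but the measure-theoretic bookkeeping that legitimizes the dynamic program over the admissible class $\mathcal{G}^{\diamond}$. One must confirm that the candidate linear strategy is \emph{admissible}, i.e.\ each $U_t^{\diamond}$ has finite second moment; this follows from a forward induction showing $\ee\|X_t^{\diamond}\|^2 < \infty$ under the linear policy. One must also verify that restricting to Markov policies loses no optimality: for an \emph{arbitrary} admissible history-dependent strategy, the Bellman inequality $V_t(x,m) \le c^{\diamond}(x,u,m) + \ee[V_{t+1}(X_{t+1}^{\diamond}, M_{t+1}) \mid x,m,u]$ implies that $\ee[V_t(X_t^{\diamond}, M_t)] + \sum_{s<t} \ee[c^{\diamond}(X_s^{\diamond}, U_s^{\diamond}, M_s)]$ is nondecreasing in $t$, so its value at $t=0$ (namely $\ee[V_0]$) lower-bounds its value at $t = T+1$ (namely the total expected cost), with equality attained by the Markov policy $u^{*}$ above. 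This verification step is what upgrades the backward induction into a genuine optimality proof.
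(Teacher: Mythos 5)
Your proposal is correct, and it is essentially the argument behind the result as the paper uses it: Lemma \ref{opt_mjls_finite} is stated without proof as a citation of \cite[Theorem 4.2]{costa2006discrete}, whose standard proof is exactly your backward dynamic program with the quadratic value-function invariant $V_t(x,m)=x^{\tp}P_t^{\diamond}(m)x$, completion of squares against the averaged matrix $\sum_k \theta^{mk}P_{t+1}^{\diamond}(k)$, and a verification (Bellman-inequality) step over all admissible history-dependent strategies. Your added care about admissibility (finite second moments under the linear policy) and the noise-free, mode-only randomness of the transition kernel is consistent with the model in \eqref{eq:MJ_X}--\eqref{eq:MJ_c}, so nothing is missing.
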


One interesting property of the recursions in \eqref{eq:CARE_finite} is that under some certain stability conditions, matrices $P^{\diamond}_{t}(m)$, $m \in \mathcal{M}$, converge as $t \to -\infty$ to steady-state solutions $P^{\diamond}_{*}(m)$ satisfying \textit{discrete-time coupled algebraic Riccati equations (DCARE)}
\begin{align}
&P^{\diamond}_*(m) =  \notag \\
&\Omega\big( \sum_{k=0}^M \theta^{mk} P^{\diamond}_{*}(k), Q^{\diamond}(m), R^{\diamond}(m),A^{\diamond}(m),B^{\diamond}(m)\big).
\label{eq:CARE_infinite}
\end{align}
Before providing the results on the convergence of sequences of matrices $\{P^{\diamond}_{t}(m), t = T+1,T, T-1,\ldots \}$, $m \in \mathcal{M}$,   we introduce the concepts of stochastic stabilizability and stochastic detectability of the MJLS \cite{costa2006discrete}.
\begin{definition}
\label{def:ss}
The MJLS of \eqref{eq:MJ_X}-\eqref{eq:MJ_M} is stochastically stabilizable if there exist gain matrices $K^{\diamond}(m), m\in\mathcal M,$ such that for any initial state and mode, $\sum_{t=0}^\infty \ee[||X^{\diamond}_t||^2]< \infty$ where $X^{\diamond}_{t+1} = A_s(M_t) X^{\diamond}_t$ and  
\begin{align}
A_s (M_t) = A^{\diamond} (M_t) +  B^{\diamond}(M_t) K^{\diamond}(M_t).
\label{A_s_matrix}
\end{align}
In this case, we say the gain matrices $K^{\diamond}(m), m \in \mathcal M$, stabilize the MJLS.
\end{definition}

\begin{definition}
\label{def:sd}
The MJLS of \eqref{eq:MJ_X}-\eqref{eq:MJ_c} is Stochastically Detectable (SD) if there exist gain matrices $H^{\diamond}(m), m \in \mathcal{M}$, such that for any initial state and mode, $\sum_{t=0}^\infty \ee[||X^{\diamond}_t||^2]< \infty$ where 
$X^{\diamond}_{t+1} = A_d(M_t) X^{\diamond}_t$ and 
\begin{align}
A_d(M_t) = A^{\diamond} (M_t) +  H^{\diamond}(M_t) \big(Q^{\diamond}(M_t) \big)^{1/2}.
\label{A_d_matrix}
\end{align}
\end{definition}

From the theory of MJLS (\cite{costa2006discrete,costa1995discrete}), we can obtain the following result for 
the convergence of matrices $\{P^{\diamond}_{t}(m), t = T+1,T,T-1,\ldots \}$ to $P^{\diamond}_{*}(m)$ satisfying the DCARE in \eqref{eq:CARE_infinite}.

\begin{lemma}
\label{lm:MJ_infinite}
Suppose the MJLS is stochastically detectable (SD).  Then, matrices $P^{\diamond}_{t}(m)$, $m \in \mathcal{M}$, converge as $t \to -\infty$ to PSD matrices $P^{\diamond}_{*}(m)$ that satisfy the DCARE in \eqref{eq:CARE_infinite} if and only if the MJLS is stochastically stabilizable (SS).
%
%
\end{lemma}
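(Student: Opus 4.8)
The plan is to reindex the backward recursion \eqref{eq:CARE_finite} by the number of stages‑to‑go: setting $n=T+1-t$, write $P^{(0)}(m)=\mathbf 0$ and $P^{(n+1)}(m)=\Omega\big(\sum_k\theta^{mk}P^{(n)}(k),Q^{\diamond}(m),R^{\diamond}(m),A^{\diamond}(m),B^{\diamond}(m)\big)$, so that convergence as $t\to-\infty$ is exactly convergence of $\{P^{(n)}(m)\}$ as $n\to\infty$. By Lemma \ref{opt_mjls_finite}, $x^\tp P^{(n)}(m)x$ is the optimal $n$‑stage cost of the MJLS started from state $x$ in mode $m$. Two elementary facts about $\Omega$ will be used repeatedly: it is \emph{monotone} in its first (PSD) argument, since $\Omega(P,\ldots)=Q^{\diamond}+\min_{K}\big[(A^{\diamond}+B^{\diamond}K)^\tp P(A^{\diamond}+B^{\diamond}K)+K^\tp R^{\diamond}K\big]$ is a pointwise minimum of affine‑monotone maps of $P$; and it is \emph{continuous} on the PSD cone, where $R^{\diamond}(m)+B^{\diamond}(m)^\tp PB^{\diamond}(m)\succeq R^{\diamond}(m)\succ\mathbf 0$ remains invertible.

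For the ``if'' direction (SS $\Rightarrow$ convergence), I would first establish monotonicity of the sequence. Since $P^{(1)}(m)=\Omega(\mathbf 0,\ldots)=Q^{\diamond}(m)\succeq\mathbf 0=P^{(0)}(m)$, monotonicity of $\Omega$ gives $P^{(n+1)}(m)\succeq P^{(n)}(m)$ for all $n$ by induction, so $\{P^{(n)}(m)\}$ is nondecreasing in the PSD order. Next I would produce a uniform upper bound: by Definition \ref{def:ss}, SS supplies a constant gain $\bar K(m)$ rendering the autonomous closed loop $X^{\diamond}_{t+1}=A_s(M_t)X^{\diamond}_t$ summable, $\sum_t\ee[\|X^{\diamond}_t\|^2]<\infty$. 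The induced time‑invariant policy is admissible, and because the mode set is finite and $Q^{\diamond},R^{\diamond},\bar K$ are bounded, its infinite‑horizon cost is a finite quadratic form $x^\tp\bar P(m)x<\infty$. Optimality of $P^{(n)}$ then forces $P^{(n)}(m)\preceq\bar P(m)$ for every $n$. A nondecreasing, bounded‑above sequence of PSD matrices converges, and continuity of $\Omega$ shows the limit $P^{\diamond}_*(m)$ solves the DCARE \eqref{eq:CARE_infinite}.

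For the ``only if'' direction (convergence $\Rightarrow$ SS), suppose $P^{(n)}(m)\to P^{\diamond}_*(m)\succeq\mathbf 0$ solving \eqref{eq:CARE_infinite}. Define $K_*(m)=\Psi\big(\sum_k\theta^{mk}P^{\diamond}_*(k),R^{\diamond}(m),A^{\diamond}(m),B^{\diamond}(m)\big)$ and $A_s(m)=A^{\diamond}(m)+B^{\diamond}(m)K_*(m)$ as in \eqref{A_s_matrix}. A completion‑of‑squares identity for the pair $(\Omega,\Psi)$ rewrites the DCARE as the coupled Lyapunov equation
\begin{align}
P^{\diamond}_*(m)=A_s(m)^\tp\Big(\sum_k\theta^{mk}P^{\diamond}_*(k)\Big)A_s(m)+\tilde Q(m),\notag
\end{align}
with $\tilde Q(m):=Q^{\diamond}(m)+K_*(m)^\tp R^{\diamond}(m)K_*(m)\succeq\mathbf 0$. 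Iterating this identity along the closed‑loop trajectory and using $P^{\diamond}_*\succeq\mathbf 0$ shows $\sum_{t}\ee[(X^{\diamond}_t)^\tp\tilde Q(M_t)X^{\diamond}_t]\le x^\tp P^{\diamond}_*(m)x<\infty$. It remains to upgrade this $\tilde Q$‑weighted energy bound to $\sum_t\ee[\|X^{\diamond}_t\|^2]<\infty$, which certifies that $K_*$ stabilizes the MJLS, i.e. that SS holds.

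This last step is the main obstacle. It requires the MJLS analogue of the classical fact that a detectable pair with summable output energy has summable state energy. Concretely, since $\tilde Q(m)=\bar C(m)^\tp\bar C(m)$ with $\bar C(m)=\bmat{(Q^{\diamond}(m))^{1/2} \\ (R^{\diamond}(m))^{1/2}K_*(m)}$, the kernel of $\tilde Q(m)^{1/2}$ coincides with that of $\bar C(m)$, and the feedback‑invariance of detectability transfers the SD of $(A^{\diamond},(Q^{\diamond})^{1/2})$ from Definition \ref{def:sd} to SD of the closed‑loop pair $(A_s,\tilde Q^{1/2})$. Invoking the detectability‑based stability criterion for MJLS then converts the finite $\tilde Q$‑energy into finite state energy, completing the converse. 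I expect the transfer of stochastic detectability through the feedback $K_*$, together with the precise statement of the MJLS Lyapunov/detectability criterion (both drawn from \cite{costa2006discrete,costa1995discrete}), to be the technically delicate points.
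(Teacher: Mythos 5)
Your proposal is correct, but it takes a genuinely different route from the paper's own proof, which is almost entirely a citation argument: for ``convergence $\Rightarrow$ SS'' the paper notes the limit satisfies the DCARE by continuity and invokes Proposition 7 of \cite{costa1995discrete} (under SD, any PSD solution of \eqref{eq:CARE_infinite} is a \emph{stabilizing} solution, whence SS); for ``SS $\Rightarrow$ convergence'' it chains Corollary A.16 of \cite{costa2006discrete} (SS $+$ SD gives existence of a stabilizing solution), Theorem A.15 (that solution is the unique PSD solution), and Proposition A.23 (the recursion converges to it). You instead argue directly. For the forward direction you use the dynamic-programming meaning of $P^{(n)}(m)$: monotonicity of $\Omega$ (which the paper itself establishes as Lemma \ref{lm:KPrelation_0}(ii)) makes the sequence nondecreasing from $\mathbf{0}$, the stabilizing constant gain supplies a finite quadratic upper bound $\bar P(m)$, and monotone-bounded convergence plus continuity of $\Omega$ (valid since $R^\diamond(m)+B^\diamond(m)^\tp P B^\diamond(m)\succ \mathbf 0$) yields a PSD solution of \eqref{eq:CARE_infinite} --- notably \emph{without} using SD in this direction at all, where the paper's route does use it (to get uniqueness and convergence). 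For the converse you rewrite the DCARE as a coupled Lyapunov identity via the completion of squares (this is exactly \eqref{eq:relation1} of Lemma \ref{lm:KPrelation_0}(i) evaluated at $K_*$), telescope along the closed loop to get finite $\tilde Q$-weighted energy, and upgrade to finite state energy by transferring SD through the feedback: your output-injection algebra works --- injecting $-B^\diamond(m)(R^\diamond(m))^{-1/2}$ on the $(R^\diamond(m))^{1/2}K_*(m)$ block of $\bar C(m)$ cancels the feedback and reduces the injected closed loop to $A_d(m)$ of \eqref{A_d_matrix} --- and the residual step (SD pair with mean-square summable output implies mean-square summable state, via writing $X^\diamond_{t+1}$ as a stochastically stable system driven by an $\ell^2$ input) is precisely the content of the Proposition 7 of \cite{costa1995discrete} that the paper cites wholesale, so citing it there is legitimate. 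What each approach buys: the paper's version is short and inherits uniqueness of the PSD solution from the stabilizing-solution theory; yours is more elementary and self-contained, identifies the limit concretely as the minimal PSD solution, and cleanly isolates where SD is actually needed (only in the converse), at the cost of having to prove or cite the MJLS detectability-to-stability lemma, which you correctly flag as the one technically delicate point.
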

\begin{proof}
See Appendix \ref{app:lm_MJ_infinite}.
\end{proof}

%

Stochastically stabilizability (SS) and stochastically detectability (SD) of a MJLS can be verified from the system matrices and the transition matrix for the mode of the MJLS \cite{costa2006discrete,fang2002stochastic}. Specifically, we have the following lemmas.

\begin{lemma}(\cite[Theorem 3.9]{costa2006discrete} and also \cite[Corollary 2.6]{fang2002stochastic})
\label{lm:ss}
\begin{enumerate}
\item A MJLS is SS if and only if there exist matrices $K^{\diamond}(m)$, $m \in \mathcal M$, such the matrix
\begin{align}
&\mathcal{A}_s := \notag \\
&\diag \big(A_s(0)\otimes A_s(0), \ldots, A_s(M)\otimes A_s(M) \big)
(\Theta^{\tp}\otimes \mathbf{I}),
\label{eq:bigmatrix_ss}
\end{align}
is Schur stable, i.e. $\rho(\mathcal{A}_s) < 1$, where $A_s(M_t)$ is given by \eqref{A_s_matrix}.
\item A MJLS is SD if and only if there exist matrices $H^{\diamond}(m)$, $m \in \mathcal M$, such the matrix
\begin{align}
&\mathcal{A}_d := \notag \\
&\diag \big(A_d(0)\otimes A_d(0), \ldots, A_d(M)\otimes A_d(M) \big)
(\Theta^{\tp}\otimes \mathbf{I}),
\label{eq:bigmatrix_sd}
\end{align}
is Schur stable, i.e. $\rho(\mathcal{A}_d) < 1$, where $A_d(M_t)$ is given by \eqref{A_d_matrix}.
\end{enumerate}
\end{lemma}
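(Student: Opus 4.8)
The plan is to reduce both parts of the lemma to a single statement about mean-square stability of an \emph{autonomous} MJLS, and then characterize that stability through the spectral radius of a Kronecker lift. Since Definition \ref{def:ss} (resp. Definition \ref{def:sd}) already quantifies existentially over the gains $K^{\diamond}(m)$ (resp. $H^{\diamond}(m)$), and the matrix $\mathcal{A}_s$ in \eqref{eq:bigmatrix_ss} (resp. $\mathcal{A}_d$ in \eqref{eq:bigmatrix_sd}) is built from the closed-loop matrices $A_s(m)$ (resp. $A_d(m)$), it suffices to prove the following core fact and specialize: for a fixed family $\{A(m)\}_{m\in\mathcal{M}}$, the system $X^{\diamond}_{t+1}=A(M_t)X^{\diamond}_t$ satisfies $\sum_{t=0}^\infty \ee[\|X^{\diamond}_t\|^2]<\infty$ for every initial state and mode if and only if $\rho(\mathcal{A})<1$, where $\mathcal{A}=\diag(A(0)\otimes A(0),\ldots,A(M)\otimes A(M))(\Theta^{\tp}\otimes\mathbf{I})$. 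Part 1 then follows with $A(m)=A_s(m)$ and Part 2 with $A(m)=A_d(m)$.

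First I would set up the second-moment recursion. Define the mode-weighted moments $q_t(i):=\ee[X^{\diamond}_t (X^{\diamond}_t)^{\tp}\ind\{M_t=i\}]\succeq\mathbf{0}$. Using $X^{\diamond}_{t+1}=A(M_t)X^{\diamond}_t$, the tower property, and the Markov property $\ee[\ind\{M_{t+1}=j\}\mid X^{\diamond}_t,M_t]=\theta^{M_t,j}$, one gets the linear recursion $q_{t+1}(j)=\sum_i\theta^{ij}A(i)q_t(i)A(i)^{\tp}$. Vectorizing via $\vecc(A q A^{\tp})=(A\otimes A)\vecc(q)$ and stacking $\hat q_t:=\vecc(q_t(0),\ldots,q_t(M))$ yields $\hat q_{t+1}=\mathcal{M}\hat q_t$ with $\mathcal{M}=(\Theta^{\tp}\otimes\mathbf{I})\diag(A(i)\otimes A(i))$. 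Writing $D=\diag(A(i)\otimes A(i))$ and $E=\Theta^{\tp}\otimes\mathbf{I}$, we have $\mathcal{A}=DE$ and $\mathcal{M}=ED$, so the two share a characteristic polynomial and $\rho(\mathcal{A})=\rho(\mathcal{M})$. Finally, with $c:=\vecc(\mathbf{I},\ldots,\mathbf{I})$ and $\tr(q)=\vecc(\mathbf{I})^{\tp}\vecc(q)$, one obtains the clean identity $\ee[\|X^{\diamond}_t\|^2]=\sum_i\tr(q_t(i))=c^{\tp}\mathcal{M}^t\hat q_0$, where $c$ is strictly positive on the cone $\mathcal{K}$ of stacked PSD matrices.

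The easy direction is then immediate: if $\rho(\mathcal{A})=\rho(\mathcal{M})<1$, the Neumann series $\sum_t\mathcal{M}^t=(\mathbf{I}-\mathcal{M})^{-1}$ converges, so $\sum_t\ee[\|X^{\diamond}_t\|^2]=c^{\tp}(\mathbf{I}-\mathcal{M})^{-1}\hat q_0<\infty$ for every initial $\hat q_0\in\mathcal{K}$, giving stability. The converse is the hard part, and the obstacle is that $\hat q_0$ ranges only over the cone $\mathcal{K}$, not the whole space, so one cannot directly infer matrix convergence of $\sum_t\mathcal{M}^t$. I would resolve this with Perron--Frobenius theory for positive maps: the operator $\mathcal{T}:\{q(i)\}\mapsto\{\sum_i\theta^{ij}A(i)q(i)A(i)^{\tp}\}$ is completely positive and leaves the proper cone $\mathcal{K}$ invariant, so its spectral radius $\rho(\mathcal{M})=\rho(\mathcal{A})$ is an eigenvalue attained by an eigenvector $\hat q^{*}\in\mathcal{K}\setminus\{0\}$. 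Assuming $\rho(\mathcal{M})\geq 1$ for contradiction, I would decompose each PSD component as $q^{*}(i)=\sum_k v_{i,k}v_{i,k}^{\tp}$; letting $\ee^{(i,v)}$ denote expectation under the deterministic initialization $(M_0,X_0)=(i,v)$, linearity gives $\sum_{i,k}\sum_t\ee^{(i,v_{i,k})}[\|X^{\diamond}_t\|^2]=\sum_t c^{\tp}\mathcal{M}^t\hat q^{*}=(c^{\tp}\hat q^{*})\sum_t\rho(\mathcal{M})^t$, which diverges because $c^{\tp}\hat q^{*}>0$ and $\rho(\mathcal{M})\geq1$. This forces at least one $\sum_t\ee^{(i,v_{i,k})}[\|X^{\diamond}_t\|^2]$ to be infinite, contradicting stability; hence $\rho(\mathcal{A})<1$. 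The main difficulty is concentrated precisely here: identifying that the spectral radius of the full Kronecker matrix coincides with the cone spectral radius of the positive map $\mathcal{T}$ (the positive-map Perron--Frobenius step), and realizing the dominant eigenvector as genuine initial second moments. The remaining steps are routine bookkeeping of the Kronecker lift.
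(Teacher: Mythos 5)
The paper does not actually prove this lemma: it is imported verbatim from the MJLS literature, with the proof delegated to \cite[Theorem 3.9]{costa2006discrete} and \cite[Corollary 2.6]{fang2002stochastic}. So there is no in-paper argument to compare against; what you have written is a blind reconstruction of the cited textbook proof, and it is essentially the standard one. Your architecture is sound: the reduction of both parts to one statement about the autonomous system $X^{\diamond}_{t+1}=A(M_t)X^{\diamond}_t$ is legitimate because Definitions \ref{def:ss} and \ref{def:sd} quantify existentially over the gains and differ only in which closed-loop matrices appear; the second-moment recursion $q_{t+1}(j)=\sum_i\theta^{ij}A(i)q_t(i)A(i)^{\tp}$ is derived correctly (for the noiseless closed loop, $X^{\diamond}_t$ is a function of $(X^{\diamond}_0,M_{0:t-1})$, so the Markov property does give $\probc{M_{t+1}=j}{M_t,X^{\diamond}_t}=\theta^{M_t j}$); the vectorization yields $\mathcal{M}=(\Theta^{\tp}\otimes\mathbf{I})\,\diag(A(i)\otimes A(i))$, and your $DE$-versus-$ED$ observation correctly reconciles this with the paper's ordering in \eqref{eq:bigmatrix_ss}; the Neumann-series sufficiency direction and the identity $\ee[\|X^{\diamond}_t\|^2]=c^{\tp}\mathcal{M}^t\hat q_0$ are fine; and realizing the cone eigenvector as a finite sum of rank-one deterministic initializations $(i,v_{i,k})$ is exactly the right way to convert a spectral statement back into trajectories, since stability is only assumed per initialization and only finitely many are used.

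There is one step you invoke rather than prove, and you correctly identify it as the crux: that $\rho(\mathcal{M})$ --- the spectral radius of the full Kronecker lift acting on \emph{all} matrix tuples --- is an eigenvalue attained by an eigenvector in the PSD cone $\mathcal{K}$. Finite-dimensional Krein--Rutman (Vandergraft) applies to the restriction of $\mathcal{T}$ to the real span of $\mathcal{K}$, i.e., to symmetric tuples, and yields a cone eigenvector for $\rho(\mathcal{T}|_{\mathrm{sym}})$, not directly for $\rho(\mathcal{M})$. Since the symmetric and skew-symmetric tuples are both invariant subspaces of $\mathcal{M}$, your contradiction argument would be vacuous if the skew part carried the spectral radius: the cone eigenvalue could be $<1$ while $\rho(\mathcal{M})\geq 1$. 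The gap is real but standard to close. Because the $A(i)$ are real, the complexification of $\mathcal{T}$ acts as $q(i)\mapsto A(i)q(i)A(i)^{*}$ and is completely positive; for any tuple $\hat q$ with $\|\hat q\|\leq 1$, splitting into Hermitian and anti-Hermitian parts sandwiched between $\pm\hat I$ (with $\hat I=(\mathbf{I},\ldots,\mathbf{I})$) and using positivity gives $\|\mathcal{T}^t(\hat q)\|\leq 2\|\mathcal{T}^t(\hat I)\|$, so by Gelfand's formula $\rho(\mathcal{M})=\lim_{t\to\infty}\|\mathcal{T}^t(\hat I)\|^{1/t}$, a quantity computed entirely on the cone; this yields $\rho(\mathcal{M})=\rho(\mathcal{T}|_{\mathrm{sym}})$ and then Krein--Rutman applies as you use it. With that one-paragraph patch (or a citation to the corresponding propositions on the operators $\mathcal{T}$ and $\mathcal{A}$ in \cite[Chapters 2--3]{costa2006discrete}), your proof is complete and is, in substance, the same argument as in the references the paper cites.
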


\section{Infinite Horizon Optimal Control}
\label{sec:infinite_2_controllers}
In centralized LQG control, the solution of the finite horizon problem can be used to solve the infinite horizon average cost problem by ensuring that the finite horizon Riccati recursions reach a steady state  and that the corresponding steady state strategies are optimal \cite{Bertsekas:1995}. We will follow a similar conceptual approach for our problem. Unlike the centralized LQG case, however, we have to ensure that coupled Riccati recursion reach a steady state. Even if such a steady state is reached, we need to show that the corresponding decentralized strategies outperform every other choice of decentralized strategies. Because of these issues, our analysis for the infinite horizon problem, Problem \ref{problem_infinite_2C}, will differ significantly from that of  centralized LQG problem. 

Recall that Lemma \ref{lm:opt_strategies_2C} in Section \ref{sec:model_2_controllers} describes the optimal control strategies for the finite horizon version of Problem \ref{problem_infinite_2C} (that is, Problem \ref{problem_finite_2C}).  Since we know the optimal control strategies for Problem \ref{problem_finite_2C}, solving Problem \ref{problem_infinite_2C} amounts to answering the following three questions: 
\begin{enumerate}[label=(\mylabel{Q}{\arabic*})]
\item Do matrices $P_t^0$ and $P_t^1$ defined in \eqref{eq:P_initial}-\eqref{eq:tildeP_finite_2C} converge as $t \to -\infty$ to $P_*^0$ and $P_*^1$ that satisfy the coupled fixed point  equations \eqref{eq:P_finite_2C_fixed} - \eqref{eq:tildeP_finite_2C_fixed} below? 
\begin{align}
&P_*^0 =  \Omega \big(P_{*}^0,Q,R,A,B\big),
\label{eq:P_finite_2C_fixed}
\\
&P_*^1 =  \Omega \big((1-p^1)P_{*}^0+p^1 P_{*}^1,Q,R^{11},A,B^{11}\big).
\label{eq:tildeP_finite_2C_fixed}
\end{align}
The above equations are steady state versions of \eqref{eq:P_finite_2C} - \eqref{eq:tildeP_finite_2C} obtained by replacing $P_t^0$, $P_{t+1}^0$ (resp. $P_t^1$, $P_{t+1}^1$ ) with  $P^0_*$ (resp. $P^1_*$).
\item If matrices $P_t^0$ and $P_t^1$ converge and we define matrices $K_*^0$ and $K_*^1$ using  $P_*^0$ and $P_*^1$ as follows, 
\begin{align}
& K_*^0 = \Psi(P_{*}^0,R,A,B),
\label{eq:K_finite_2C_fixed}
\\
& K_*^1 = \Psi \big((1-p^1)P_{*}^0+p^1 P_{*}^1,R^{11},A,B^{11}\big),
\label{eq:tildeK_finite_2C_fixed}
\end{align}
are the following strategies optimal for Problem \ref{problem_infinite_2C}?
\begin{align}
\bmat{U^{0*}_t \\ U^{1*}_t \\} = K_*^0\hat X_t + \bmat{\mathbf{0} \\ K_*^1 }\left(X_t - \hat X_t\right),
\label{eq:opt_U_2C_fixed}
\end{align}
where $\hat X_0 =  0$ and 
 \begin{align}
\hat X_{t+1}
= \left\{
\begin{array}{ll}
 \big(A + B K^0_*\big)\hat X_t& \text{ if }Z_{t+1}= \emptyset,\\
 X_{t+1} & \text{ if }Z_{t+1} = X_{t+1}.
\end{array}\right.
\label{eq:estimator_inf_2C}
\end{align}
The above strategies are steady state versions of  \eqref{eq:opt_U_2C} - \eqref{eq:estimator_t_2C} obtained by replacing $K^0_t,K^1_t$ with $K^0_*,K^1_*$.
\item If matrices $P_t^0$ and $P_t^1$ do not converge, is it still possible to find control strategies with finite cost for Problem \ref{problem_infinite_2C}?
\end{enumerate}
We answer the above three questions in the following subsections.

\subsection{Answering Q1}
\label{sec:Q1_2C} 
In Q1, we want to know whether $P_t^0$ and $P_t^1$ defined by coupled recursions of \eqref{eq:P_initial}-\eqref{eq:tildeP_finite_2C} converge to $P_*^0$ and $P_*^1$ satisfying \eqref{eq:P_finite_2C_fixed}-\eqref{eq:tildeP_finite_2C_fixed}.  Our approach for answering Q1 is based on establishing a connection between the recursions for matrices $P_t^0$ and $P_t^1$ in our DNCS problem and the recursions for matrices $P_t^{\diamond}(m)$, $m \in \mathcal{M},$ in the MJLS problem reviewed in Section \ref{sec:mjls}. This approach consists of the following two steps.



\subsubsection*{\textbf{Step 1: Constructing an auxiliary MJLS}}
Consider an auxiliary MJLS where the set $\mathcal{M}$ of modes is $\{0,1\}$. Then, we have the following two sequences of  matrices, $P^{\diamond}_t(0),P^{\diamond}_t(1),$ defined recursively using \eqref{eq:CARE_init} and \eqref{eq:CARE_finite} for this MJLS:
\begin{align}
&P^{\diamond}_{T+1}(0) = P^{\diamond}_{T+1}(1) =\mathbf{0}, \label{eq:P_MJ_init}\\
&P^{\diamond}_t(0) =  \notag \\
&\Omega\big(\theta^{00} P^{\diamond}_{t+1}(0) + \theta^{01} P^{\diamond}_{t+1}(1),Q^{\diamond}(0),R^{\diamond}(0),A^{\diamond}(0),B^{\diamond}(0) \big),
\label{P_MJ_cmp_2C_0}
\\
&P^{\diamond}_t(1)=   \notag \\
& \Omega\big(\theta^{10} P^{\diamond}_{t+1}(0) + \theta^{11} P^{\diamond}_{t+1}(1),Q^{\diamond}(1),R^{\diamond}(1),A^{\diamond}(1),B^{\diamond}(1) \big).
\label{P_MJ_cmp_2C_1}
\end{align}
Furthermore, recall that from \eqref{eq:P_initial}-\eqref{eq:tildeP_finite_2C}, we have the following recursions for matrices $P_t^0$ and $P_t^1$ in our DNCS problem,
\begin{align}
&P^0_{T+1}=P^1_{T+1}=0, \label{eq:barP_cmp_init}\\
&P_t^0 =  \Omega(P_{t+1}^0,Q,R,A,B),
\label{eq:barP_cmp_0_2C}
\\
&P_t^{1} =  \Omega((1-p^1) P_{t+1}^0+p^1 P_{t+1}^1,Q,R^{11},A,B^{11}).
\label{eq:barP_cmp_n_2C}
\end{align}
Is it possible to find matrices $A^{\diamond}(m),B^{\diamond}(m), Q^{\diamond}(m),R^{\diamond}(m)$, $m \in \{0,1\}$, and a transition probability matrix $\Theta$ for the auxiliary MJLS such that the recursions in \eqref{eq:P_MJ_init} - \eqref{P_MJ_cmp_2C_1} coincide with the recursions in \eqref{eq:barP_cmp_init} - \eqref{eq:barP_cmp_n_2C}?


By comparing \eqref{P_MJ_cmp_2C_0}-\eqref{P_MJ_cmp_2C_1} with \eqref{eq:barP_cmp_0_2C}-\eqref{eq:barP_cmp_n_2C}, we find that the following definitions would make the two sets of equations identical:
\begin{align}
&A^{\diamond}(0) = A^{\diamond}(1) = A, \label{eq:MJLS_A} \\
&B^{\diamond}(0) = B, \quad B^{\diamond}(1) = [\mathbf 0,B^{11}], \label{eq:MJLS_B}
\\
&Q^{\diamond}(0) = Q^{\diamond}(1) = Q,\label{eq:MJLS_Q} \\
&R^{\diamond}(0) = R, \quad R^{\diamond}(1) = \begin{bmatrix}
\mathbf I & \mathbf 0 \\
\mathbf 0 & R^{11}
\end{bmatrix}, \label{eq:MJLS_R}
\\
&\Theta = \begin{bmatrix}
\theta^{00} & \theta^{01} \\
\theta^{10} & \theta^{11}
\end{bmatrix} = \begin{bmatrix}
1 & 0 \\
1-p^1 & p^1
\end{bmatrix}.
\label{eq:MJLS_theta}
\end{align}
To complete the definition of the auxiliary MJLS, we need to define the initial state and mode probability distributions $\pi_{X_0^{\diamond}}$ and $\pi_{M_0}$. These can be defined arbitrarily and for simplicity we assume that the initial state and mode of the auxiliary MJLS are fixed to be $X^{\diamond}_0 = 0$ and $M_0 = 1$.
The following lemma summarizes the above discussion.


\begin{lemma}
\label{equality_recursions_2C}
For the auxiliary MJLS described by \eqref{eq:MJLS_A}-\eqref{eq:MJLS_theta}, the coupled recursions in \eqref{eq:P_MJ_init}-\eqref{P_MJ_cmp_2C_1} are identical to  the coupled recursions in \eqref{eq:barP_cmp_init}-\eqref{eq:barP_cmp_n_2C}.
\end{lemma}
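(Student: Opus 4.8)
The plan is to prove the lemma by direct substitution: the auxiliary MJLS was reverse-engineered precisely so that its mode-dependent Riccati recursions collapse onto the DNCS recursions, so the proof amounts to plugging the definitions \eqref{eq:MJLS_A}--\eqref{eq:MJLS_theta} into the MJLS recursions \eqref{eq:P_MJ_init}--\eqref{P_MJ_cmp_2C_1} and checking term-by-term agreement with \eqref{eq:barP_cmp_init}--\eqref{eq:barP_cmp_n_2C}. First I would dispatch the initial conditions: both \eqref{eq:P_MJ_init} and \eqref{eq:barP_cmp_init} set the mode-$0$ and mode-$1$ matrices to $\mathbf{0}$ at time $T+1$, so under the identification $P^{\diamond}_t(0)=P_t^0$ and $P^{\diamond}_t(1)=P_t^1$ they coincide trivially.

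Next I would handle the mode-$0$ recursion. Substituting $\theta^{00}=1$, $\theta^{01}=0$, $Q^{\diamond}(0)=Q$, $R^{\diamond}(0)=R$, $A^{\diamond}(0)=A$, and $B^{\diamond}(0)=B$ into \eqref{P_MJ_cmp_2C_0}, the combination $\theta^{00}P^{\diamond}_{t+1}(0)+\theta^{01}P^{\diamond}_{t+1}(1)$ reduces to $P^{\diamond}_{t+1}(0)$, and the recursion becomes $P^{\diamond}_t(0)=\Omega(P^{\diamond}_{t+1}(0),Q,R,A,B)$, which is exactly \eqref{eq:barP_cmp_0_2C}. This step is purely mechanical.

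The only non-routine point is the mode-$1$ recursion. Substituting $\theta^{10}=1-p^1$ and $\theta^{11}=p^1$, the first argument of $\Omega$ in \eqref{P_MJ_cmp_2C_1} becomes $(1-p^1)P^{\diamond}_{t+1}(0)+p^1 P^{\diamond}_{t+1}(1)$, matching the weighted sum in \eqref{eq:barP_cmp_n_2C}. What remains is to show that evaluating $\Omega$ with the padded data $B^{\diamond}(1)=[\mathbf{0},B^{11}]$ and the block-diagonal $R^{\diamond}(1)=\bmat{\mathbf{I}&\mathbf{0}\\\mathbf{0}&R^{11}}$ yields the same matrix as $\Omega(\cdot,Q,R^{11},A,B^{11})$. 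I would verify this directly from the definition \eqref{Omega}: writing $S:=(1-p^1)P^{\diamond}_{t+1}(0)+p^1 P^{\diamond}_{t+1}(1)$, the zero block in $B^{\diamond}(1)$ forces $B^{\diamond}(1)^{\tp}SA=\bmat{\mathbf{0}\\(B^{11})^{\tp}SA}$ and makes $R^{\diamond}(1)+B^{\diamond}(1)^{\tp}S B^{\diamond}(1)$ block diagonal with diagonal blocks $\mathbf{I}$ and $R^{11}+(B^{11})^{\tp}SB^{11}$. Inverting the block-diagonal matrix and carrying out the products, the identity block and its associated zero rows drop out, leaving exactly the correction term $A^{\tp}SB^{11}(R^{11}+(B^{11})^{\tp}SB^{11})^{-1}(B^{11})^{\tp}SA$ that appears in $\Omega(S,Q,R^{11},A,B^{11})$. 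Hence the mode-$1$ recursion coincides with \eqref{eq:barP_cmp_n_2C}.

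I expect this last block-matrix reduction to be the main (and essentially the only) obstacle, since it is where the padding---introduced so that the fictitious mode-$1$ system is a well-posed MJLS with a PD cost matrix---must be shown to be cost-free. Everything else is a transcription of the transition probabilities and system matrices prescribed by \eqref{eq:MJLS_A}--\eqref{eq:MJLS_theta}. With the initialization and both mode recursions matched, the two sets of coupled recursions are identical, which completes the proof.
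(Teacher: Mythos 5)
Your proposal is correct and is exactly the argument the paper has in mind: the paper's proof of this lemma is a one-line remark that it follows by ``straightforward algebraic manipulations,'' and your substitution of \eqref{eq:MJLS_A}--\eqref{eq:MJLS_theta} into \eqref{eq:P_MJ_init}--\eqref{P_MJ_cmp_2C_1} simply supplies those manipulations in full. In particular, your block-matrix reduction for mode $1$ --- showing that the padded $B^{\diamond}(1)=[\mathbf{0},B^{11}]$ and block-diagonal $R^{\diamond}(1)$ make $R^{\diamond}(1)+B^{\diamond}(1)^{\tp}SB^{\diamond}(1)$ block diagonal so that the identity block drops out of the correction term, leaving $\Omega(S,Q,R^{11},A,B^{11})$ --- is precisely the one nontrivial verification the paper leaves implicit, and you carry it out correctly.
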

\begin{proof}
The lemma can be proved by  straightforward algebraic manipulations.
\end{proof}

\begin{remark}
Note that  we have not defined $B^{\diamond}(1)$ to be $B^{11}$  because the MJLS model requires that the dimensions of matrices $B^{\diamond}(0)$ and $B^{\diamond}(1)$ be the same (see Section \ref{sec:mjls}). Similar dimensional considerations  prevent us from defining $R^{\diamond}(1)$ to be simply $R^{11}$.
\end{remark}

\begin{remark}\label{rem:fictitious}
It should be noted that the auxiliary MJLS is simply a mathematical device. It cannot be seen as a reformulation or  another interpretation of our DNCS problem. In particular, the binary mode $M_t$  \emph{is not} the same as the link state $\Gamma^1_t$. The distinction between $M_t$ and $\Gamma^1_t$ is immediately clear if one recalls that $M_t$ is the state of a Markov chain with transition probability matrix given in \eqref{eq:MJLS_theta} whereas $\Gamma^1_t, t \geq 0,$ is an i.i.d process. 
\end{remark}
\subsubsection*{\textbf{Step 2: Using MJLS results to answer Q1}}
Now that we have constructed an auxiliary MJLS such that $P^{\diamond}_t(m) = P_t^{m}$ for $m \in \{0,1\}$, we can use the MJLS results about convergence of  matrices $P^{\diamond}_t(m)$ (that is, Lemmas \ref{lm:MJ_infinite} and \ref{lm:ss})  to answer Q1. The following lemma states this result.


\begin{lemma}\label{lm:pc_2C}
Suppose Assumption \ref{assum:det_stb_2C} holds. Then, the matrices $P_t^0$ and $P_t^1$ defined in \eqref{eq:P_initial}-\eqref{eq:tildeP_finite_2C} converge as $t \to -\infty$ to matrices $P_*^0$ and $P_*^1$ that satisfy the coupled fixed point  equations \eqref{eq:P_finite_2C_fixed} - \eqref{eq:tildeP_finite_2C_fixed} if and only if $p^1 < p^1_c$, where $p^1_c$ is the critical threshold given by 
\begin{align}
\frac{1}{\sqrt{p^1_c}} =  \min_{K  \in \mathbb{R}^{d^1_U \times d_X}}\rho(A+B^{11}K).
\label{eq:pc_2c}
\end{align} 
%
\end{lemma}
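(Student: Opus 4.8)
The plan is to transfer the convergence question entirely onto the auxiliary MJLS defined in \eqref{eq:MJLS_A}--\eqref{eq:MJLS_theta} and then invoke the MJLS machinery of Lemmas \ref{lm:MJ_infinite} and \ref{lm:ss}. By Lemma \ref{equality_recursions_2C} we already know that $P^{\diamond}_t(m) = P_t^m$ for $m \in \{0,1\}$, so $P_t^0,P_t^1$ converge to solutions of \eqref{eq:P_finite_2C_fixed}--\eqref{eq:tildeP_finite_2C_fixed} if and only if $P^{\diamond}_t(0),P^{\diamond}_t(1)$ converge to a solution of the corresponding DCARE \eqref{eq:CARE_infinite}. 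By Lemma \ref{lm:MJ_infinite}, once the auxiliary MJLS is shown to be stochastically detectable (SD), this convergence is equivalent to stochastic stabilizability (SS). The whole argument therefore reduces to (i) verifying SD unconditionally and (ii) showing that SS holds exactly when $p^1 < p^1_c$.

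First I would verify SD. Since $A^{\diamond}(0) = A^{\diamond}(1) = A$ and $Q^{\diamond}(0) = Q^{\diamond}(1) = Q$, Assumption \ref{assum:det_stb_2C} supplies an $H$ with $\rho(A + H Q^{1/2}) < 1$; choosing $H^{\diamond}(0) = H^{\diamond}(1) = H$ makes $A_d(0) = A_d(1) =: A_d$ Schur stable. The key structural observation is that $\Theta$ in \eqref{eq:MJLS_theta} has $\theta^{01} = 0$, so $\Theta^{\tp}$ is upper triangular and the matrix $\mathcal{A}_d$ of \eqref{eq:bigmatrix_sd} becomes block upper triangular with diagonal blocks $A_d \otimes A_d$ and $p^1\,(A_d \otimes A_d)$. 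Its spectrum is therefore the union of the spectra of these two blocks; using $\rho(A_d \otimes A_d) = \rho(A_d)^2 < 1$ and $p^1 \le 1$ yields $\rho(\mathcal{A}_d) < 1$, so the auxiliary MJLS is SD by Lemma \ref{lm:ss} for every value of $p^1$.

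Next I would analyze SS through the same block-triangular reduction. For a candidate gain pair, $A_s(0) = A + B K^{\diamond}(0)$ can be any matrix of the form $A + BK$, while $B^{\diamond}(1) = [\mathbf 0,B^{11}]$ forces $A_s(1) = A + B^{11}\tilde K$ for some $\tilde K \in \R^{d^1_U \times d_X}$ (the lower block of $K^{\diamond}(1)$). As before, $\mathcal{A}_s$ in \eqref{eq:bigmatrix_ss} is block upper triangular with diagonal blocks $A_s(0)\otimes A_s(0)$ and $p^1\,A_s(1)\otimes A_s(1)$, and these two blocks are governed by $K^{\diamond}(0)$ and $K^{\diamond}(1)$ independently. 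Hence $\rho(\mathcal{A}_s) < 1$ is achievable if and only if (a) some $K$ makes $\rho(A + BK) < 1$ and (b) some $\tilde K$ makes $p^1\,\rho(A + B^{11}\tilde K)^2 < 1$. Condition (a) always holds because $(A,B)$ is stabilizable. Condition (b) asks for $\tilde K$ with $\rho(A + B^{11}\tilde K) < 1/\sqrt{p^1}$, which, by the definition \eqref{eq:pc_2c} of $p^1_c$, is possible precisely when $\inf_{\tilde K}\rho(A+B^{11}\tilde K) = 1/\sqrt{p^1_c} < 1/\sqrt{p^1}$, i.e.\ precisely when $p^1 < p^1_c$. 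Combining with the SD fact and Lemma \ref{lm:MJ_infinite} gives the claimed equivalence.

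The main obstacle I anticipate is the careful handling of the strict-inequality boundary in condition (b): the threshold $1/\sqrt{p^1_c}$ is an infimum of spectral radii that need not be attained, so at $p^1 = p^1_c$ no $\tilde K$ can push $\rho(A+B^{11}\tilde K)$ strictly below $1/\sqrt{p^1}$, and SS (hence convergence) must fail there as well as for $p^1 > p^1_c$. One must argue that for $p^1 < p^1_c$ a strictly sub-threshold $\tilde K$ exists while for $p^1 \ge p^1_c$ none does, using only that the infimum equals $1/\sqrt{p^1_c}$. The supporting spectral-radius identity $\rho(A_s(1)\otimes A_s(1)) = \rho(A_s(1))^2$ and the block-triangular eigenvalue computation for $\mathcal{A}_s$ and $\mathcal{A}_d$ are the remaining routine pieces.
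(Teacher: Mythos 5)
Your proposal is correct and takes essentially the same route as the paper's proof: reduce to the auxiliary MJLS via Lemma \ref{equality_recursions_2C}, verify SD unconditionally from detectability of $(A,Q^{1/2})$, characterize SS through the block upper-triangular structure of $\mathcal{A}_s$ with diagonal blocks $A_s(0)\otimes A_s(0)$ and $p^1\,A_s(1)\otimes A_s(1)$, and conclude with Lemma \ref{lm:MJ_infinite}. The boundary subtlety you flag is handled exactly as you suggest (and as the paper implicitly does): the strict inequality $p^1\rho(A+B^{11}\tilde K)^2<1$ together with the definition of $1/\sqrt{p^1_c}$ as the optimal value gives existence of a suitable $\tilde K$ precisely when $p^1<p^1_c$, with no attainment of the minimum needed.
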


\begin{proof}
See Appendix \ref{proof_lm:pc_2C}.
\end{proof}

\subsection{Answering Q2 and Q3}
\label{sec:Q2_2C}
Assuming that $P_{t}^n \rightarrow P_*^n$ as $t \rightarrow -\infty$ for $n=0,1,$ we want to know whether the control strategies of \eqref{eq:K_finite_2C_fixed}-\eqref{eq:estimator_inf_2C} are optimal for Problem \ref{problem_infinite_2C}. The following result shows that these control strategies are indeed optimal.

\begin{lemma}
\label{lm:Q2_2C}
If $P_{t}^n \rightarrow P_*^n$ as $t \rightarrow -\infty$ for $n=0,1$, then
\begin{enumerate}
\item Problem \ref{problem_infinite_2C} has finite optimal cost,
\item The strategies described by \eqref{eq:K_finite_2C_fixed}-\eqref{eq:estimator_inf_2C} are optimal for Problem \ref{problem_infinite_2C},
\item Under the strategies described by \eqref{eq:K_finite_2C_fixed}-\eqref{eq:estimator_inf_2C}, $X_t$ and $(X_t-\hat X_t)$ are mean square stable, i.e.,
\[ \sup_{t \geq 0} \mathds{E}^{g^*}[||X_t||^2]  < \infty \mbox{~and~} \sup_{t \geq 0} \mathds{E}^{g^*}[||(X_t-\hat X_t)||^2]  < \infty,\]
 where $g^*$ denotes the strategy described by \eqref{eq:K_finite_2C_fixed}-\eqref{eq:estimator_inf_2C}.
\end{enumerate}
\end{lemma}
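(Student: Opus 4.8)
The plan is to reduce everything to the joint second-order statistics of the common estimate $\hat X_t$ and the estimation error $e_t := X_t - \hat X_t$ under the stationary strategy $g^*$, and then read off both stability (claim 3) and the value of the average cost (claims 1--2). First I would verify, by induction on $t$ and using that the acknowledgement makes $\hat X_t$ a function of $Z_{0:t}$, that under $g^*$ the recursion \eqref{eq:estimator_inf_2C} still produces the conditional mean, $\hat X_t = \ee^{g^*}[X_t\mid H_t^0]$; consequently the orthogonality principle gives $\ee^{g^*}[\hat X_t e_t^\top]=\mathbf 0$, and since $X_0=0$ and $W_t$ is zero-mean all first moments vanish. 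Writing $\bar A_0 := A+BK_*^0$ and $\bar A_1 := A+B^{11}K_*^1$, the closed loop takes the form $X_{t+1}=\bar A_0\hat X_t+\bar A_1 e_t+W_t$, and the reset structure of the estimator yields the decoupled error recursion $e_{t+1}=\bar A_1 e_t+W_t$ on link failure (probability $p^1$) and $e_{t+1}=\mathbf 0$ on success. I would then propagate the (uncorrelated) covariances $\Sigma_t:=\ee[e_te_t^\top]$ and $\Xi_t:=\ee[\hat X_t\hat X_t^\top]$, obtaining the linear recursions
\begin{align}
\Sigma_{t+1} &= p^1\big(\bar A_1\Sigma_t\bar A_1^\top+\mathbf I\big), \notag\\
\Xi_{t+1} &= \bar A_0\Xi_t\bar A_0^\top+(1-p^1)\big(\bar A_1\Sigma_t\bar A_1^\top+\mathbf I\big). \notag
\end{align}

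Second, and this is the main obstacle, I must show these two recursions are stable, i.e.\ $\rho(\bar A_0)<1$ and $p^1\rho(\bar A_1)^2<1$. Here I would exploit the auxiliary MJLS of Lemma \ref{equality_recursions_2C}: the hypothesis $P_t^n\to P_*^n$ is exactly convergence of the MJLS recursions, so by Lemma \ref{lm:MJ_infinite} (the auxiliary MJLS is SD under Assumption \ref{assum:det_stb_2C}) the MJLS is stochastically stabilizable and, by the standard MJLS optimal-control theory, its steady-state optimal gains $K_*^\diamond(m)$ are mean-square stabilizing. A direct computation shows these optimal gains are precisely $K_*^\diamond(0)=K_*^0$ and $K_*^\diamond(1)=\bmat{\mathbf 0\\ K_*^1}$, so that $A_s(0)=\bar A_0$ and $A_s(1)=\bar A_1$. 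With $\Theta$ as in \eqref{eq:MJLS_theta}, the matrix $\mathcal A_s$ of Lemma \ref{lm:ss} is block upper triangular with diagonal blocks $\bar A_0\otimes\bar A_0$ and $p^1(\bar A_1\otimes\bar A_1)$; hence $\rho(\mathcal A_s)=\max\{\rho(\bar A_0)^2,\,p^1\rho(\bar A_1)^2\}<1$, which delivers the two required conditions. Stability of the recursions then gives $\Sigma_t\to\Sigma_\infty$ and $\Xi_t\to\Xi_\infty$ (finite PSD limits), proving claim 3 and, in particular, that $g^*\in\mathcal G$.

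Third, I would compute the average cost. Using orthogonality, the per-step cost splits as $\ee^{g^*}[c(X_t,U_t)]=\tr\big((Q+(K_*^0)^\top R K_*^0)\Xi_t\big)+\tr\big((Q+(K_*^1)^\top R^{11}K_*^1)\Sigma_t\big)$, which converges as $t\to\infty$; by Ces\`aro summation $J_\infty(g^*)$ equals the limiting value $\bar c$. To identify $\bar c$ I would use the identity $\Omega(P,Q,R,A,B)=Q+(A+BK)^\top P(A+BK)+K^\top R K$ with $K=\Psi(P,R,A,B)$, applied to \eqref{eq:P_finite_2C_fixed}--\eqref{eq:tildeP_finite_2C_fixed}, to replace the cost weights by $P_*^0-\bar A_0^\top P_*^0\bar A_0$ and $P_*^1-\bar A_1^\top\big((1-p^1)P_*^0+p^1P_*^1\big)\bar A_1$, and then substitute the steady-state covariance equations; the cross terms cancel and leave $\bar c=\tr\big((1-p^1)P_*^0+p^1P_*^1\big)$. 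Finally, for optimality I would invoke the finite-horizon optimal cost \eqref{eq:opcost_finite_2C}: for every $g\in\mathcal G$ and every $T$ we have $J_T(g)\ge J_T^*$, so dividing by $T+1$ and taking $\limsup$ gives $J_\infty(g)\ge\lim_T \tfrac{1}{T+1}J_T^*=\tr\big((1-p^1)P_*^0+p^1P_*^1\big)$ (again by Ces\`aro, using $P_{t+1}^n\to P_*^n$). Since $g^*$ attains this bound, the optimal cost is finite and equals $\tr\big((1-p^1)P_*^0+p^1P_*^1\big)$, establishing claims 1 and 2.
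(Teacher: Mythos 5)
Your proof is correct, but for claims 1 and 2 it follows a genuinely different route than the paper, so a comparison is worthwhile. The paper (Appendices E and F) proves optimality via a Lyapunov-type telescoping identity: it exhibits the value function $V_t = \hat X_t^\tp P_*^0 \hat X_t + \tr\big(P_*^1 \cov(X_t|H_t^0)\big)$ and verifies the per-step dissipation equality $\ee^{g^*}[c(X_t,U_t^*)|H_t^0] + \ee^{g^*}[V_{t+1}|H_t^0] = \tr(\Lambda_*) + V_t$ (Claim \ref{claim:cost_optimal_2C}), which gives $J_T(g^*) = (T+1)\tr(\Lambda_*) - \ee^{g^*}[V_{T+1}]$ \emph{exactly} for every finite $T$; the upper bound $J_\infty(g^*)\le\tr(\Lambda_*)$ then follows from $V_{T+1}\ge 0$ alone, with no stability argument needed, and part 3 is proved separately afterwards (Appendix G). You invert this dependency: you establish stability first, by showing the steady-state gains stabilize the auxiliary MJLS --- which is exactly the mechanism the paper uses in Appendix G via the block upper-triangular $\mathcal{A}_s$ and \cite[Corollary A.16, Theorem A.15]{costa2006discrete}; note your step ``the steady-state optimal gains are mean-square stabilizing'' is licensed because, under SD, the limit of the recursions is the \emph{stabilizing} DCARE solution (\cite[Proposition 7]{costa1995discrete}, inside the paper's proof of Lemma \ref{lm:MJ_infinite}) --- and then you compute the cost from the convergent closed-loop covariances $\Sigma_t,\Xi_t$ rather than from a telescoping sum. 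I verified your trace algebra: with $\bar A_0 = A+BK_*^0$, $\bar A_1 = A+B^{11}K_*^1$, the Riccati substitutions $Q+(K_*^0)^\tp R K_*^0 = P_*^0 - \bar A_0^\tp P_*^0\bar A_0$ and $Q+(K_*^1)^\tp R^{11}K_*^1 = P_*^1 - \bar A_1^\tp\Lambda_*\bar A_1$ together with $p^1\bar A_1\Sigma_\infty\bar A_1^\tp = \Sigma_\infty - p^1\mathbf{I}$ indeed collapse the steady-state per-step cost to $(1-p^1)\tr(P_*^0)+p^1\tr(P_*^1) = \tr(\Lambda_*)$, and Ces\`aro averaging gives $J_\infty(g^*)=\tr(\Lambda_*)$; your lower bound (dominating any $g$ by the finite-horizon optimum $J_T^*$ and taking the Ces\`aro limit through the reversed recursions $Y_k^n = P_{T+1-k}^n$) is identical to the paper's. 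What each approach buys: the paper's identity makes claims 1--2 logically independent of claim 3 and exact at every horizon, while your route makes stability load-bearing but is more self-contained probabilistically --- your explicit recursions $\Sigma_{t+1}=p^1(\bar A_1\Sigma_t\bar A_1^\tp+\mathbf{I})$ and $\Xi_{t+1}=\bar A_0\Xi_t\bar A_0^\tp+(1-p^1)(\bar A_1\Sigma_t\bar A_1^\tp+\mathbf{I})$ replace the paper's citations to \cite[Theorem 3.33]{costa2006discrete} and \cite[Corollary 2.7]{fang2002stochastic}, handle $p^1\in\{0,1\}$ uniformly where the paper treats them as special cases, and additionally deliver the stationary second-order statistics $(\Sigma_\infty,\Xi_\infty)$ of the closed loop as a byproduct.
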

\begin{proof}
See Appendix \ref{sec:lm_Q2_2C} for proof of parts 1) and 2). See Appendix \ref{sec:stability_proof} for proof of part 3).
\end{proof}


The following lemma answers Q3.
\begin{lemma}
\label{lm:Q3}
If  $P_t^n$, $n=0,1,$ do not converge as $t \to -\infty$, then Problem \ref{problem_infinite_2C} does not have finite optimal cost. 
\end{lemma}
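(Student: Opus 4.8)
The plan is to show the infinite horizon optimal cost equals $+\infty$ by lower bounding it with the Ces\`aro limit of the finite horizon optimal costs, and then arguing that this limit diverges precisely because the Riccati iterates fail to converge. First I would reduce the statement to a claim about finite horizon costs. For every admissible $g$ and every $T$ we have $J_T(g) \geq J_T^*$, where $J_T^*$ is the finite horizon optimal cost given by \eqref{eq:opcost_finite_2C} in Lemma \ref{lm:opt_strategies_2C}. Dividing by $T+1$ and taking $\limsup_{T\to\infty}$ yields $J_\infty(g) \geq \limsup_{T\to\infty}\frac{1}{T+1}J_T^*$ for every $g$; since the right-hand side is independent of $g$, it also bounds $\inf_{g}J_\infty(g)$ from below. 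Hence it suffices to prove $\frac{1}{T+1}J_T^* \to \infty$.

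Next I would re-express $J_T^*$ through the backward iterates started from zero. Because the recursions \eqref{eq:barP_cmp_init}--\eqref{eq:barP_cmp_n_2C} are time invariant with terminal value $P^0_{T+1}=P^1_{T+1}=\mathbf{0}$, the matrix $P^n_t$ in a horizon-$T$ problem depends only on the steps-to-go $T+1-t$. Writing $\hat P^n_j$ for the iterate after $j$ steps from $\hat P^n_0=\mathbf{0}$, namely $\hat P^0_{j+1}=\Omega(\hat P^0_j,Q,R,A,B)$ and $\hat P^1_{j+1}=\Omega((1-p^1)\hat P^0_j+p^1\hat P^1_j,Q,R^{11},A,B^{11})$, we get $P^n_{t+1}=\hat P^n_{T-t}$, so that \eqref{eq:opcost_finite_2C} becomes $J_T^* = \sum_{j=0}^{T} f(j)$ with $f(j):=\tr((1-p^1)\hat P^0_j+p^1\hat P^1_j)\geq 0$. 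Thus $\frac{1}{T+1}J_T^*$ is precisely the Ces\`aro average of the nonnegative sequence $\{f(j)\}$, and it is enough to show $f(j)\to\infty$.

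Finally I would analyze the iterates using monotonicity. Since $\Omega(\cdot,Q,R,A,B)=\min_{K}\{Q+K^\tp R K+(A+BK)^\tp(\cdot)(A+BK)\}$ is order preserving in its first argument and the convex combinations $(1-p^1)P^0+p^1P^1$ are order preserving in $(P^0,P^1)$, the coupled map is monotone in the Loewner order; started from $(\mathbf{0},\mathbf{0})$ it produces nondecreasing sequences, so both $\hat P^0_j$ and $\hat P^1_j$ are Loewner nondecreasing. The $\hat P^0_j$ recursion is decoupled ($\theta^{00}=1$) and is the standard LQR value iteration for $(A,B,Q,R)$, which converges to the stabilizing solution under Assumption \ref{assum:det_stb_2C} \cite{Bertsekas:1995}; hence $\tr(\hat P^0_j)$ stays bounded and any non-convergence of the pair must come from $\hat P^1_j$. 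A nondecreasing PSD sequence converges if and only if its trace is bounded (because $\hat P^1_j \preceq \tr(\hat P^1_j)\,\mathbf{I}$), so non-convergence forces $\tr(\hat P^1_j)\to\infty$. Moreover $p^1>0$ here, since for $p^1=0$ the recursion reduces to $\hat P^1_{j+1}=\Omega(\hat P^0_j,Q,R^{11},A,B^{11})$, which converges together with $\hat P^0_j$. Therefore $f(j)\geq p^1\tr(\hat P^1_j)\to\infty$, its Ces\`aro averages diverge, and $\frac{1}{T+1}J_T^*\to\infty$, which is the desired conclusion.

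The load-bearing step is the Loewner monotonicity of the coupled Riccati map. Without it, non-convergence of the PSD iterates would not by itself force the trace (and hence the cost) to blow up, since a bounded oscillating sequence can fail to converge while keeping its Ces\`aro average finite. Once monotonicity is in hand, the ``bounded if and only if convergent'' dichotomy for monotone PSD sequences does all the work; the remaining ingredients --- the horizon reindexing and the convergence of the decoupled $\hat P^0$ iteration under Assumption \ref{assum:det_stb_2C} --- are routine bookkeeping and a classical LQR fact, respectively.
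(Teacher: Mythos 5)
Your proof is correct and follows essentially the same route as the paper's: lower-bound $J_\infty(g)$ by the Ces\`aro average of $J_T^*$, reindex the backward recursions as forward iterates from $\mathbf{0}$, and use the Loewner monotonicity of $\Omega$ (the paper's Lemma \ref{lm:KPrelation_0}(ii)) to conclude that non-convergence of the monotone PSD iterates forces unboundedness of $\tr(\Lambda_k)$ and hence divergence of the average cost. Your extra bookkeeping --- noting that $\hat P^0_j$ always converges under Assumption \ref{assum:det_stb_2C} and that $p^1>0$ must hold, so the weight on the divergent iterate cannot vanish --- is a slightly more careful treatment of the edge cases $p^1\in\{0,1\}$ than the paper's one-line claim that $\{\Lambda_k\}$ is unbounded, but it does not change the substance of the argument.
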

\begin{proof} See Appendix \ref{sec:lmQ3}.
\end{proof}

Now that we have answered  Q1, Q2 and Q3, we can summarize our results for the infinite horizon DNCS problem (Problem \ref{problem_infinite_2C}).

\subsection{Summary of the Infinite Horizon Results}
Based on the answers to Q1-Q3, the following theorem summarizes our results for Problem \ref{problem_infinite_2C}.
\begin{theorem}
\label{thm:DC_infinite_2C}
Suppose Assumption \ref{assum:det_stb_2C} holds. Then, 
\begin{enumerate}[(i)]
\item Problem \ref{problem_infinite_2C} has finite optimal cost if and only if $p^1 < p^1_c$ where the critical threshold $p^1_c$ is given by \eqref{eq:pc_2c}.

\item If $p^1 < p^1_c$, there exist symmetric positive semi-definite matrices $P_*^0, P_*^1$ that satisfy \eqref{eq:P_finite_2C_fixed}-\eqref{eq:tildeP_finite_2C_fixed} and the optimal strategies  for Problem \ref{problem_infinite_2C} are given by
\begin{align}
\bmat{U^{0*}_t \\ U^{1*}_t \\} = K_*^0\hat X_t + \bmat{\mathbf{0} \\ K_*^1 }\left(X_t - \hat X_t\right),
\label{eq:opt_U_infinite_2C_thm}
\end{align}
where the  estimate $\hat X_t$ can be computed recursively using \eqref{eq:estimator_inf_2C} with $\hat{X}_0=0$
 and the gain matrices $K_*^0,K_*^1$ are given by
\begin{align}
& K_*^0 = \Psi(P_{*}^0,R,A,B),
\label{eq:K_finite_2C_fixed_final}
\\
& K_*^1 = \Psi((1-p^1)P_{*}^0+p^1 P_{*}^1,R^{11},A,B^{11}).
\label{eq:tildeK_finite_2C_fixed_final}
\end{align} 
\item If $p^1 < p^1_c$, then under the strategies described in part (ii) above, $X_t$ and $(X_t-\hat X_t)$ are mean square stable.
 

\end{enumerate}

\end{theorem}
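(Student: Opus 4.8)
The plan is to obtain the theorem as a direct synthesis of Lemmas \ref{lm:pc_2C}, \ref{lm:Q2_2C}, and \ref{lm:Q3}, which jointly tie together the link failure probability, the convergence of the coupled Riccati recursions, and the existence of a finite-cost strategy. The organizing observation is that, under Assumption \ref{assum:det_stb_2C}, three conditions are equivalent: (A) $p^1 < p^1_c$; (B) the matrices $P_t^0, P_t^1$ of \eqref{eq:P_initial}--\eqref{eq:tildeP_finite_2C} converge as $t \to -\infty$ to matrices $P_*^0, P_*^1$ satisfying \eqref{eq:P_finite_2C_fixed}--\eqref{eq:tildeP_finite_2C_fixed}; and (C) Problem \ref{problem_infinite_2C} has finite optimal cost. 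Establishing these equivalences yields part (i), while the constructive content of the enabling lemmas yields parts (ii) and (iii).

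First I would record that (A) $\Leftrightarrow$ (B) is exactly Lemma \ref{lm:pc_2C}. Next I would close the loop between convergence and finite cost: Lemma \ref{lm:Q2_2C}(1) supplies (B) $\Rightarrow$ (C), and Lemma \ref{lm:Q3} supplies the contrapositive, namely that failure of (B) forces failure of (C), i.e. (C) $\Rightarrow$ (B). Hence (B) $\Leftrightarrow$ (C). Chaining the two equivalences gives (A) $\Leftrightarrow$ (C), which is precisely part (i): the infinite horizon problem has finite optimal cost if and only if $p^1 < p^1_c$, with the threshold $p^1_c$ given by \eqref{eq:pc_2c}.

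For part (ii) I would invoke the forward direction just established. When $p^1 < p^1_c$, Lemma \ref{lm:pc_2C} guarantees symmetric PSD limits $P_*^0, P_*^1$ solving the coupled fixed-point equations \eqref{eq:P_finite_2C_fixed}--\eqref{eq:tildeP_finite_2C_fixed}. These limits determine the gains $K_*^0, K_*^1$ through \eqref{eq:K_finite_2C_fixed_final}--\eqref{eq:tildeK_finite_2C_fixed_final} and hence the time-invariant candidate strategy \eqref{eq:opt_U_infinite_2C_thm} driven by the estimator recursion \eqref{eq:estimator_inf_2C}. Lemma \ref{lm:Q2_2C}(2) then certifies that this strategy is optimal for Problem \ref{problem_infinite_2C}. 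Part (iii) follows immediately from Lemma \ref{lm:Q2_2C}(3), which yields $\sup_{t \geq 0}\ee[\|X_t\|^2] < \infty$ and $\sup_{t \geq 0}\ee[\|X_t - \hat X_t\|^2] < \infty$ under the same strategy.

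Since all the substantive work is carried by the three enabling lemmas, the theorem itself is essentially a bookkeeping exercise, and I expect no real obstacle at this level. The one point requiring care is the ``if and only if'' of part (i): rather than a single converse statement, one must combine two separate lemmas (Lemma \ref{lm:Q2_2C}(1) and Lemma \ref{lm:Q3}) to obtain both directions of the equivalence (B) $\Leftrightarrow$ (C). The genuine difficulty resides upstream, in proving Lemma \ref{lm:pc_2C} (identifying the threshold via the auxiliary MJLS together with the stochastic stabilizability and detectability criteria of Lemma \ref{lm:ss}) and Lemma \ref{lm:Q2_2C} (establishing optimality of the decentralized steady-state strategy against all competing decentralized strategies).
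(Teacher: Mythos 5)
Your proposal is correct and follows exactly the paper's route: the paper proves Theorem \ref{thm:DC_infinite_2C} in one line as a direct consequence of Lemmas \ref{lm:pc_2C}, \ref{lm:Q2_2C} and \ref{lm:Q3}, and your equivalence chain (A)\,$\Leftrightarrow$\,(B) via Lemma \ref{lm:pc_2C} and (B)\,$\Leftrightarrow$\,(C) via Lemma \ref{lm:Q2_2C}(1) together with Lemma \ref{lm:Q3} is precisely the intended bookkeeping, with parts (ii) and (iii) read off from Lemma \ref{lm:pc_2C} and Lemma \ref{lm:Q2_2C}(2)--(3) as you state. Your observation that the ``only if'' direction of part (i) requires combining two separate lemmas rather than a single converse is accurate and matches the paper's structure.
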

\begin{proof}
The result follows from Lemmas \ref{lm:pc_2C}, \ref{lm:Q2_2C} and \ref{lm:Q3}.
\end{proof}



If $B^{11} = 0$, the local controller becomes just a sensor without any control ability. In this case, Theorem \ref{thm:DC_infinite_2C} gives the 
critical threshold as $p^1_c = \rho(A)^{-2}$ and the closed-loop system is mean-square stable if $\rho(A) < 1/\sqrt{p^1}$. This recovers the single-controller NCS result in \cite{Imer2006optimal}. Thus, we have the following corollary of Theorem \ref{thm:DC_infinite_2C}.
\begin{corollary}[Theorem 3 of \cite{Imer2006optimal} with $\alpha = 0$ and $\beta = p^1$]
Suppose the local controller is just a sensor (i.e., $B^{11} = 0$) and the remote controller is the only controller present. Then, if $\rho(A) < 1/\sqrt{p^1}$, the optimal controller of this single-controller NCS is given by $U^{0*}_t$ in \eqref{eq:opt_U_infinite_2C_thm}, and the corresponding closed-loop system is mean-square stable.
\end{corollary}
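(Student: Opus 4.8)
The plan is to obtain the corollary as a direct specialization of Theorem \ref{thm:DC_infinite_2C} to the case $B^{11}=0$. First I would evaluate the critical threshold \eqref{eq:pc_2c} in this regime: since $A+B^{11}K=A$ for every gain $K$, the minimization is vacuous and $1/\sqrt{p^1_c}=\min_{K}\rho(A+B^{11}K)=\rho(A)$, so $p^1_c=\rho(A)^{-2}$. Squaring and inverting the hypothesis $\rho(A)<1/\sqrt{p^1}$ then shows it is exactly the condition $p^1<p^1_c$ under which Theorem \ref{thm:DC_infinite_2C} applies. One should note in passing that Assumption \ref{assum:det_stb_2C} is inherited, with stabilizability of $(A,B)=(A,[B^{10},\mathbf 0])$ reducing to stabilizability of $(A,B^{10})$.

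With the threshold condition in hand, I would invoke Theorem \ref{thm:DC_infinite_2C}: part (i) yields finite optimal cost, part (ii) yields the optimal strategy \eqref{eq:opt_U_infinite_2C_thm}, and part (iii) yields mean-square stability. The next step is to simplify that strategy. Because $\Psi(P,R,A,B)$ in \eqref{Psi} carries the factor $B^{\tp}PA$, setting $B^{11}=0$ in \eqref{eq:tildeK_finite_2C_fixed_final} makes this factor vanish and forces $K_*^1=0$. Hence the local correction term $\bmat{\mathbf 0 \\ K_*^1}(X_t-\hat X_t)$ drops out, and the only surviving feedback is the remote controller's action $U^{0*}_t$, the $C^0$ component of $K_*^0\hat X_t$. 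A parallel computation shows that with $B=[B^{10},\mathbf 0]$ the fixed-point equation \eqref{eq:P_finite_2C_fixed} for $P_*^0$ collapses to the standard single-controller discrete algebraic Riccati equation for the pair $(A,B^{10})$, so that $U^{0*}_t$ is the associated certainty-equivalent gain acting on the common estimate $\hat X_t$. This is precisely the packet-dropping LQG controller of \cite{Imer2006optimal} with $\alpha=0$ and $\beta=p^1$, and $\rho(A)<1/\sqrt{p^1}$ is its mean-square stability condition.

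The one point I would treat carefully is the reduction of the two-controller information structure to a genuine single-controller NCS, since a priori the local controller still observes $X_t$ perfectly and one must rule out that this extra information improves performance. This is exactly what $K_*^1=0$ delivers: once $B^{11}=0$, the action $U^1_t$ affects neither the dynamics \eqref{Model:system_2C} nor, in the single-controller interpretation, the cost, so the optimal use of the local controller's surplus information is to ignore it. Apart from this conceptual identification, the corollary is a mechanical substitution of $B^{11}=0$ into Theorem \ref{thm:DC_infinite_2C}.
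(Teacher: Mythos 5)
Your proposal is correct and takes essentially the same route as the paper, which obtains this corollary directly by specializing Theorem \ref{thm:DC_infinite_2C}: since $B^{11}=0$ makes $\min_{K}\rho(A+B^{11}K)=\rho(A)$ in \eqref{eq:pc_2c}, the threshold is $p^1_c=\rho(A)^{-2}$ and the hypothesis $\rho(A)<1/\sqrt{p^1}$ is exactly $p^1<p^1_c$, after which parts (i)--(iii) of the theorem deliver the optimal controller $U^{0*}_t$ and mean-square stability. Your added observations --- $K^1_*=0$ from \eqref{Psi}, the collapse of \eqref{eq:P_finite_2C_fixed} to the standard algebraic Riccati equation for $(A,B^{10})$ once $U^1_t$ is absent, and the point that optimality over all decentralized strategies already rules out any benefit from the sensor's perfect state observation --- are consistent elaborations of that same specialization.
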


\begin{remark}
The value $\frac{1}{\sqrt{p^1_c}} =  \min_{K}\rho(A+B^{11}K)$ is actually the largest unreachable mode of $(A,B^{11})$. Therefore, it can be computed using tests for reachability such as the Popov-Belovich-Hautus (PBH) test \cite{hespanha2009linear}. Further, if  $(A,B^{11})$ is reachable, then $\rho(A+B^{11}K)$ can be made arbitrarily small which implies that $p^1_c = \infty$. This is the case when the local controller can stabilize the system by itself, so the DNCS is stabilizable under any link failure probability $p^1$.
\end{remark}

\begin{remark}
If $p^1 < p^1_c$, the coupled Riccati equations in \eqref{eq:P_finite_2C_fixed}-\eqref{eq:tildeP_finite_2C_fixed} can be solved by iteratively carrying out the recursions in \eqref{eq:P_initial}-\eqref{eq:tildeP_finite_2C} until convergence. This is similar to the procedure in \cite[Chapter 7]{costa2006discrete}.
\end{remark}

\section{Extension to Multiple Local Controllers}
\label{sec:model_N_controllers}
%

In this section, we study an extension of the system model in Section \ref{sec:model_2_controllers} to the case where instead of 1 local controller, we have $N$ local controllers, $C^1,C^2, \ldots,C^N$, each associated to a co-located plant as shown in Fig. \ref{fig:SystemModel}. We use $\mathcal{N}$ to denote the set $\{1,2, \ldots, N\}$ and $\overline{\mathcal{N}}$ to denote $\{0,1,\ldots, N\}$.
The linear dynamics of plant $n \in \mathcal{N}$ are given by
\begin{align}
&X_{t+1}^n \!=\! A^{nn} X_t^n + B^{nn}U^{n}_t+ B^{n0} U^0_t + W_t^n, t=0,\dots,T,
 \label{Model:system}
\end{align}
where $X_t^n\in \R^{d_X^n}$ is the state of the plant $n$ at time $t$,
$U^n_t \in \R^{d_U^n}$ is the control action of the controller $C^{n}$,  $U^0_t \in \R^{d_U^0}$ is the control action of the controller $C^{0}$, and $A^{nn}, B^{nn}, B^{n0}$ are matrices with appropriate dimensions.
We assume that $X^n_0 = 0$, and that  $W^n_t$, $n \in \mathcal{N}, t \geq 0$, are i.i.d  random variables with zero mean and $\cov(W_t^n) = \mathbf{I}$. Note that we do not assume that random variables $W_{t}^{n}$, $n \in \mathcal{N}, t \geq 0$, are Gaussian.

The overall system  dynamics can be written as
\begin{align}
X_{t+1} = A X_t + BU_t + W_t,
\label{overall_state_dynamic}
\end{align}
where $X_t = \vecc(X^{1:N}_t), U_t = \vecc(U^{0:N}_t),W_t = \vecc(W^{0:N}_t)$ and $A,B$ are defined as

\begin{align}
A &= \begin{bmatrix}
   A^{11} & & \text{\huge0}\\
          & \ddots & \\
     \text{\huge0} & & A^{NN}
\end{bmatrix}, 
B= 
\begin{bmatrix}
B^{10} &  B^{11} & & \text{\huge0}\\
\vdots     &     & \ddots & \\
B^{N0}  &   \text{\huge0} & & B^{NN}
\end{bmatrix}.
\label{eq:thm_matricesABB}
\end{align}

\begin{singlespace}

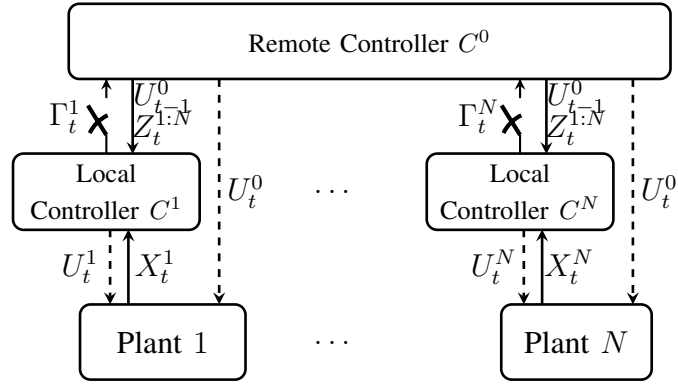
\begin{figure}
\begin{center}
\begin{tikzpicture}[every text node part/.style={align=center}]
\begin{scope}[rotate=180]
\node [rectangle,draw,minimum width=8cm,minimum height=1cm,line width=1pt,rounded corners]at (2,-2) (1) {{\small Remote Controller $C^0$}}; 

\begin{scope}[shift={(0.25,0)}]
\node [rectangle,draw,minimum width=2.2cm,minimum height=1cm,line width=1pt,rounded corners,text width=2.2cm]at (-0.25,0) (2) {{\small Local Controller $C^{N}$}}; 
\node [rectangle,draw,minimum width=2cm,minimum height=1cm,line width=1pt,rounded corners]at (-1,2) (3) {Plant $N$}; 

\path[thick,->,>=stealth,line width=1pt,dashed]
           (-0.3,0.5) edge node {}   (-0.3,1.5) 
           (-1.75,-1.5) edge node {}   (-1.75,1.5) 
     ;
      
      \path[thick,->,>=stealth,line width=1pt]
           (-0.6,-1.5) edge node {}   (-0.6,-0.5) 
      ;
      
    \path[thick,->,>=stealth, shift left=.30ex,line width=1pt]
        (-0.6,1.5) edge node {}   (-0.6,0.5); 

\node[] at (.1,1) {$U_t^{N}$};
\node[] at (-0.9,1) {$X^N_t$};
\node[] at (-2.1,0) {$U_t^0$};
\node[] at (.3,-1) {$\Gamma_t^N$};
\node[] at (-1,-.85) {$Z_t^{1:N}$};
\node[] at (-1,-1.25) {$U_{t-1}^0$};
\end{scope}

\begin{scope}[shift={(-0.25,0)}]
\node [rectangle,draw,minimum width=2.2cm,minimum height=1cm,line width=1pt,rounded corners,text width=2.2cm]at (5.75,0) (2) {{\small Local Controller $C^{1}$}}; 
\node [rectangle,draw,minimum width=2.2cm,minimum height=1cm,line width=1pt,rounded corners]at (5,2) (4) {Plant $1$}; 

\path[thick,->,>=stealth,line width=1pt,dashed]
           (5.7,0.5) edge node {}   (5.7,1.5)           
           (4.25,-1.5) edge node {}   (4.25,1.5) 
      ;
      
      \path[thick,->,>=stealth,line width=1pt]
           (5.4,-1.5) edge node {}   (5.4,-.5)      
      ;
      
    \path[thick,->,>=stealth, shift left=.30ex,line width=1pt]
        (5.4,1.5) edge node {}   (5.4,0.5) ; 

\node[] at (6.1,1) {$U_t^{1}$};
\node[] at (5.1,1) {$X^1_t$};
\node[] at (3.9,0) {$U_t^0$};
\node[] at (6.3,-1) {$\Gamma_t^1$};
\node[] at (5.0,-.85) {$Z_t^{1:N}$};
\node[] at (5.0,-1.25) {$U_{t-1}^0$};
\end{scope}

\node[] at (2.5,2) {$\ldots$};
\node[] at (2.5,0) {$\ldots$};
\end{scope}
\draw [line width=0.8pt,addarrow] (-5.5,0.5) to[cspst] (-5.5,1.5); 
\draw [line width=0.8pt,addarrow] (0,0.5) to[cspst] (0,1.5); 
%
\end{tikzpicture}
\caption{System model. The binary random variables $\Gamma_t^{1:N}$ indicate whether packets are transmitted successfully. Dashed lines indicate control links and solid lines indicate communication links.}
\label{fig:SystemModel}
\end{center}
\end{figure}
\end{singlespace}

\subsubsection*{Communication Model}
\label{subs:comm_model}
The communication model is similar to the one described in  Section \ref{subs:comm_model_2C}. In particular, for each $n \in \mathcal{N}$, there is an unreliable link with link failure probability $p^n$ from the local controller $C^{n}$ to the remote controller $C^0$. The local controller $C^{n}$ uses its unreliable  link to send the  state  $X_t^n$ of its co-located plant to the remote controller. The state of this link at time $t$ is described by a Bernoulli random variable $\Gamma_t^n$ and the output of this link at time $t$ is denoted by $Z_t^n$, where $\Gamma_t^n$ and $Z_t^n$ are described by equations  similar to \eqref{eq:gamma} and \eqref{Model:channel_2C}. We assume that $\Gamma_{0:t}^{1:N}$, $t \geq 0$, are independent random variables and that they are independent of $W_{0:t}^{1:N}$, $t \geq 0$.

Unlike the unreliable uplinks, we assume that there exist perfect links from $C^0$ to $C^{n}$, for  each $n \in \mathcal{N}$. Therefore, $C^0$ can share $Z_t^{1:N}$ and $U_{t-1}^0$ with all local controllers $C^{1:N}$.
All controllers select their control actions at time $t$ after observing $Z_t^{1:N}$  and  $U_{t-1}^0$. 
We assume that for each $n \in \mathcal{N}$, the links from controllers $C^{n}$ and $C^0$ to  plant $n$ are perfect.


\subsubsection*{Information structure and cost}
\label{subs:info_cost_2C}
Let $H^{n}_t$ denote the information available to controller $C^{n}$, $n \in {\color{black} \overline{\mathcal{N}}}$,  at time $t$. 
Then,
\begin{align}
H^{n}_t&= \{X^n_{0:t}, U^{n}_{0:t-1}, Z_{0:t}^{1:N}, U^0_{0:t-1}\}, \hspace{2mm} n \in \mathcal{N}, \notag \\
H^0_t &= \{Z_{0:t}^{1:N}, U^0_{0:t-1}\}. 
\label{Model:info}
\end{align}
Let $\mathcal{H}^{n}_t$ be the space of all possible realizations of $H_t^n$.
Then, $C^{n}$'s actions are selected according to
\begin{align}
U^{n}_t &= g^{n}_t(H^{n}_t), \hspace{2mm}  n \in {\color{black} \overline{\mathcal{N}}},
\label{Model:strategy}
\end{align}
where $g^{n}_t:\mathcal{H}^{n}_t{\color{black} \to } \R^{d_U^n}$ is a Borel measurable mapping.
We use $g:=(g^0_{0},g^0_1,\dots,g^1_{0},g^1_1,\dots, g^N_{0},g^N_1,\dots,)$ to collectively denote the control strategies of all $N+1$ controllers.

The instantaneous cost $c_t(X_t, U_t)$ of the system is a quadratic function similar to the one described in \eqref{Model:cost_2C} where $X_t = \vecc(X^{1:N}_t), U_t = \vecc(U^{0:N}_t)$ and 
%
\begin{align}
Q= \begin{bmatrix}
Q^{11} &\ldots &Q^{1N} \\
\vdots & \ddots & \vdots \\
Q^{N1} & \ldots & Q^{NN}
\end{bmatrix}, R= \begin{bmatrix}
R^{00} &R^{01} &\ldots &R^{0N} \\
R^{10} &R^{11} &\ldots &R^{1N} \\
\vdots &\vdots & \ddots & \vdots \\
R^{N0}  & \ldots & \ldots & R^{NN} 
\end{bmatrix}.
\label{matrix_structure}
\end{align}
$Q$ is a symmetric positive semi-definite (PSD) matrix and $R$ is a symmetric positive definite (PD) matrix. 

\subsubsection*{Problem Formulation}
\label{subs:prob_formulaiton}
Let $\mathcal{G}$ denote the set of all possible control strategies of controllers $C^{0},\ldots,C^N$.
The performance of control strategies $g$ over a finite horizon $T$ is measured by $J_T(g)$ defined in \eqref{Model:J_T_2C}.
For the decentralized networked control system (DNCS) described above, we consider the problem of strategy optimization  over finite and infinite time horizons. These two problems are formally defined below.
\begin{problem}
\label{problem_finite}
For the DNCS described above, 
solve the following strategy optimization problem:
\begin{align}
\inf_{g \in\mathcal{G}} 
J_T(g)
\label{Model:obj_finite}
\end{align}
\end{problem}

\begin{problem}
\label{problem_infinite}
For the DNCS described above, 
solve the following strategy optimization problem:
\begin{align}
&\inf_{g \in\mathcal{G}} 
J_{\infty}(g)
:= \inf_{g \in\mathcal{G}} 
\limsup_{T\rightarrow\infty} \frac{1}{T+1} J_{T}(g).
\label{Model:obj_infinite}
\end{align}
\end{problem}

Due to stability issues in the infinite horizon problem, we make the following  assumptions on the system and cost matrices. 
\begin{assumption}
\label{assum:det_stb}
$(A,Q^{1/2})$ is detectable and $(A,B)$ is stabilizable. 
\end{assumption}

\begin{assumption}
\label{assum:det_stb_2}
$\big(A^{nn},(Q^{nn})^{1/2} \big)$ is detectable for all $n \in \mathcal{N}$.
\end{assumption}

The finite horizon strategy optimization problem (Problem \ref{problem_finite}) has been solved in \cite{asghari_ouyang_nayyar_tac_2018}.  We summarize the finite horizon results below. 
\begin{lemma}(\cite[Theorem 2]{asghari_ouyang_nayyar_tac_2018})
\label{lm:opt_strategies}
The optimal control strategies of Problem \ref{problem_finite} are given by
\begin{align}
\bmat{U^{0*}_t \\ U^{1*}_t \\ \vdots \\ U^{N*}_t} = K_t^0 \hat X_t + 
\begin{bmatrix}
\mathbf{0} & \ldots & \mathbf{0} \\
&&\\
K_t^{1} &     &  \text{\huge0}  \\
 &\ddots     &   \\
  \text{\huge0} &  & K_t^{N}
\end{bmatrix} \left(X_t - \hat X_t\right), 
\label{eq:opt_U}
\end{align}
where $\hat X_t = \vecc(\hat X^{1:N}_t) = \vecc(\ee[X^1_t|H^0_t], \ldots, \ee[X^N_t|H^0_t]) = \ee[X_t|H^0_t]$ is the  estimate (conditional expectation) of $X_t$ based on the common information $H^0_t$.  The estimate $\hat X_t$ can be computed recursively as follows: for $n \in \mathcal{N}$,
\begin{align}
\hat X_0^n = & 0, 
\label{eq:estimator_0}
\\
\hat X_{t+1}^n
= &\left\{
\begin{array}{ll}
 \big([A]_{n,:} + [B]_{n,:} K^0_t\big)\hat X_t& \text{ if }Z_{t+1}^n= \emptyset,\\
 X_{t+1}^n & \text{ if }Z_{t+1}^n = X_{t+1}^n.
\end{array}\right. 
\label{eq:estimator_t}
\end{align}
The gain matrices $K_t^0$ and $K_t^{n}$, $n \in \mathcal{N}$, are given by
\begin{align}
& K^0_t = \Psi(P_{t+1},R,A,B),
\label{eq:K_finite}
\\
&K_t^n = \Psi \big ((1-p^n)[P_{t+1}^0]_{n,n}+p^n P_{t+1}^{n},R^{nn},A^{nn},B^{nn} \big),
\label{eq:tildeK_finite}
\end{align}
where $P^0_t = \begin{bmatrix}
[P_{t}^0]_{1,1} &\ldots &[P_{t}^0]_{1,N} \\
\vdots & \ddots & \vdots \\
[P_{t}^0]_{N,1} & \ldots & [P_{t}^0]_{N,N}
\end{bmatrix} \in \mathbb{R}^{(\sum_{n=1}^N d^n_X)\times (\sum_{n=1}^N d^n_X)}$ and $P_t^{n} \in \mathbb{R}^{ d^n_X\times d^n_X} $, for $n \in \mathcal{N}$, are PSD matrices obtained recursively as follows:  
where $P_t^0$ and $P_t^{n}$,  $n \in \mathcal{N}$, are PSD matrices obtained recursively as follows:
\begin{align}
&P_{T+1}^0 = \mathbf{0}, \quad P_{T+1}^{n} = \mathbf{0}, \label{eq:P_N_init}\\
&P_t^0 =  \Omega(P_{t+1}^0,Q,R,A,B),
\label{eq:P_finite}
\\
&P_t^{n} =  \Omega \big((1-p^n)[P_{t+1}^0]_{n,n}+p^n P_{t+1}^{n},Q^{nn},R^{nn},A^{nn},B^{nn}\big).
\label{eq:tildeP_finite}
\end{align}
Furthermore, the optimal cost is given by
\begin{align}
J^*_{T} = & \sum_{t = 0}^T \sum_{n = 1}^N \tr \Big((1-p^n)[P_{t+1}^0]_{n,n}+p^n P_{t+1}^{n} \Big).
\label{eq:opcost_finite}
\end{align}
\end{lemma}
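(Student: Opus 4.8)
The plan is to establish Lemma \ref{lm:opt_strategies} via the common information approach \cite{nayyar2013decentralized}, treating the $N$-controller case as a structured generalization of the two-controller result in Lemma \ref{lm:opt_strategies_2C}. First I would introduce the common estimate $\hat X_t = \ee[X_t \mid H^0_t]$ and the deviation $\tilde X_t := X_t - \hat X_t$, and verify that $\hat X_t$ is computable from the common information $H^0_t = \{Z^{1:N}_{0:t}, U^0_{0:t-1}\}$ alone. Writing $\hat X^n_{t+1} = \ee[X^n_{t+1}\mid H^0_{t+1}]$ and using that the deviation has zero conditional mean, $\ee[\tilde X^n_t\mid H^0_t]=0$, together with the candidate affine form $U^n_t = [K^0_t\hat X_t]_n + K^n_t \tilde X^n_t$, a direct computation yields the recursion \eqref{eq:estimator_t}: on a successful transmission $\hat X^n_{t+1}=X^n_{t+1}$, and on a failure $\hat X^n_{t+1} = ([A]_{n,:}+[B]_{n,:}K^0_t)\hat X_t$. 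The crucial structural consequence, using that $A$ is block diagonal and that noises and link states are independent across $n$, is that the deviation decouples: on a failure $\tilde X^n_{t+1} = (A^{nn}+B^{nn}K^n_t)\tilde X^n_t + W^n_t$, while on a success $\tilde X^n_{t+1}=0$. Thus each $\tilde X^n_t$ is a local linear system driven only by $A^{nn},B^{nn},W^n_t$ that resets to zero whenever its uplink is active.

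Second I would cast the problem as the coordinator's centralized problem: a fictitious coordinator observing $H^0_t$ selects the remote action $U^0_t$ and prescriptions $\gamma^n_t\colon X^n_t\mapsto U^n_t$ for the local controllers. For this LQ structure the optimal prescriptions are affine in the private state, so it suffices to optimize over the common component (a function of $\hat X_t$) and the deviation gains $K^n_t$. The sufficient statistic for the coordinator is $(\hat X_t,\Sigma_t)$, where $\Sigma_t = \cov(\tilde X_t\mid H^0_t)$ is block diagonal by the decoupling above. The expected per-stage cost then separates: substituting $X_t = \hat X_t + \tilde X_t$ and $U_t = K^0_t\hat X_t + \mathcal K_t \tilde X_t$, with $\mathcal K_t$ the block-diagonal-plus-zero-first-row gain appearing in \eqref{eq:opt_U}, every cross term carries a factor $\ee[\tilde X_t\mid H^0_t]=0$ and vanishes in expectation, leaving a common quadratic in $\hat X_t$ plus a sum over $n$ of quadratics in $\tilde X^n_t$.

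Third I would run a backward induction from the terminal condition \eqref{eq:P_N_init}, positing a value function of the form $\ee[\hat X_t^\tp P^0_t \hat X_t \mid H^0_t] + \sum_n \tr(\cdot\,\Sigma^n_t) + \text{const}$. The common part is a standard centralized LQR recursion in $(A,B,Q,R)$, which by completing the square produces $K^0_t = \Psi(P^0_{t+1},R,A,B)$ and $P^0_t = \Omega(P^0_{t+1},Q,R,A,B)$, i.e.\ \eqref{eq:K_finite} and \eqref{eq:P_finite}. For the deviation of controller $n$, the cost-to-go branches on the next link outcome: with probability $1-p^n$ the uplink succeeds and $\tilde X^n_{t+1}=0$, so the future deviation cost merges into the common value weighted by the diagonal block $[P^0_{t+1}]_{n,n}$, while with probability $p^n$ the deviation persists and is penalized by $P^n_{t+1}$. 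Averaging these two branches replaces the effective terminal weight by the convex combination $(1-p^n)[P^0_{t+1}]_{n,n} + p^n P^n_{t+1}$; completing the square in the local system $(A^{nn},B^{nn},Q^{nn},R^{nn})$ then yields exactly \eqref{eq:tildeK_finite} and \eqref{eq:tildeP_finite}. Finally, collecting the constant terms produced at each step---each equal to the trace of the effective weight against $\cov(W^n_t)=\mathbf I$---gives the optimal cost \eqref{eq:opcost_finite}.

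The main obstacle is justifying the branching step that produces the convex combination $(1-p^n)[P^0_{t+1}]_{n,n}+p^n P^n_{t+1}$: one must argue rigorously that upon a successful transmission the deviation is annihilated and its continuation cost merges into the common value function through the diagonal block $[P^0_{t+1}]_{n,n}$, whereas upon failure it propagates under the local dynamics and is carried by $P^n_{t+1}$. Making this precise requires tracking the joint evolution of $(\hat X_t,\Sigma_t)$ through the link randomness and verifying that the optimality of affine prescriptions---hence the reduction to the finitely many gains $K^0_t,K^{1:N}_t$---is preserved at every stage of the induction; the remaining computations are the routine algebraic manipulations underlying the $\Omega,\Psi$ operators, exactly as in the two-controller case of Lemma \ref{lm:opt_strategies_2C}.
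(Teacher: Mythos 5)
Your proposal is correct and follows essentially the same route as the paper's source for this result: the paper states Lemma \ref{lm:opt_strategies} without proof, citing \cite[Theorem 2]{asghari_ouyang_nayyar_tac_2018}, which establishes it precisely by the common-information/coordinator dynamic program you sketch (affine prescriptions, the estimator recursion \eqref{eq:estimator_t}, per-subsystem decoupling of the deviations $\tilde X^n_t$ with reset on successful transmission, and backward induction in which branching on the link outcome produces the convex combination $(1-p^n)[P^0_{t+1}]_{n,n}+p^n P^{n}_{t+1}$). Your completion-of-squares and branching computations also mirror the verification the paper itself carries out in proving Claim \ref{claim:cost_optimal_NC} (Appendix \ref{Cost_of_the_Strategies}), including the conditional independence of the subsystem deviations given $H^0_t$ invoked there via \cite[Lemma 3]{asghari_ouyang_nayyar_tac_2018}.
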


\begin{remark}
Note that remote controller's action $U^{0*}_t$ in \eqref{eq:opt_U} is a function  of $\hat{X}_t$ only while the local controller $C^n$'s action $U^{n*}_t$ is a function of both $\hat{X}_t$ and $X_t^n$. Further, as per \eqref{eq:estimator_0} and \eqref{eq:estimator_t}, $\hat{X}_t$ is computed recursively based only on the knowledge of $Z_{0:t}^{1:N}$.
\end{remark}

\subsection{Infinite Horizon Optimal Control}
\label{sec:infinite_N_controllers}
As in Section \ref{sec:infinite_2_controllers},  the infinite horizon  problem (Problem \ref{problem_infinite}) can be solved by answering the following three questions:

\begin{enumerate}[label=(\mylabel{Q}{\arabic*})]
\item Do matrices $P_t^{0},\ldots,P_t^N$, defined in \eqref{eq:P_N_init}-\eqref{eq:tildeP_finite} converge as $t \to -\infty$ to $P_*^0,\ldots, P_*^N$, that satisfy the coupled fixed point  equations \eqref{eq:P_finite_NC_fixed} - \eqref{eq:tildeP_finite_NC_fixed} below?
\begin{align}
&P_*^0 =  \Omega \big(P_{*}^0,Q,R,A,B\big),
\label{eq:P_finite_NC_fixed}
\\
& P_*^{n} =  \Omega \big((1-p^n)[P_{*}^0]_{n,n}+p^n P_{*}^{n},Q^{nn},R^{nn},A^{nn},B^{nn}\big).
\label{eq:tildeP_finite_NC_fixed}
\end{align}
\item If matrices $P_t^{0},\ldots,P_t^N$ converge and we define matrices $K_*^0, \ldots, K_*^N$, using matrices $P_*^{0},\ldots,P_*^N$ as follows, 
\begin{align}
& K_*^0 = \Psi(P_{*}^0,R,A,B),
\label{eq:K_finite_NC_fixed}
\\
& K_*^n = \Psi \big((1-p^n)[P_{*}^0]_{n,n}+p^n P_{*}^n,R^{nn},A^{nn},B^{nn}\big),
\label{eq:tildeK_finite_NC_fixed}
\end{align}
are the following strategies optimal for Problem \ref{problem_infinite}?
\begin{align}
\bmat{U^{0*}_t \\ U^{1*}_t \\ \vdots \\ U^{N*}_t} = K_*^0 \hat X_t + 
\begin{bmatrix}
\mathbf{0} & \ldots & \mathbf{0} \\
&&\\
K_*^{1} &     &  \text{\huge0}  \\
 &\ddots     &   \\
  \text{\huge0} &  & K_*^{N}
\end{bmatrix} \left(X_t - \hat X_t\right), 
\label{eq:opt_U_NC_fixed}
\end{align}
where $\hat{X}_t$ can be computed recursively using \eqref{eq:estimator_0} - \eqref{eq:estimator_t}  by replacing $K^0_t$ with $K^0_*$.

\item If matrices $P_t^{0},\ldots,P_t^N$ do not converge, is it still possible to find control strategies with finite cost for Problem \ref{problem_infinite}?
\end{enumerate}


\subsection{Answering Q1}
As in Section \ref{sec:Q1_2C}, we will answer Q1 by establishing a connection between the recursions for matrices $P_t^{0},\ldots,P^N_t$ in our DNCS problem and the recursions for matrices $P_t^{\diamond}(m)$, $m \in \mathcal{M},$ in the MJLS problem.  One obstacle in making this connection is the fact that the matrices $P_t^{0},\ldots,P^N_t$ in our DNCS problem do not have the same dimensions while the matrices $P_t^{\diamond}(m)$, $m \in \mathcal{M},$ in the MJLS problem  all have the same dimensions. This obstacle was not present in Section \ref{sec:infinite_2_controllers}. To get around this difficulty, we first provide a new representation $\bar P_t^{0},\ldots,\bar P_t^{N}$ for the matrices $P_t^{0},\ldots,P^N_t$ in our DNCS problem such that the new  matrices $\bar P_t^{0},\ldots,\bar P_t^{N}$ all have the same dimensions.


\begin{lemma}
\label{lm:opt_strategies_new_rep}
Define matrices $\bar P_t^{n} \in \mathbb{R}^{(\sum_{k=1}^N d^k_X)\times (\sum_{k=1}^N d^k_X)}$, $n =0,1,\ldots,N$, recursively as follows:
\begin{align}
&\bar P_{T+1}^n = \mathbf{0}, \label{eq:barP_init}\\
&\bar P_t^0 = \Omega(\bar P_{t+1}^0,Q,R,A,B),
\label{eq:barP_finite_0}
\\
&\bar P_t^{n} = \Omega \big((1-p^n)\bar P_{t+1}^0+p^n \bar P_{t+1}^n, 
 \mathcal{L}_{zero}(Q,Q^{nn},n,n),  \notag \\
& \hspace{1.4cm}  \mathcal{L}_{iden}(R,R^{nn},n+1), \mathcal{L}_{zero}(A,A^{nn},n,n), \notag \\
 & \hspace{1.4cm} \mathcal{L}_{zero}(B,B^{nn},n,n+1) \big),
\label{eq:barP_finite}
\end{align}
where the operators $\mathcal{L}_{zero}$ and $\mathcal{L}_{iden}$ are as defined in Section \ref{sec:operators}.
Then, for $ t \leq T+1$,
\begin{align}
\label{new_rep1}
&\bar P_t^0 =  P_t^0, \\
& \bar P_t^{n} = \mathcal{L}_{zero}(P^0_t, P_t^{n},n,n), ~~n=1,\ldots,N.
\label{new_rep2}
\end{align}
Consequently, matrices $P_t^{0},\ldots,P^N_t$ converge as $t \to -\infty$ if and only if matrices $\bar P_t^{0},\ldots,\bar P_t^{N}$ converge as $t \to -\infty$.
\end{lemma}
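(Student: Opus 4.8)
The plan is to prove the two identities \eqref{new_rep1}--\eqref{new_rep2} by backward induction on $t$, running from $t=T+1$ down, and then read off the convergence equivalence from the resulting explicit formula. For the base case $t=T+1$, both the original and the new recursions are initialized at $\mathbf 0$, so $\bar P_{T+1}^0=\mathbf 0=P_{T+1}^0$ and $\bar P_{T+1}^n=\mathbf 0=\mathcal{L}_{zero}(P_{T+1}^0,P_{T+1}^n,n,n)$, and both claims hold trivially. For the inductive step I would assume \eqref{new_rep1}--\eqref{new_rep2} at time $t+1$ and establish them at time $t$. The case $n=0$ is immediate: since $\bar P_t^0=\Omega(\bar P_{t+1}^0,Q,R,A,B)$ and, by hypothesis, $\bar P_{t+1}^0=P_{t+1}^0$, comparison with \eqref{eq:P_finite} gives $\bar P_t^0=P_t^0$ at once.

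The real work is the case $n\in\mathcal N$. Write $\tilde P:=(1-p^n)\bar P_{t+1}^0+p^n\bar P_{t+1}^n$ and, using the induction hypothesis, observe that its $(n,n)$ block is exactly $[\tilde P]_{n,n}=(1-p^n)[P_{t+1}^0]_{n,n}+p^n P_{t+1}^n$, i.e.\ the first argument of the ``small'' operator in \eqref{eq:tildeP_finite}. The key structural fact I would exploit is that $\tilde A:=\mathcal{L}_{zero}(A,A^{nn},n,n)$, $\tilde B:=\mathcal{L}_{zero}(B,B^{nn},n,n+1)$ and $\tilde Q:=\mathcal{L}_{zero}(Q,Q^{nn},n,n)$ are each supported on a single block, while $\tilde R:=\mathcal{L}_{iden}(R,R^{nn},n+1)$ is block diagonal, equal to the identity except for its $(n+1,n+1)$ block $R^{nn}$. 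Expanding $\Omega(\tilde P,\tilde Q,\tilde R,\tilde A,\tilde B)$ term by term and tracking which blocks survive, one finds that $\tilde A^\tp\tilde P\tilde A$ has only its $(n,n)$ block nonzero (equal to $(A^{nn})^\tp[\tilde P]_{n,n}A^{nn}$), $\tilde B^\tp\tilde P\tilde B$ has only its $(n+1,n+1)$ block nonzero (equal to $(B^{nn})^\tp[\tilde P]_{n,n}B^{nn}$), and $\tilde B^\tp\tilde P\tilde A$ (resp.\ $\tilde A^\tp\tilde P\tilde B$) is supported on the single block $(n+1,n)$ (resp.\ $(n,n+1)$). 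Because $\tilde R$ contributes the identity on every other diagonal block, $\tilde R+\tilde B^\tp\tilde P\tilde B$ is block diagonal and its inverse agrees with the identity off the $(n+1,n+1)$ block, where it equals $(R^{nn}+(B^{nn})^\tp[\tilde P]_{n,n}B^{nn})^{-1}$.

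Assembling these pieces, the triple product $\tilde A^\tp\tilde P\tilde B(\tilde R+\tilde B^\tp\tilde P\tilde B)^{-1}\tilde B^\tp\tilde P\tilde A$ is supported on the $(n,n)$ block alone, so $\Omega(\tilde P,\tilde Q,\tilde R,\tilde A,\tilde B)$ vanishes off the $(n,n)$ block and its $(n,n)$ block collapses precisely to $\Omega\big([\tilde P]_{n,n},Q^{nn},R^{nn},A^{nn},B^{nn}\big)=\Omega\big((1-p^n)[P_{t+1}^0]_{n,n}+p^n P_{t+1}^n,Q^{nn},R^{nn},A^{nn},B^{nn}\big)=P_t^n$ by \eqref{eq:tildeP_finite}. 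This is exactly $\bar P_t^n=\mathcal{L}_{zero}(P_t^0,P_t^n,n,n)$, closing the induction. The convergence equivalence then follows from the formulas themselves: $\bar P_t^0=P_t^0$ and $[\bar P_t^n]_{n,n}=P_t^n$ with all other blocks of $\bar P_t^n$ identically zero, so $\{P_t^0,\dots,P_t^N\}$ converges if and only if $\{\bar P_t^0,\dots,\bar P_t^N\}$ does. I expect the only delicate point to be the block bookkeeping in this localization argument --- in particular, keeping the controller indices (shifted by one, hence the $n+1$) consistent with the plant indices, and verifying that $\tilde R+\tilde B^\tp\tilde P\tilde B$ is genuinely block diagonal so that inverting the full $d_U\times d_U$ matrix reduces to inverting the single block $R^{nn}+(B^{nn})^\tp[\tilde P]_{n,n}B^{nn}$; once that reduction is justified, the rest is routine.
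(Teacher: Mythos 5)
Your proposal is correct and follows essentially the same route as the paper's proof: backward induction anchored at $t=T+1$, with the inductive step carried out by tracking the block supports of $\mathcal{L}_{zero}(A,A^{nn},n,n)$, $\mathcal{L}_{zero}(B,B^{nn},n,n+1)$, and $\mathcal{L}_{zero}(Q,Q^{nn},n,n)$, and reducing the inversion of the block-diagonal matrix $\mathcal{L}_{iden}(R,R^{nn},n+1)+\tilde B^\tp\tilde P\tilde B$ to the single block $R^{nn}+(B^{nn})^\tp[\tilde P]_{n,n}B^{nn}$ --- exactly the terms the paper labels $\mathbb{T}_1,\mathbb{T}_2,\mathbb{T}_3$. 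Your localization of the triple product to the $(n,n)$ block, the collapse to $\Omega\big((1-p^n)[P_{t+1}^0]_{n,n}+p^nP_{t+1}^n,Q^{nn},R^{nn},A^{nn},B^{nn}\big)$, and the resulting convergence equivalence all match the paper's argument.
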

\begin{proof}
See Appendix \ref{proof_lm:opt_strategies_new_rep} for a proof.
\end{proof}

%
We can now proceed with constructing an auxiliary MJLS.
\subsubsection*{\textbf{Step 1: Constructing an auxiliary MJLS}}
Consider an auxiliary MJLS where the set $\mathcal{M}$ of modes is $ \{0,1,\ldots,N\}$.  Then, we have the following $N+1$ sequences of  matrices, $P^{\diamond}_t(0),P^{\diamond}_t(1),\ldots,P^{\diamond}_t(N)$, defined recursively using \eqref{eq:CARE_init} and \eqref{eq:CARE_finite} for this MJLS:
\begin{align}
&P^{\diamond}_{T+1}(m)  =\mathbf{0}, ~\forall m \in \mathcal{M}, \label{eq:P_N_MJ_init}\\
&P^{\diamond}_t(0) =  \notag \\
&\Omega\big(\sum_{k=0}^N \theta^{0k} P^{\diamond}_{t+1}(k),Q^{\diamond}(0),R^{\diamond}(0),A^{\diamond}(0),B^{\diamond}(0) \big),
\label{P_MJ_cmp_NC_0}
\\
&P^{\diamond}_t(n)=   \notag \\
& \Omega\big(
\sum_{k=0}^N \theta^{nk} P^{\diamond}_{t+1}(k),Q^{\diamond}(n),R^{\diamond}(n),A^{\diamond}(n),B^{\diamond}(n) \big).
\label{P_MJ_cmp_NC_1}
\end{align}
Furthermore, we have the recursions of \eqref{eq:barP_init}-\eqref{eq:barP_finite} for matrices $\bar P_t^{0},\ldots, \bar P_t^N$ in our DNCS problem.
%
By comparing \eqref{P_MJ_cmp_NC_0}-\eqref{P_MJ_cmp_NC_1} with \eqref{eq:barP_finite_0}-\eqref{eq:barP_finite}, we find that the following definitions would make the two sets of equations identical:
\begin{align}
&A^{\diamond}(0) = A, A^{\diamond}(n) = \mathcal{L}_{zero}(A,A^{nn},n,n), \hspace{1mm} n \in \mathcal{N},
\label{A_mj}
\end{align}
\begin{align}
&B^{\diamond}(0) = B, B^{\diamond}(n) =  \mathcal{L}_{zero}(B,B^{nn},n,n+1), n \in \mathcal{N},
\label{B_mj}
\end{align}
\begin{align}
&Q^{\diamond}(0) = Q, Q^{\diamond}(n) =   \mathcal{L}_{zero}(Q,Q^{nn},n,n), \hspace{1mm} n \in \mathcal{N},
\label{Q_mj}
\\
&R^{\diamond}(0) = R, R^{\diamond}(n) =  \mathcal{L}_{iden}(R,R^{nn},n+1), \hspace{1mm} n \in \mathcal{N},
\label{R_mj} 
\\
&\Theta =
\begin{blockarray}{rccccc}
&0 &1 &2  &\ldots &N  \\
\begin{block}{r[ccccc]}
                     0 & 1 & 0 & \ldots &\ldots & 0 \\
                     1 & 1-p^1 & p^1 &\ddots  & & \vdots \\
                     2 & 1-p^2 &0  &p^2  & \ddots& \vdots \\
                     \vdots & \vdots & \vdots  &\ddots  &\ddots & 0 \\
                     N & 1-p^N & 0  &\ldots  &0 &p^N \\
\end{block}
\end{blockarray}.
\label{transition_prob}
\end{align}
To complete the definition of the auxiliary MJLS, we need to define the initial state and mode probability distributions $\pi_{X_0^{\diamond}}$ and $\pi_{M_0}$. These can be defined arbitrarily and for simplicity we assume that the initial state is fixed to be $X^{\diamond}_0 = 0$ and the initial mode $M_0$ is uniformly distributed over the set $\mathcal{M}$.  The following lemma summarizes the above discussion.


\begin{lemma}
\label{equality_recursions_NC}
For the auxiliary MJLS described by \eqref{A_mj}-\eqref{transition_prob}, 
 the coupled recursions in \eqref{eq:P_N_MJ_init}-\eqref{P_MJ_cmp_NC_1} are identical to  the coupled recursions in \eqref{eq:barP_init}-\eqref{eq:barP_finite}.
\end{lemma}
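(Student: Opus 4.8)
The plan is to prove the two families of recursions coincide by direct substitution: I would insert the mode-dependent data \eqref{A_mj}--\eqref{transition_prob} into the generic MJLS recursions \eqref{eq:P_N_MJ_init}--\eqref{P_MJ_cmp_NC_1} and check, argument by argument, that they collapse exactly onto the DNCS recursions \eqref{eq:barP_init}--\eqref{eq:barP_finite}. Since both families are generated by the same operator $\Omega$ from the same terminal condition, it suffices to match the five arguments of $\Omega$ for each mode $m \in \mathcal{M}$ at a generic step and then close a backward induction on $t$ that identifies $P^{\diamond}_t(m)$ with $\bar P_t^m$ for all $t \leq T+1$.

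First I would observe that \eqref{eq:P_N_MJ_init} and \eqref{eq:barP_init} both set every matrix to $\mathbf{0}$ at $t = T+1$, so identifying $P^{\diamond}_{T+1}(m) = \bar P_{T+1}^m$ supplies the induction base. The one substantive simplification is the weighted sum $\sum_{k=0}^N \theta^{mk} P^{\diamond}_{t+1}(k)$ sitting in the first argument of $\Omega$, which I would reduce using the sparsity of $\Theta$ in \eqref{transition_prob}. For $m = 0$ the zeroth row of $\Theta$ is $(1,0,\ldots,0)$, so the sum collapses to $P^{\diamond}_{t+1}(0)$; combined with $A^{\diamond}(0) = A$, $B^{\diamond}(0) = B$, $Q^{\diamond}(0) = Q$, $R^{\diamond}(0) = R$, the recursion \eqref{P_MJ_cmp_NC_0} becomes precisely \eqref{eq:barP_finite_0}. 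For $m = n \geq 1$ the $n$-th row of $\Theta$ has exactly two nonzero entries, $\theta^{n0} = 1-p^n$ and $\theta^{nn} = p^n$, so the sum reduces to the convex combination $(1-p^n)P^{\diamond}_{t+1}(0) + p^n P^{\diamond}_{t+1}(n)$, which matches the first argument of \eqref{eq:barP_finite}.

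It then remains to match the other four arguments of $\Omega$ for mode $n$, and this is immediate from the definitions: $Q^{\diamond}(n) = \mathcal{L}_{zero}(Q,Q^{nn},n,n)$, $R^{\diamond}(n) = \mathcal{L}_{iden}(R,R^{nn},n+1)$, $A^{\diamond}(n) = \mathcal{L}_{zero}(A,A^{nn},n,n)$, and $B^{\diamond}(n) = \mathcal{L}_{zero}(B,B^{nn},n,n+1)$ are literally the cost, control-weight, dynamics, and input matrices appearing in \eqref{eq:barP_finite}. With all five arguments matched for every mode, the induction hypothesis $P^{\diamond}_{t+1}(m) = \bar P_{t+1}^m$ propagates one step backward to time $t$, completing the induction and hence the proof.

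I do not anticipate any genuine obstacle: exactly as in the two-controller analogue (Lemma \ref{equality_recursions_2C}), the statement is pure bookkeeping, the auxiliary MJLS having been reverse-engineered precisely to produce this coincidence. The only point that demands a little care is tracking the index shift $n \mapsto n+1$ between the state-related embeddings $\mathcal{L}_{zero}(\cdot,\cdot,n,n)$ for $A$ and $Q$ and the control-related embeddings $\mathcal{L}_{iden}(R,R^{nn},n+1)$ and $\mathcal{L}_{zero}(B,B^{nn},n,n+1)$; this shift is forced by the block layout of $B$ and $R$ in \eqref{eq:thm_matricesABB} and \eqref{matrix_structure}, where the remote input $U^0$ occupies the first block column so that the local controller $C^n$ sits in block column $n+1$. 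Everything else follows from the deliberately chosen sparsity pattern of $\Theta$.
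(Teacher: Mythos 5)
Your proof is correct and matches the paper's approach: the paper simply asserts the lemma follows by ``straightforward algebraic manipulations,'' and your argument---reducing the weighted sum via the sparsity of $\Theta$ in \eqref{transition_prob}, matching the remaining four arguments of $\Omega$ through the definitions \eqref{A_mj}--\eqref{R_mj}, and closing a backward induction from the common zero terminal condition---is exactly the intended bookkeeping, including the correct handling of the $n \mapsto n+1$ index shift for the $B$ and $R$ embeddings.
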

\begin{proof}
The lemma can be proved by  straightforward algebraic manipulations.
\end{proof}

\subsubsection*{\textbf{Step 2: Using MJLS results to answer Q1}}
Now that we have constructed an auxiliary MJLS where $P^{\diamond}_t(m) = \bar P_t^{m}$ for $m=0,\ldots,N$, we can use the MJLS results about convergence of  matrices $P^{\diamond}_t(m)$ (that is, Lemmas \ref{lm:MJ_infinite} and \ref{lm:ss})  to answer Q1. The following lemma states this result.

\begin{lemma}
\label{lm:pc_NC}
Suppose Assumptions \ref{assum:det_stb} and \ref{assum:det_stb_2} hold. Then, the matrices $ P^0_t,\ldots,  P^N_t$ defined in \eqref{eq:P_N_init}-\eqref{eq:tildeP_finite} converge as $t \to -\infty$ to matrices  $P_*^0,\ldots, P_*^N$ that satisfy the coupled fixed point  equations \eqref{eq:P_finite_NC_fixed}-\eqref{eq:tildeP_finite_NC_fixed} if and only if  $p^n < p_c^n$ for all $n \in \mathcal{N}$, where the critical thresholds $p_c^n, n \in \mathcal{N}$, are given by
\begin{align}
\frac{1}{\sqrt{p_c^n}} =  \min_{K  \in \mathbb{R}^{d^n_U \times d^n_X}}\rho(A^{nn}+B^{nn}K).
\label{eq:pc}
\end{align}
 
\end{lemma}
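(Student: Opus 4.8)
The plan is to leverage the machinery already assembled in this subsection. Lemma~\ref{lm:opt_strategies_new_rep} reduces convergence of $P^0_t,\ldots,P^N_t$ to convergence of the equal-dimension matrices $\bar P^0_t,\ldots,\bar P^N_t$, and Lemma~\ref{equality_recursions_NC} identifies the latter with the Riccati recursions $P^{\diamond}_t(m)$ of the auxiliary MJLS defined by \eqref{A_mj}-\eqref{transition_prob}. Thus it suffices to determine exactly when the MJLS recursions converge. By Lemma~\ref{lm:MJ_infinite}, once the auxiliary MJLS is shown to be stochastically detectable (SD), its Riccati recursions converge to PSD solutions of the DCARE if and only if it is stochastically stabilizable (SS); passing to the limit in the recursions and pulling back through \eqref{new_rep1}-\eqref{new_rep2} shows the limits satisfy \eqref{eq:P_finite_NC_fixed}-\eqref{eq:tildeP_finite_NC_fixed}. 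Hence the lemma reduces to two claims: (a) the auxiliary MJLS is SD under Assumptions~\ref{assum:det_stb} and \ref{assum:det_stb_2}; and (b) it is SS if and only if $p^n < p^n_c$ for all $n \in \mathcal{N}$.

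Both claims I would verify through Lemma~\ref{lm:ss} by exploiting the triangular structure of the mode-transition matrix \eqref{transition_prob}. Because mode $0$ is absorbing and each mode $n\ge 1$ can be reached only from itself, $\Theta^{\tp}$ is block upper-triangular with diagonal entries $1,p^1,\ldots,p^N$. Consequently, for any choice of gains, the matrices $\mathcal{A}_s$ and $\mathcal{A}_d$ of \eqref{eq:bigmatrix_ss}-\eqref{eq:bigmatrix_sd} are themselves block upper-triangular, with diagonal blocks $A_\bullet(0)\otimes A_\bullet(0)$ and $p^n\,A_\bullet(n)\otimes A_\bullet(n)$, $n\in\mathcal N$. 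Therefore $\rho(\mathcal{A}_s)=\max\big(\rho(A_s(0))^2,\max_{n}p^n\rho(A_s(n))^2\big)$, and similarly for $\mathcal{A}_d$. This decoupling is the linchpin: it lets me minimize the spectral radius mode by mode, with $A_\bullet(0)$ depending only on the mode-$0$ gain and each $A_\bullet(n)$ only on the mode-$n$ gain.

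The next step is to compute the per-mode spectral radii. I would show that $A_s(n)=A^{\diamond}(n)+B^{\diamond}(n)K^{\diamond}(n)$ is nonzero only in its $n$-th block row (since $A^{\diamond}(n)$ and $B^{\diamond}(n)$ from \eqref{A_mj}-\eqref{B_mj} are supported there), and dually that $A_d(n)=A^{\diamond}(n)+H^{\diamond}(n)(Q^{\diamond}(n))^{1/2}$ is nonzero only in its $n$-th block column. A short power-sum argument then gives $\tr(A_\bullet(n)^k)=\tr(([A_\bullet(n)]_{n,n})^k)$ for all $k\ge 1$, so the nonzero eigenvalues of $A_\bullet(n)$ are exactly those of its $(n,n)$ block, whence $\rho(A_s(n))=\rho(A^{nn}+B^{nn}\tilde K^n)$ and $\rho(A_d(n))=\rho(A^{nn}+H^{nn}(Q^{nn})^{1/2})$ for the free sub-blocks $\tilde K^n,H^{nn}$. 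For SD, Assumption~\ref{assum:det_stb} lets me pick the mode-$0$ gain so $\rho(A_d(0))<1$ and Assumption~\ref{assum:det_stb_2} lets me pick $H^{nn}$ so $\rho(A_d(n))<1$; since $p^n<1$, every diagonal block of $\mathcal{A}_d$ is Schur, proving SD. For SS, the mode-$0$ block is made Schur by stabilizability of $(A,B)$, while $\min_{\tilde K^n}p^n\rho(A^{nn}+B^{nn}\tilde K^n)^2 = p^n/p^n_c$ by the definition of $p^n_c$ in \eqref{eq:pc}, which is $<1$ exactly when $p^n<p^n_c$. Taking the maximum over modes, $\inf_{K^{\diamond}}\rho(\mathcal{A}_s)<1$ iff $p^n<p^n_c$ for all $n$.

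Assembling the pieces: SD holds unconditionally under the assumptions, so Lemma~\ref{lm:MJ_infinite} applies and yields convergence of $P^{\diamond}_t(m)$ — equivalently of $P^0_t,\ldots,P^N_t$ — if and only if the MJLS is SS, i.e. iff $p^n<p^n_c$ for all $n$. The step I expect to be the main obstacle is the structural analysis of the mode-dependent matrices built from the $\mathcal{L}_{zero}/\mathcal{L}_{iden}$ operators: one must verify carefully that their sparsity patterns force $A_s(n)$ (resp.\ $A_d(n)$) to have a single nonzero block row (resp.\ column) and that the spectral radius then collapses to that of the $(n,n)$ block $A^{nn}+B^{nn}\tilde K^n$. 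Getting this reduction right, and matching the free gain sub-block to the minimization in \eqref{eq:pc}, is precisely what converts the abstract MJLS stability tests into the clean per-link thresholds $p^n_c$.
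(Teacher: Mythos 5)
Your proposal is correct and follows essentially the same route as the paper's proof: the same reduction via Lemmas \ref{lm:opt_strategies_new_rep} and \ref{equality_recursions_NC} to the auxiliary MJLS, the same appeal to Lemmas \ref{lm:MJ_infinite} and \ref{lm:ss}, and the same exploitation of the block upper-triangular structure of $\mathcal{A}_s$ and $\mathcal{A}_d$ induced by \eqref{transition_prob}, with per-mode Schur conditions yielding the thresholds \eqref{eq:pc}. Your explicit power-trace argument showing that $\rho(A_s(n))$ and $\rho(A_d(n))$ collapse to the spectral radius of the $(n,n)$ block merely fills in a step the paper leaves implicit via the displayed structures \eqref{A_c_l} and \eqref{A_c_l_2}.
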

\begin{proof}
See Appendix \ref{proof_lm:pc}.
\end{proof}

\subsection{Answering Q2 and Q3}
Assuming that $P_{t}^n \rightarrow P_*^n$ as $t \rightarrow -\infty$ for $n=0,\ldots,N,$ we want to know whether the control strategies of \eqref{eq:K_finite_NC_fixed}-\eqref{eq:opt_U_NC_fixed} are optimal for Problem \ref{problem_infinite_2C}. The following result shows that these control strategies are indeed optimal.


\begin{lemma}
\label{lm:Q2_NC}
If $P_{t}^n \rightarrow P_*^n$ as $t \rightarrow -\infty$ for $n=0,\ldots,N$, then
\begin{enumerate}
\item Problem \ref{problem_infinite} has finite optimal cost,
\item The strategies described by \eqref{eq:K_finite_NC_fixed}-\eqref{eq:opt_U_NC_fixed} are optimal for Problem \ref{problem_infinite},
\item Under the strategies described by \eqref{eq:K_finite_NC_fixed}-\eqref{eq:opt_U_NC_fixed}, $X_t$ and $(X_t-\hat X_t)$ are mean square stable.
\end{enumerate}
\end{lemma}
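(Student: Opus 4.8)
The plan is to follow the route of the two-controller result (Lemma~\ref{lm:Q2_2C}), now exploiting the block-diagonal structure of \eqref{overall_state_dynamic}--\eqref{eq:thm_matricesABB} to reduce the infinite-horizon analysis to one centralized stability condition together with $N$ decoupled per-plant conditions. I would prove the three claims in the order 1), 3), 2), since achievability in 2) rests on the mean-square stability in 3). For the \emph{lower bound and finiteness (part 1)}, the finite-horizon optimality of Lemma~\ref{lm:opt_strategies} gives $J_T(g)\ge J_T^\ast$ for every admissible $g$, where $J_T^\ast$ is the expression in \eqref{eq:opcost_finite}, hence
\[
J_\infty(g)=\limsup_{T\to\infty}\tfrac{1}{T+1}J_T(g)\ \ge\ \lim_{T\to\infty}\tfrac{1}{T+1}J_T^\ast=:J_\infty^\ast .
\]
Since $P_{t+1}^0\to P_\ast^0$ and $P_{t+1}^n\to P_\ast^n$ by hypothesis, a Ces\`aro argument applied to \eqref{eq:opcost_finite} yields the finite value $J_\infty^\ast=\sum_{n=1}^N\tr\big((1-p^n)[P_\ast^0]_{n,n}+p^n P_\ast^n\big)$, which establishes part 1) and identifies the candidate optimal cost.

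\emph{Stability (part 3)} is the technical core. Writing $e_t:=X_t-\hat X_t$ and substituting the stationary law \eqref{eq:opt_U_NC_fixed} into \eqref{overall_state_dynamic} and the estimator recursion \eqref{eq:estimator_t}, I would first show that the error components decouple across plants,
\[
e_{t+1}^n=(1-\Gamma_{t+1}^n)\big[(A^{nn}+B^{nn}K_\ast^n)\,e_t^n+W_t^n\big],
\]
so that $\{e_t^n\}_{n\in\mathcal N}$ are independent reset-jump systems. The second-moment matrices $S_t^n:=\ee[e_t^n(e_t^n)^\tp]$ then obey the linear recursion $S_{t+1}^n=p^n\big[(A^{nn}+B^{nn}K_\ast^n)S_t^n(A^{nn}+B^{nn}K_\ast^n)^\tp+\mathbf I\big]$, which stays bounded iff $\sqrt{p^n}\,\rho(A^{nn}+B^{nn}K_\ast^n)<1$. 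To obtain these inequalities I would invoke the MJLS connection of Lemma~\ref{lm:pc_NC}: convergence of the Riccati recursions means the limiting matrices form the mean-square stabilizing solution of the auxiliary DCARE, so the induced gains $K_\ast^0,K_\ast^n$ are mean-square stabilizing for the auxiliary MJLS; the sparsity of $A^\diamond(n),B^\diamond(n)$ (from the $\mathcal L_{zero}$ operator) together with the structure of $\Theta$ in \eqref{transition_prob} reduces the condition $\rho(\mathcal A_s)<1$ of Lemma~\ref{lm:ss} to precisely $\rho(A+BK_\ast^0)<1$ and $\sqrt{p^n}\,\rho(A^{nn}+B^{nn}K_\ast^n)<1$ for every $n$. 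This gives $\sup_t\ee[\|e_t\|^2]<\infty$. Rewriting the estimator as $\hat X_{t+1}=(A+BK_\ast^0)\hat X_t+V_t$, where the driving term $V_t$ is mean-square bounded (using stability of $e_t$ and $\cov(W_t)=\mathbf I$), Schur stability of $A+BK_\ast^0$ yields $\sup_t\ee[\|\hat X_t\|^2]<\infty$, and then $X_t=\hat X_t+e_t$ is mean-square stable.

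\emph{Achievability (part 2).} With stability in hand I would compute the long-run average cost of $g^\ast$ directly. Substituting $X_t=\hat X_t+e_t$ and \eqref{eq:opt_U_NC_fixed} into the quadratic cost, and using the orthogonality $\ee[\hat X_t e_t^\tp]=\mathbf 0$ together with the independence and block structure of the $e_t^n$, the per-stage cost $\ee[c(X_t,U_t)]$ splits into a centralized term driven by $\cov(\hat X_t)$ and per-plant terms driven by $S_t^n$; all of these converge as $t\to\infty$ by the recursions above, so their Ces\`aro average equals their limit. The remaining step is to verify, using the fixed-point equations \eqref{eq:P_finite_NC_fixed}--\eqref{eq:tildeP_finite_NC_fixed} and the definitions \eqref{eq:K_finite_NC_fixed}--\eqref{eq:tildeK_finite_NC_fixed}, that this limit equals $J_\infty^\ast$; combined with the lower bound this gives $J_\infty(g^\ast)=J_\infty^\ast=\inf_g J_\infty(g)$, proving 2).

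I expect the main obstacle to be the stability step: tying convergence of the DNCS recursions to the \emph{specific} stabilizing gains $K_\ast^0,K_\ast^n$ (rather than merely the existence of stabilizing gains asserted in Lemma~\ref{lm:ss}), and carefully exploiting the sparsity of the $\mathcal L_{zero}$/$\mathcal L_{iden}$ operators so that the single condition $\rho(\mathcal A_s)<1$ decouples into the per-plant spectral-radius bounds. Once this is done, matching the achieved average cost to the trace formula $J_\infty^\ast$ is a routine, if lengthy, algebraic computation built on the Riccati fixed-point identities.
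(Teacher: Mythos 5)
Your proposal is correct in outline, and your lower bound (via $J_T(g)\ge J_T^\ast$ plus a Ces\`aro limit of \eqref{eq:opcost_finite}) and your stability argument for part 3 (per-plant error decoupling, the i.i.d.-jump second-moment recursion, and extraction of the conditions $\rho(A+BK_\ast^0)<1$ and $\sqrt{p^n}\,\rho(A^{nn}+B^{nn}K_\ast^n)<1$ from the stabilizing DCARE solution of the auxiliary MJLS via the sparsity identity $\bar K_\ast^n=\mathcal{L}_{zero}(K_\ast^0,K_\ast^n,n+1,n)$) are essentially the paper's own arguments. Where you genuinely diverge is the achievability step (part 2). The paper does not compute steady-state covariances at all: it proves an exact finite-horizon verification identity (Claim~\ref{claim:cost_optimal_NC}), $J_T(g^\ast)=(T+1)\tr(\Lambda_\ast)-\ee^{g^\ast}[V_{T+1}]$ with $V_t=\hat X_t^\tp P_\ast^0\hat X_t+\sum_n\tr\big(P_\ast^n\cov(X_t^n|H_t^0)\big)$, obtained by a per-stage telescoping computation built on the Riccati fixed points. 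Since $V_{T+1}\ge 0$, this yields $J_\infty(g^\ast)\le\tr(\Lambda_\ast)$ immediately, \emph{without} invoking mean-square stability; hence the paper proves parts 1 and 2 independently of part 3, whereas your route forces the order 1), 3), 2) and makes the cost bound contingent on convergence of the joint second moments of $(\hat X_t, e_t)$. Your route does close: the cross terms vanish by orthogonality of the estimation error and by the conditional independence of the $e_t^n$ across plants, the covariance recursions converge under the spectral conditions you derive, and the final algebraic verification that the steady-state per-stage cost equals $\tr(\Lambda_\ast)$ follows from the same $\Phi$/$\Omega$ identities the paper uses inside Claim~\ref{claim:cost_optimal_NC}. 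What each approach buys: the paper's telescoping identity is shorter, avoids any steady-state existence argument, and cleanly separates optimality from stability; your steady-state computation is more transparent about \emph{why} the cost equals $\tr(\Lambda_\ast)$ (it exhibits the limiting covariances explicitly) but requires additional bookkeeping. One small caveat: your claim that the Ces\`aro limit of $J_T^\ast$ ``establishes part 1'' is loose --- a finite lower bound does not give finite optimal cost; finiteness only follows once your part 2 exhibits $g^\ast$ achieving $\tr(\Lambda_\ast)$, which your ordering implicitly supplies.
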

\begin{proof}
See Appendix \ref{sec:proof_lm_Q2_NC}.
\end{proof}
The following lemma answers Q3.
\begin{lemma}\label{lm:Q3_NC}
If matrices $P_t^{0},\ldots,P_t^{N}$ do not converge as $t \rightarrow -\infty$, then Problem \ref{problem_infinite} does not have finite optimal cost. 
\end{lemma}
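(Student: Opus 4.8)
The plan is to reduce the $N$-controller claim (Lemma~\ref{lm:Q3_NC}) to the already-established two-controller result (Lemma~\ref{lm:Q3}) by an isolation argument: if even one of the matrix sequences $P_t^0,\ldots,P_t^N$ fails to converge as $t\to-\infty$, then some link failure probability must exceed its critical threshold, and I will show that this single overloaded link already forces infinite cost regardless of what the other controllers do. First I would invoke Lemma~\ref{lm:pc_NC}: the hypothesis that $P_t^0,\ldots,P_t^N$ do not all converge is, by that lemma, equivalent to the statement that $p^n \geq p_c^n$ for at least one index $n\in\mathcal N$. Fix such an index $n^\star$. The key structural observation is that the dynamics of plant $n^\star$ in \eqref{Model:system} are \emph{decoupled} from the states of the other plants (the $A$ matrix in \eqref{eq:thm_matricesABB} is block diagonal), so the evolution of $X_t^{n^\star}$ depends only on $X_t^{n^\star}$, on the two control inputs $U_t^{n^\star}$ and $U_t^0$, and on the noise $W_t^{n^\star}$.

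Next I would argue that, for the purpose of lower-bounding the cost contributed by plant $n^\star$, the overall $N$-controller DNCS restricted to plant $n^\star$ behaves exactly like a two-controller DNCS of the kind analyzed in Section~\ref{sec:model_2_controllers}. The remote controller $C^0$ and the local controller $C^{n^\star}$ together play the roles of the two controllers in the earlier model, with system matrices $(A^{n^\star n^\star}, B^{n^\star n^\star}, B^{n^\star 0})$ and link failure probability $p^{n^\star}$. Since the cost matrices $Q,R$ in \eqref{matrix_structure} need only dominate their $(n^\star,n^\star)$ blocks, the total expected cost $J_T(g)$ for the full system is bounded below by the cost accrued from the $X_{t}^{n^\star}$ and $U_t^{n^\star}$ terms alone. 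By Lemma~\ref{lm:Q3} applied to this isolated two-controller subsystem, the condition $p^{n^\star}\geq p^{n^\star}_c$ (equivalently, non-convergence of the corresponding sub-recursion) forces that subsystem's infinite horizon average cost to be infinite. Since the full cost dominates the subsystem cost, $J_\infty(g)=\infty$ for every $g\in\mathcal G$, proving Problem~\ref{problem_infinite} has no finite optimal cost.

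I expect the main obstacle to be making the ``restriction to plant $n^\star$'' rigorous, since the information structures are intertwined: the local controllers $C^{n}$ for $n\neq n^\star$ observe $Z_t^{1:N}$, which includes $Z_t^{n^\star}$, and conversely $C^{n^\star}$ and $C^0$ see the packet states of \emph{all} links. The care needed is to show that this extra information cannot help keep $X_t^{n^\star}$ bounded when $p^{n^\star}\geq p^{n^\star}_c$. The cleanest route is to prove a genuine lower bound directly rather than literally invoking Lemma~\ref{lm:Q3} as a black box: I would give the combined controller $(C^0,C^{n^\star})$ \emph{free access} to the full information $H_t^0\cup H_t^{n^\star}$ and also to all other controllers' actions $U_t^{k}$, $k\neq n^\star$, treating those actions as known additive disturbances to plant $n^\star$'s dynamics. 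Granting strictly more information can only lower the optimal cost, so any lower bound proved under this augmented information remains valid. Under this augmentation the problem for plant $n^\star$ becomes a standard two-controller DNCS with an exogenous (but observed) input, and the threshold argument underlying Lemma~\ref{lm:Q3} — which hinges on $\frac{1}{\sqrt{p^{n^\star}_c}}=\min_K \rho(A^{n^\star n^\star}+B^{n^\star n^\star}K)$ being the largest unreachable mode of $(A^{n^\star n^\star},B^{n^\star n^\star})$ — carries through, because an unreachable mode of plant $n^\star$ cannot be stabilized by $U^{n^\star}_t$ and is unaffected by $U^0_t$ in that mode, so the intermittently-communicated state estimate cannot track it with bounded error once $p^{n^\star}$ exceeds the threshold. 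Combining this with Lemma~\ref{lm:pc_NC} to identify the offending index completes the argument.
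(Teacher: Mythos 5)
Your proposal has a genuine gap at its central step: the claim that the full cost dominates the plant-$n^\star$ subsystem cost. Because $Q$ in \eqref{matrix_structure} is only PSD and carries arbitrary cross blocks $Q^{nm}$, the inequality $X_t^\tp Q X_t \geq (X_t^{n^\star})^\tp Q^{n^\star n^\star} X_t^{n^\star}$ is simply false in general: take $Q = \bmat{1 & -1 \\ -1 & 1}$ and $X_t = (1,1)^\tp$, which gives $X_t^\tp Q X_t = 0 < 1 = (X_t^1)^\tp Q^{11} X_t^1$. The natural repair --- replacing $Q^{n^\star n^\star}$ by the generalized Schur complement of the remaining blocks, which \emph{does} give a valid pointwise lower bound --- destroys exactly what you need downstream: the resulting $\tilde Q$ can be much smaller than $Q^{n^\star n^\star}$, and $(A^{n^\star n^\star}, \tilde Q^{1/2})$ need not be detectable even under Assumption \ref{assum:det_stb_2} (with $Q$ the all-ones matrix, $\tilde Q = 0$ and the relaxed subproblem has zero optimal cost under $U\equiv 0$, so it certifies nothing). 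So the plant-by-plant isolation cannot establish the lemma when $Q$ couples the plants, and this is not a technicality you flagged --- the coupling you worried about was informational, whereas the fatal coupling is in the cost. A secondary problem is your appeal to ``the threshold argument underlying Lemma \ref{lm:Q3}'': Lemma \ref{lm:Q3} is not proved by any estimation/unreachable-mode argument, and the heuristic you give is off --- the threshold $p_c^{n}$ arises from the estimation-error dynamics $\tilde X_{t+1} = (1-\Gamma_{t+1}^{1})\big((A+B^{11}K^1)\tilde X_t + W_t\big)$, in which $U^0_t$ cancels because it is common information; your claim that the unreachable mode ``is unaffected by $U^0_t$'' is unjustified (the remote controller generally \emph{can} actuate that mode through $B^{n^\star 0}$; what it cannot do is track it with bounded error).

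The paper's proof avoids all of this by never leaving the coupled Riccati recursions of the full system. It sets $Y_k^n = P^n_{T+1-k}$, uses monotonicity of the operator $\Omega$ (Lemma \ref{lm:KPrelation_0}, part (ii)) to conclude each sequence $\{Y_k^n\}$ is nondecreasing in the PSD order, so non-convergence of any of them forces unboundedness of $\Lambda_k = \sum_{n=1}^N \big((1-p^n)[Y_k^0]_{n,n} + p^n Y_k^n\big)$; then, by the finite-horizon optimal cost formula \eqref{eq:opcost_finite}, for every strategy $g$,
\begin{align}
J_\infty(g) \;\geq\; \limsup_{T\to\infty}\frac{1}{T+1}J_T^* \;=\; \limsup_{T\to\infty}\frac{1}{T+1}\sum_{k=0}^T \tr(\Lambda_k) \;=\; \infty. \notag
\end{align}
Note that this argument already prices the coupled cost correctly (the cross blocks of $Q$ are absorbed into the Riccati solutions), requires no identification of an offending link via Lemma \ref{lm:pc_NC} --- whose invocation in your proposal silently imports Assumptions \ref{assum:det_stb} and \ref{assum:det_stb_2}, which the lemma statement does not carry --- and needs no resolution of the informational-isolation question at all.
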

\begin{proof}
See Appendix \ref{sec:proof_lm_Q3_NC}.
\end{proof}

Now that we have answered  Q1, Q2 and Q3, we can summarize our results for the infinite horizon DNCS problem (Problem \ref{problem_infinite}).

\subsection{Summary of the Infinite Horizon Results}
Based on the answers to Q1-Q3, the following theorem summarizes our results for Problem   \ref{problem_infinite}.
\begin{theorem}
\label{thm:DC_infinite_NC}
Suppose Assumptions \ref{assum:det_stb} and \ref{assum:det_stb_2} hold. Then, 
\begin{enumerate}[(i)]
\item Problem \ref{problem_infinite} has finite optimal cost if and only if
for all $n \in \mathcal{N}$, $p^n < p_c^n$ where the critical threshold $p_c^n$ is given by \eqref{eq:pc}.

\item If $p^n < p_c^n$ for all $n \in \mathcal{N}$, there exist symmetric positive semi-definite matrices $P_*^0,\ldots, P_*^N$ that satisfy \eqref{eq:P_finite_NC_fixed}-\eqref{eq:tildeP_finite_NC_fixed} and the optimal strategies for Problem \ref{problem_infinite} are given by
\begin{align}
\bmat{U^{0*}_t \\ U^{1*}_t \\ \vdots \\ U^{N*}_t} = K_*^0 \hat X_t + 
\begin{bmatrix}
\mathbf{0} & \ldots & \mathbf{0} \\
&&\\
K_*^{1} &     &  \text{\huge0}  \\
 &\ddots     &   \\
  \text{\huge0} &  & K_*^{N}
\end{bmatrix} \left(X_t - \hat X_t\right), 
\label{eq:opt_U_NC_thm}
\end{align}
where the  estimate $\hat X_t$ can be computed recursively using \eqref{eq:estimator_0}-\eqref{eq:estimator_t} by replacing $K^0_t$ with $K^0_*$ and the gain matrices $K_*^0, \ldots,K_*^N$ are given by 
\begin{align}
& K_*^0 = \Psi(P_{*}^0,R,A,B),
\label{eq:K_finite_NC_final}
\\
& K_*^n = \Psi \big((1-p^n)[P_{*}^0]_{n,n}+p^n P_{*}^n,R^{nn},A^{nn},B^{nn}\big).
\label{eq:tildeK_finite_NC_final}
\end{align}
\item If $p^n < p_c^n$ for all $n \in \mathcal{N}$, then under the strategies described in part (ii) above, $X_t$ and $(X_t-\hat X_t)$ are mean square stable.
\end{enumerate}
\end{theorem}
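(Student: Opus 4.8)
The plan is to assemble the theorem directly from Lemmas \ref{lm:pc_NC}, \ref{lm:Q2_NC}, and \ref{lm:Q3_NC}, exactly mirroring the route used for Theorem \ref{thm:DC_infinite_2C}. The key conceptual link is Lemma \ref{lm:pc_NC}, which under Assumptions \ref{assum:det_stb} and \ref{assum:det_stb_2} equates convergence of the coupled recursions $P_t^0,\ldots,P_t^N$ to the condition $p^n < p_c^n$ for every $n \in \mathcal{N}$, with thresholds given by \eqref{eq:pc}. Thus the whole proof hinges on translating ``convergence of the Riccati recursions'' into ``finiteness/optimality of cost'' and back.

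First I would establish part (i). For the ``if'' direction, assume $p^n < p_c^n$ for all $n$. Lemma \ref{lm:pc_NC} then guarantees that $P_t^0,\ldots,P_t^N$ converge to $P_*^0,\ldots,P_*^N$ satisfying \eqref{eq:P_finite_NC_fixed}-\eqref{eq:tildeP_finite_NC_fixed}, and part 1 of Lemma \ref{lm:Q2_NC} immediately yields a finite optimal cost. For the converse I would argue by contraposition: if $p^n \geq p_c^n$ for some $n$, then Lemma \ref{lm:pc_NC} says the recursions fail to converge, whence Lemma \ref{lm:Q3_NC} gives that Problem \ref{problem_infinite} has no finite optimal cost. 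Together these two implications deliver the ``if and only if'' of part (i).

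Next, for part (ii), the hypothesis $p^n < p_c^n$ for all $n$ again invokes Lemma \ref{lm:pc_NC} to produce the symmetric PSD fixed-point matrices $P_*^0,\ldots,P_*^N$; the optimality of the time-invariant strategies \eqref{eq:opt_U_NC_thm} (with gains \eqref{eq:K_finite_NC_final}-\eqref{eq:tildeK_finite_NC_final} and the estimator recursion of \eqref{eq:estimator_0}-\eqref{eq:estimator_t}) is then exactly part 2 of Lemma \ref{lm:Q2_NC}. Part (iii), the mean-square stability of $X_t$ and $X_t - \hat X_t$ under these strategies, is likewise part 3 of Lemma \ref{lm:Q2_NC}.

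The assembly itself is routine once the three lemmas are in hand; the genuine difficulty is pushed into the proof of Lemma \ref{lm:pc_NC}. There the main obstacle is the dimension mismatch among $P_t^0,\ldots,P_t^N$, which prevents a direct MJLS embedding. This is resolved by first passing to the equidimensional representation $\bar P_t^n$ of Lemma \ref{lm:opt_strategies_new_rep}, then building the auxiliary MJLS of \eqref{A_mj}-\eqref{transition_prob} so that $P_t^\diamond(m)=\bar P_t^m$, and finally invoking Lemmas \ref{lm:MJ_infinite} and \ref{lm:ss} to characterize stochastic stabilizability and detectability and thereby extract the decoupled per-controller threshold \eqref{eq:pc}. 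At the theorem level, the only point requiring care is to keep the ``only if'' direction correctly aligned with the contrapositive, so that the threshold condition and the finiteness of cost match up cleanly.
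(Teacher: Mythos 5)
Your proposal is correct and follows exactly the paper's route: Theorem \ref{thm:DC_infinite_NC} is proved there by directly combining Lemmas \ref{lm:pc_NC}, \ref{lm:Q2_NC}, and \ref{lm:Q3_NC}, just as you do (with the ``only if'' of part (i) handled by the same contrapositive use of the equivalence in Lemma \ref{lm:pc_NC} together with Lemma \ref{lm:Q3_NC}). Your remarks on where the real work lies --- the equidimensional representation of Lemma \ref{lm:opt_strategies_new_rep} and the auxiliary MJLS embedding behind Lemma \ref{lm:pc_NC} --- accurately reflect the paper's structure as well.
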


\begin{corollary}
Suppose the local controllers are just  sensors (i.e., $B^{nn} = 0$ for $n=1,\ldots,N$) and the remote controller is the only controller present. Then, if $\rho(A) < 1/\sqrt{p^n}$ for all $n=1,\ldots,N$, the optimal controller of this multi-sensor,  single-controller NCS is given by $U^{0*}_t$ in \eqref{eq:opt_U_NC_thm}, and the corresponding closed-loop system is mean-square stable.
\end{corollary}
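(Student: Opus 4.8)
The plan is to obtain this corollary as a specialization of Theorem \ref{thm:DC_infinite_NC} to the degenerate case $B^{nn}=\mathbf 0$, $n\in\mathcal N$, so that almost all of the work reduces to evaluating the theorem's objects at $B^{nn}=\mathbf 0$. First I would specialize the critical-threshold formula \eqref{eq:pc}: since $A^{nn}+B^{nn}K=A^{nn}$ for every $K$ when $B^{nn}=\mathbf 0$, the minimization is vacuous and $1/\sqrt{p_c^n}=\min_K\rho(A^{nn}+B^{nn}K)=\rho(A^{nn})$, i.e. $p_c^n=\rho(A^{nn})^{-2}$. Because $A$ is block diagonal with diagonal blocks $A^{nn}$ by \eqref{eq:thm_matricesABB}, we have $\rho(A^{nn})\le\rho(A)$, so the hypothesis $\rho(A)<1/\sqrt{p^n}$ forces $\rho(A^{nn})\le\rho(A)<1/\sqrt{p^n}$, which is exactly the condition $p^n<p_c^n$ of Theorem \ref{thm:DC_infinite_NC}(i). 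Thus, under the standing Assumptions \ref{assum:det_stb} and \ref{assum:det_stb_2}, the finite-cost/convergence hypothesis of the theorem holds for every $n\in\mathcal N$.

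Second, I would read off the optimal strategy. When the local controllers are genuine sensors they carry no control input, so in the notation of Theorem \ref{thm:DC_infinite_NC} the local error-feedback gains degenerate: by \eqref{Psi}, $K_*^n=\Psi\big((1-p^n)[P_*^0]_{n,n}+p^n P_*^n,\,R^{nn},A^{nn},\mathbf 0\big)=\mathbf 0$, because the factor $B^\tp$ in \eqref{Psi} vanishes. Hence the block-diagonal feedback term multiplying $X_t-\hat X_t$ in \eqref{eq:opt_U_NC_thm} disappears; no $C^n$, $n\in\mathcal N$, applies any action derived from its privately observed state, confirming its role as a pure sensor, and the only surviving control is the remote controller's common-information law $U^{0*}_t$ in \eqref{eq:opt_U_NC_thm}. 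Parts (ii) and (iii) of Theorem \ref{thm:DC_infinite_NC} then deliver, respectively, optimality of this law and mean-square stability of $X_t$ and of the common estimation error $X_t-\hat X_t$. For consistency with the classical result I would note that when $N=1$ we have $A=A^{11}$, so $\rho(A)<1/\sqrt{p^1}$ coincides with $p^1<p_c^1$, recovering Theorem 3 of \cite{Imer2006optimal}.

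Since everything is a substitution into an already-proved theorem, I do not anticipate a genuine obstacle. The one point that warrants care is making the sensor reduction watertight: I would observe that $B^{nn}=\mathbf 0$ not only kills $K_*^n$ but also, via \eqref{Omega}, collapses the local fixed-point equation \eqref{eq:tildeP_finite_NC_fixed} to the Lyapunov-type relation $P_*^n=Q^{nn}+(A^{nn})^\tp\big((1-p^n)[P_*^0]_{n,n}+p^n P_*^n\big)A^{nn}$, whose (unique PSD) solvability is governed precisely by $\rho\big(p^n\,(A^{nn})^\tp\otimes(A^{nn})^\tp\big)<1$, i.e. $\rho(A^{nn})<1/\sqrt{p^n}$. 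This independently reproduces the threshold inherited from the theorem and closes the argument.
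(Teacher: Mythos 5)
Your proposal is correct and follows essentially the same route as the paper, which states this corollary without a separate proof precisely because it is the specialization of Theorem \ref{thm:DC_infinite_NC} to $B^{nn}=\mathbf 0$ (mirroring the paper's discussion before the two-controller corollary, where the threshold in \eqref{eq:pc_2c} collapses to $\rho(A)^{-2}$ when the minimization becomes vacuous). Your observations that $\rho(A^{nn})\le\rho(A)$ for the block-diagonal $A$ in \eqref{eq:thm_matricesABB}, that $K_*^n=\mathbf 0$ by \eqref{Psi}, and the consistency check via the Lyapunov-type reduction of \eqref{eq:tildeP_finite_NC_fixed} are all correct and make the implicit argument fully explicit.
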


\begin{remark}
If Assumption \ref{assum:det_stb_2} is not true, define, for $n=1,\ldots,N$, $p^n_c = \min \{p^n_s, p^n_d\},$  where 
\begin{align}
\frac{1}{\sqrt{p_s^n}} &=  \min_{K  \in \mathbb{R}^{d^n_U \times d^n_X}}\rho(A^{nn}+B^{nn}K).
\label{eq:pc_ss}
\\
\frac{1}{\sqrt{p_d^n}} &=  \min_{H \in \mathbb{R}^{d^n_X \times d^n_X}}\rho \big(A^{nn}+H (Q^{nn})^{1/2} \big).
\label{eq:pc_sd}
\end{align}
Then, using arguments similar to those used for proving Theorem \ref{thm:DC_infinite_NC}, we can show that if $p^n < p_c^n$ for all $n \in \mathcal{N}$, the strategies in \eqref{eq:opt_U_NC_thm} are optimal for Problem \ref{problem_infinite}. Moreover, Problem \ref{problem_infinite} has finite optimal cost and the system state is mean square stable under optimal strategies.
\end{remark}



\section{Discussion}
\label{sec:discussion}
\subsection{Summary of the Approach}
The analysis in Sections \ref{sec:infinite_2_controllers} and  \ref{sec:model_N_controllers} suggests a general approach   for solving infinite horizon decentralized control/DNCS problems. This can be summarized as follows:
\begin{enumerate}
\item Solve the finite horizon version of the DNCS/decentralized control problem (for instance by using the common information approach \cite{nayyar2013decentralized}). Suppose the optimal strategies  are characterized by matrices $P_t^m, m \in \mathcal M = \{1,2,\dots,M\}$ which satisfy $M$ coupled Riccati recursions 
\begin{align}
P_t^m =  \Omega(\sum_{j} \theta^{mj} P_{t+1}^j,Q^m,R^m,A^m,B^m),
\label{eq:P_finite_general}
\end{align}
for some matrices $Q^m,R^m,A^m,B^m$ and positive numbers $\theta^{mj}$ for $m,j \in\mathcal M$. Note that we can scale the $\theta$'s such that $\sum_{j\in\mathcal M}\theta^{mj} = 1$ by appropriately scaling $A^m$ and $R^m$ for all $m\in\mathcal M$.

\item Construct a $M$-mode auxiliary MJLS with transition probabilities $\theta^{mj}$ and system matrices $Q^m,R^m,A^m,B^m$
so that the Riccati recursions associated with optimal control of the MJLS coincide with the Riccati recursions \eqref{eq:P_finite_general}.

\item Analyze stability criteria of the auxiliary MJLS to find conditions under which the Riccati recursions of the DNCS reach a steady state.

\item Verify  that the decentralized strategies characterized by the steady state DNCS Riccati equations are optimal.
\end{enumerate}

\subsection{The Information Switching}
 Even though the auxiliary MJLS we used in our analysis is an artificial system without apparent physical meaning (see Remark \ref{rem:fictitious}), a general DNCS with random packet drops (or random packet delays) does have some aspects of a switched system. In particular, the information at a controller (e.g, the remote controller in our problem) switches between different patterns  based on the state of the underlying communication network.   The information of the remote controller in Section \ref{sec:model_2_controllers} clearly switches between two patterns: no observation when the packet is dropped, and perfect observation when the transmission is successful. The number of such patterns seems  related to (but not always equal to) the number of modes in the MJLS used to analyze the DNCS. For the two-controller DNCS of Section  \ref{sec:model_2_controllers}, the number of patterns between which the remote controller's information switches is the same as the number of modes in its auxiliary MJLS.  For the $N+1$ controller DNCS in Section \ref{sec:model_N_controllers}, the remote controller's information appears to switch between $2^N$ patterns (depending on the state of the $N$ links) but its auxiliary MJLS has only $N+1$ modes. This difference between the number of information patterns and the number of modes is due to the nature of the plant dynamics which ensure that the remote controller's estimate of the $n$th plant is not affected by the state of the $m$th link if $m\neq n$.

\section{Conclusion}
\label{sec:conclusion}

We considered the infinite horizon optimal control problem of a decentralized networked control system (DNCS) with unreliable communication links. We showed that if the link failure probabilities are below certain critical thresholds, then the solutions of certain coupled Riccati recursions reach a steady state and the corresponding decentralized strategies are optimal. Above these thresholds, we showed that no strategy can achieve finite cost. Our main results in Theorems \ref{thm:DC_infinite_2C} and \ref{thm:DC_infinite_NC} explicitly identify the critical thresholds for the link failure probabilities and the corresponding optimal decentralized strategies when all link failure probabilities are below their thresholds. These results were obtained by exploiting a connection between our DNCS Riccati recursions and the coupled Riccati recursions of an auxiliary Markov jump linear system (MJLS). 


\bibliographystyle{ieeetr}
\bibliography{IEEEabrv,References}

\appendices

\section{Properties of the Operators}
\label{app:lm:costgstar}
\begin{lemma}
\label{lm:KPrelation_0}
Consider matrices $P,Q,R,A,B$ of appropriate dimensions with $P,Q$ being PSD matrices and $R$ being  a PD matrix. Define $\Phi(P,K,Q,R,A,B):=Q + K^\tp RK
+ (A+BK)^\tp P(A+BK)$. Then,
\begin{enumerate}
\item[(i)]
\begin{small}
\begin{align}
\Omega(P,Q,R,A,B) &= \Phi(P,\Psi(P,R,A,B),Q,R,A,B)\notag \\& = \min_K \Phi(P,K,Q,R,A,B).
\label{eq:relation1}
\end{align}
\end{small}
Note that the minimization is in the sense of partial order $\preceq$, that is, the minimum value $\Omega(P,Q,R,A,B) \preceq \Phi(P,K,Q,R,A,B)$ for all $K$.

\item[(ii)] Furthermore, for  PSD matrices $Y_1 $ and $Y_2$ such that $Y_1 \preceq Y_2$, we have
\begin{align}
\Omega(Y_1,Q,R,A,B) \preceq \Omega(Y_2,Q,R,A,B).
\label{eq:relation2}
\end{align}
\end{enumerate}
\end{lemma}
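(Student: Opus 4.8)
The plan is to derive both parts from one elementary observation: viewed as a function of the gain $K$, the expression $\Phi(P,K,Q,R,A,B)$ is a matrix-valued quadratic that can be rewritten as a constant term (independent of $K$) plus a PSD term that vanishes at a unique $K$. Part (i) then reduces to a completion-of-the-square computation, and part (ii) follows by combining the resulting variational characterization $\Omega = \min_K \Phi$ with the congruence-monotonicity of the map $Y \mapsto (A+BK)^\tp Y (A+BK)$.

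For part (i), I would first set $S := R + B^\tp P B$. Since $R \succ \mathbf 0$ and $B^\tp P B \succeq \mathbf 0$, the matrix $S$ is PD and hence invertible, which is exactly what makes $\Psi$ and $\Omega$ well defined. Expanding $\Phi(P,K,Q,R,A,B) = Q + A^\tp P A + K^\tp S K + A^\tp P B K + K^\tp B^\tp P A$ and completing the square in $K$ gives
\begin{align}
\Phi(P,K,Q,R,A,B) = \Omega(P,Q,R,A,B) + \big(K - \Psi\big)^\tp S \big(K - \Psi\big), \notag
\end{align}
where $\Psi = \Psi(P,R,A,B) = -S^{-1}B^\tp P A$. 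Because $S \succ \mathbf 0$, the second term is PSD for every $K$ and equals $\mathbf 0$ precisely when $K = \Psi$. This simultaneously identifies the minimizer as $K = \Psi$, shows that the minimum value in the partial order $\preceq$ is $\Omega(P,Q,R,A,B)$, and establishes the chain of equalities in \eqref{eq:relation1}.

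For part (ii), I would use the variational characterization just obtained. Fix $Y_1 \preceq Y_2$ and let $K_2 := \Psi(Y_2,R,A,B)$ be the minimizer attaining $\Omega(Y_2,Q,R,A,B) = \Phi(Y_2,K_2,Q,R,A,B)$. At the \emph{fixed} gain $K_2$, congruence by $(A+BK_2)$ preserves the Loewner order, so $Y_1 \preceq Y_2$ yields $(A+BK_2)^\tp Y_1 (A+BK_2) \preceq (A+BK_2)^\tp Y_2 (A+BK_2)$ and hence $\Phi(Y_1,K_2,\ldots) \preceq \Phi(Y_2,K_2,\ldots)$. Combining this with the fact that $\Omega(Y_1,\ldots)$ is the minimum over all $K$ gives
\begin{align}
\Omega(Y_1,Q,R,A,B) \preceq \Phi(Y_1,K_2,Q,R,A,B) \preceq \Phi(Y_2,K_2,Q,R,A,B) = \Omega(Y_2,Q,R,A,B), \notag
\end{align}
which is exactly \eqref{eq:relation2}.

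I do not expect a genuine obstacle here; the only point requiring care is that the minimum in \eqref{eq:relation1} is a partial-order (Loewner) minimum, not a coordinatewise one. This is precisely what the completion of the square resolves: writing $\Phi$ as $\Omega$ plus a single PSD quadratic in $K$ shows that one and the same gain $K = \Psi$ minimizes $\Phi$ simultaneously in every direction, so the partial-order minimum genuinely exists and equals $\Omega$. The monotonicity argument in part (ii) then hinges on the standard but essential choice of evaluating $\Phi(Y_1,\cdot)$ at the minimizer of the \emph{larger} problem $Y_2$ (rather than at the minimizer for $Y_1$), so that the suboptimality inequality points in the desired direction.
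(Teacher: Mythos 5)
Your proposal is correct and follows essentially the same route as the paper: the same completion-of-the-square identity $\Phi = \Omega + (K-\Psi)^\tp (R + B^\tp P B)(K-\Psi)$ for part (i), and pointwise monotonicity of $\Phi$ in $P$ combined with the variational characterization for part (ii). Your explicit evaluation at the minimizer $K_2$ of the $Y_2$-problem is just the unwound justification of the paper's one-line inequality $\min_K \Phi(Y_1,K,\cdot) \preceq \min_K \Phi(Y_2,K,\cdot)$, so there is no substantive difference.
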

\begin{proof}
The statements in the above lemma can be found in the literature (see, for example, \cite[Chapter 2]{whittle_Optimal_Control}). We provide a proof for completeness.
It can be established by straightforward algebraic manipulations that 

\vspace{-2mm}
\begin{small}
\begin{align}
&\Phi(P,K,Q,R,A,B) = \Omega(P,Q,R,A,B) \notag \\
&+(K - \Psi(P,R,A,B))^\tp\mathcal{R}(K - \Psi(P,R,A,B)), \label{eq:new_riccati_1}
\end{align}
\end{small}
with $\mathcal{R}= R + B^\tp PB$. Then \eqref{eq:new_riccati_1} implies that  $\Phi(P,K,Q,R,A,B)$ is minimized when $K=\Psi(P,R,A,B)$ and the minimum value is $\Omega(P,Q,R,A,B)$.

For PSD matrices $Y_1 $ and $Y_2$ such that $Y_1 \preceq Y_2$, it is straightforward to see that $\Phi(Y_1,K,Q,R,A,B) \preceq \Phi(Y_2,K,Q,R,A,B)$ for any $K$. Hence, 
\begin{align}
 \min_K \Phi(Y_1,K,Q,R,A,B) \preceq  \min_K \Phi(Y_2,K,Q,R,A,B).
 \label{min_K_inequality}
\end{align}
From \eqref{min_K_inequality} and \eqref{eq:relation1}, it follows that \eqref{eq:relation2} is correct.
\end{proof}

\section{Proof of Lemma \ref{lm:MJ_infinite}}
\label{app:lm_MJ_infinite}
If matrices $P^{\diamond}_{t}(m)$, $m \in \mathcal{M}$, converge as $t \to -\infty$ to PSD matrices $P^{\diamond}_{*}(m)$, then by continuity, the collection of PSD matrices $P^{\diamond}_* = \{P^{\diamond}_*(0),\ldots, P^{\diamond}_*(M)\}$ satisfy the DCARE in \eqref{eq:CARE_infinite}. Since the DCARE \eqref{eq:CARE_infinite} has a PSD solution  $P^{\diamond}_* = \{P^{\diamond}_*(0),\ldots, P^{\diamond}_*(M)\}$, then from \cite[Proposition 7]{costa1995discrete} and the SD assumption of the MJLS, it is also a stabilizing solution of the DCARE (\cite[Definition 3]{costa1995discrete} and \cite[Definition 4.4]{costa2006discrete}). Then, the MJLS is SS from the definition of the stabilizing solution.


On the other hand, if the MJLS is SS, under the SD assumption of the MJLS, \cite[Corollary A.16]{costa2006discrete} ensures the existence of a stabilizing solution of the DCARE in \eqref{eq:CARE_infinite}.
The solution is also the unique PSD solution from \cite[Theorem A. 15]{costa2006discrete} (by taking $X=0$ in Theorem A. 15). Then from \cite[Proposition A. 23]{costa2006discrete}, matrices $P^{\diamond}_{t}(m)$, $m \in \mathcal{M}$, converge as $t \to -\infty$ to PSD matrices $P^{\diamond}_{*}(m)$.

\section{Proof of Lemma \ref{lm:pc_2C}}
\label{proof_lm:pc_2C}
Because of Lemma \ref{equality_recursions_2C},  $P^0_t = P^{\diamond}_t(0)$ and  $P^1_t = P^{\diamond}_t(1)$, where matrices $P^{\diamond}_t(0), P^{\diamond}_t(1)$ are defined  by \eqref{eq:P_MJ_init} - \eqref{P_MJ_cmp_2C_1} for the auxiliary MJLS. Thus,  we can focus on the convergence of matrices $P^{\diamond}_t(0), P^{\diamond}_t(1)$.


To investigate the convergence of $P^{\diamond}_t(0), P^{\diamond}_t(1)$,  we first show that 
 the auxiliary MJLS described by \eqref{eq:MJLS_A}-\eqref{eq:MJLS_theta} is SS if and only if $p^1 < p_c^1$ where $p_c^1$ is the critical threshold given by \eqref{eq:pc_2c}. According to Lemma \ref{lm:ss}, the MJLS is SS if and only if there exist matrices $K^{\diamond}(m)$, $m \in \{0,1\}$, such that $\rho(\mathcal{A}_s) < 1$. For the MJLS described by \eqref{eq:MJLS_A}-\eqref{eq:MJLS_theta}, we can find $\mathcal{A}_s$ from \eqref{eq:bigmatrix_ss} as follows
\begin{align}
\mathcal{A}_s = \begin{bmatrix}
A_s(0)\otimes A_s(0) & (1-p^1)A_s(1)\otimes A_s(1) \\
 \mathbf{0} & p^1A_s(1)\otimes A_s(1)
\end{bmatrix},
\label{eq:bigmatrix_As}
\end{align}
where $A_s (m) = A^{\diamond} (m) +  B^{\diamond}(m) K^{\diamond}(m)$, $m \in \{0,1\}$. Since the matrix $\mathcal{A}_{s}$ is upper-triangular, it is Schur stable if and only if all its diagonal blocks are  Schur stable. 

Since $A^{\diamond}(0) = A, B^{\diamond}(0) = B$ and  $(A,B)$ is stabilizable from Assumption \ref{assum:det_stb_2C}, there exists $K^{\diamond}(0)$ such that  $\rho\big(A_s(0)\otimes A_s(0)\big)$, which is equal to $\big(\rho\big(A_s(0)\big)\big)^2$, is less than $1$. Therefore,  the MJLS is SS if and only if $\rho\big(p^1 A_s(1)\otimes A_s(1)\big)<1$ for some $K^{\diamond}(1)$. Note that $\rho\big(p^1 A_s(1)\otimes A_s(1)\big) = p^1 \times \big(\rho\big(A_s(1)\big)\big)^2$. Therefore, the MJLS is SS if and only if $\frac{1}{\sqrt{p^1}} > \rho\big(A_s(1)\big)$ for some $K^{\diamond}(1)$. Since $A^{\diamond}(1)= A$ and $B^{\diamond}(1) = [\mathbf 0,B^{11}]$, it follows then that the MJLS is SS iff 
\[\frac{1}{\sqrt{p^1}} > \rho\big(A + B^{11} \tilde K^{\diamond}(1)\big),\] for some $\tilde K^{\diamond}(1)$. This condition is equivalent to $p^1 < p_c^1$ where $p_c^1$ is the critical threshold given by \eqref{eq:pc_2c}.

Next, we show that the auxiliary MJLS described by \eqref{eq:MJLS_A}-\eqref{eq:MJLS_theta} is SD. To this end, we can follow an argument similar to the one  described above for establishing that the MJLS is SS and use part 2 of Lemma \ref{lm:ss} to show that the MJLS is SD if and only if there exist matrices $H^{\diamond}(0)$ and $H^{\diamond}(1)$ such that $\rho\big(A_d(0)\otimes A_d(0)\big)<1$ and $\rho\big(p^1 A_d(1)\otimes A_d(1)\big)<1$. Since $A^{\diamond}(0) = A^{\diamond}(1) = A, Q^{\diamond}(0) = Q^{\diamond}(1) = Q$ and $(A,Q)$ is detectable from Assumption \ref{assum:det_stb_2C},  there exist matrices $H^{\diamond}(0)$ and $H^{\diamond}(1)$ such that $\rho\big(A_d(0)\otimes A_d(0)\big)<1$ and $\rho\big(p^1 A_d(1)\otimes A_d(1)\big)<1$. Hence, the MJLS is SD.

Thus, the MJLS of \eqref{eq:MJLS_A}-\eqref{eq:MJLS_theta} is SD for any $p^1$ and it is SS if and only if $p^1 < p_c^1$. It then follows from Lemma \ref{lm:MJ_infinite} that matrices $P^{\diamond}_{t}(m)$, $m \in \{0,1\}$, converge as $t \to -\infty$ to
 PSD matrices  $P^{\diamond}_{*}(m), m \in \{0,1\}$ that satisfy the steady state version of \eqref{P_MJ_cmp_2C_0}-\eqref{P_MJ_cmp_2C_1} (i.e, equations \eqref{eq:P_finite_2C_fixed} - \eqref{eq:tildeP_finite_2C_fixed}) if and only if $p^1 < p_c^1$.  This proves the lemma.

\section{Proof of Lemma \ref{lm:Q2_2C}, parts 1 and 2}\label{sec:lm_Q2_2C}
Let $g^*$ denote the strategies described  by \eqref{eq:K_finite_2C_fixed}-\eqref{eq:estimator_inf_2C}. We want to show that for any $g \in \mathcal G$, $J_{\infty}(g) \geq J_{\infty}(g^*)$ and that  $J_{\infty}(g^*)$ is finite. We will make use of the following claim. 
\begin{claim}\label{claim:cost_optimal_2C}
For the strategies $g^*$ described  by \eqref{eq:K_finite_2C_fixed}-\eqref{eq:estimator_inf_2C}, the following equation is true:
\begin{align}
J_{T}(g^{*}) =&(T+1) \tr (\Lambda_*) - \ee^{g*} [V_{T+1} ],
\label{eq:costforgstar_2C}
\end{align}
where $J_T(g^*)$ is the finite horizon cost of $g^*$ over a horizon of duration $T$, $\Lambda_* = (1-p^1) P_{*}^0+p^1 P_{*}^1$ and for any $t\geq 0$,
\begin{align}
V_t = \hat X_{t}^\tp P_*^0 \hat X_t + \tr \big(P_*^{1}\cov(X_{t}|H^0_{t}) \big).
\label{V_t_2C}
\end{align}
\end{claim}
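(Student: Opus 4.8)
The plan is to realize $V_t$ as a one-step value function for $g^*$ so that the running cost telescopes. Writing $e_t := X_t - \hat X_t$ for the estimation error and $\Sigma_t := \cov(X_t \mid H^0_t)$, and recalling that under $g^*$ the recursion \eqref{eq:estimator_inf_2C} makes $\hat X_t = \ee^{g^*}[X_t \mid H^0_t]$ (so that $\ee^{g^*}[e_t \mid H^0_t] = 0$ and $\Sigma_t = \ee^{g^*}[e_t e_t^\tp \mid H^0_t]$), I would first establish the per-step identity
\begin{align}
\ee^{g^*}\!\left[ c(X_t,U_t) + V_{t+1} \,\middle|\, H^0_t \right] = V_t + \tr(\Lambda_*), \qquad t \geq 0.
\label{eq:plan_step}
\end{align}
Taking unconditional expectations, summing over $t = 0,\ldots,T$, and using that $X_0 = 0$ forces $\hat X_0 = 0$, $\Sigma_0 = 0$ and hence $V_0 = 0$, turns the left side into a telescoping sum and yields exactly \eqref{eq:costforgstar_2C}.

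To prove \eqref{eq:plan_step} I would expand both terms conditioned on $H^0_t$. Substituting $X_t = \hat X_t + e_t$ and the control law \eqref{eq:opt_U_2C_fixed} into the running cost and discarding the cross terms (which vanish because $\ee^{g^*}[e_t \mid H^0_t] = 0$), the block structure of $R$ collapses the local-gain contribution to $(K^1_*)^\tp R^{11} K^1_*$, giving
\begin{align}
\ee^{g^*}[c(X_t,U_t)\mid H^0_t] = \hat X_t^\tp\bigl(Q + (K^0_*)^\tp R K^0_*\bigr)\hat X_t + \tr\!\Bigl(\bigl(Q + (K^1_*)^\tp R^{11} K^1_*\bigr)\Sigma_t\Bigr).
\label{eq:plan_cost}
\end{align}
The term $\ee^{g^*}[V_{t+1}\mid H^0_t]$ is more delicate and requires conditioning on the link state $\Gamma^1_{t+1}$. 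Writing $F_0 := A + BK^0_*$ and $F_1 := A + B^{11}K^1_*$, the closed-loop recursions give, on a successful reception ($\Gamma^1_{t+1}=1$, probability $1-p^1$), $\hat X_{t+1}=X_{t+1}$ and $\Sigma_{t+1}=\mathbf 0$, while on a drop ($\Gamma^1_{t+1}=0$, probability $p^1$), $\hat X_{t+1}=F_0\hat X_t$ and $\Sigma_{t+1}=F_1\Sigma_t F_1^\tp + \mathbf I$. Averaging the resulting two values of $V_{t+1}$ with weights $1-p^1$ and $p^1$ recombines the $P^0_*$ and $P^1_*$ pieces into $\Lambda_* = (1-p^1)P^0_* + p^1 P^1_*$ and produces
\begin{align}
\ee^{g^*}[V_{t+1}\mid H^0_t] = \hat X_t^\tp F_0^\tp P^0_* F_0 \hat X_t + \tr\!\bigl(F_1^\tp \Lambda_* F_1 \Sigma_t\bigr) + \tr(\Lambda_*).
\label{eq:plan_V}
\end{align}

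Adding \eqref{eq:plan_cost} and \eqref{eq:plan_V}, the identity \eqref{eq:plan_step} reduces to the two algebraic relations $P^0_* = Q + (K^0_*)^\tp R K^0_* + F_0^\tp P^0_* F_0$ and $P^1_* = Q + (K^1_*)^\tp R^{11} K^1_* + F_1^\tp \Lambda_* F_1$. I expect matching these to be the crux of the argument, but both follow immediately from Lemma \ref{lm:KPrelation_0}(i): since $K^0_* = \Psi(P^0_*,R,A,B)$ and $P^0_*$ solves \eqref{eq:P_finite_2C_fixed}, the first is $P^0_* = \Omega(P^0_*,Q,R,A,B) = \Phi(P^0_*,K^0_*,Q,R,A,B)$; since $K^1_* = \Psi(\Lambda_*,R^{11},A,B^{11})$ and $P^1_*$ solves \eqref{eq:tildeP_finite_2C_fixed}, the second is $P^1_* = \Omega(\Lambda_*,Q,R^{11},A,B^{11}) = \Phi(\Lambda_*,K^1_*,Q,R^{11},A,B^{11})$. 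The main obstacle is therefore not conceptual but bookkeeping: correctly propagating the conditional covariance $\Sigma_{t+1}$ through the two link outcomes (in particular that a reception resets the error to zero, so $\Sigma_{t+1}=\mathbf 0$) and verifying that the cross terms vanish so the quadratic-in-$\hat X_t$ and trace-in-$\Sigma_t$ parts separate cleanly and align with the two Riccati fixed-point equations.
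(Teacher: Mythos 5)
Your proposal is correct and follows essentially the same route as the paper's proof in Appendix F: the same per-step identity $\ee^{g^*}[c(X_t,U_t^*)+V_{t+1}\mid H^0_t]=V_t+\tr(\Lambda_*)$, the same conditioning on $\Gamma^1_{t+1}$ to propagate $\hat X_{t+1}$ and $\Sigma_{t+1}$ (the paper writes $\hat X_{t+1}=\hat X_{t+1|t}+\Gamma^1_{t+1}(X_{t+1}-\hat X_{t+1|t})$ rather than averaging the two case values, but the computation is identical), and the same closure via the two Riccati fixed-point relations $\Phi(P^0_*,K^0_*)=P^0_*$ and $\Phi(\Lambda_*,\tilde K^1_*)=P^1_*$ from Lemma \ref{lm:KPrelation_0}, followed by telescoping with $V_0=0$.
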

\begin{proof}
See Appendix \ref{proof_claim:cost_optimal_2C} for a proof of this claim.
\end{proof}
Based on Claim \ref{claim:cost_optimal_2C}, the infinite horizon average cost  for $g^*$ is given as 
\begin{align}
J_{\infty}(g^*)
= &\limsup_{T\rightarrow\infty} \frac{1}{T+1} J_{T}(g^*)
\notag\\
= & \tr (\Lambda_*)  - \liminf_{T\rightarrow\infty} \frac{\ee^{g^*}[V_{T+1}]}{T+1}
\leq \tr (\Lambda_*),
\label{bound_J_infty}
\end{align}
where the last inequality holds because $V_{T+1} \geq 0$. 

For $n =0,1$, define $Y_{0}^n = 0$, and for $k=0,1,2,\dots,$
\begin{align}
&Y_{k+1}^0 =  \Omega(Y_{k}^0,Q,R,A,B),
\label{eq:Y_k_0_2C}
\\
&Y_{k+1}^1 =  \Omega \big((1-p^1) Y_{k}^0+p^1 Y_{k}^1,Q,R^{11},A,B^{11}\big).
\label{eq:Y_k_n_2C}
\end{align}
It's easy to check that for  $n=0,1,$  $Y_k^n = P_{T+1-k}^n$ for all  $k \geq 0$, and that $\lim_{k \rightarrow \infty} Y_k^n = \lim_{t \rightarrow -\infty} P_t^n  = P^n_*$.

Further, let's define  $\Lambda_k =  (1-p^1) Y_{k}^0+p^1 Y_{k}^1$. From \eqref{eq:opcost_finite_2C} of Lemma \ref{lm:opt_strategies_2C}, we know that the optimal finite horizon cost is given as 
\begin{align}
J^*_{T} &=  \sum_{t = 0}^T \tr \big((1-p^1)P_{t+1}^0+p^1 P_{t+1}^1 \big) \notag \\
&=\sum_{k = 0}^T \tr \big((1-p^1)P_{T+1-k}^0+p^1 P_{T+1-k}^1 \big) \notag \\
&=\sum_{k = 0}^T \tr \big((1-p^1)Y_k^0+p^1 Y_k^1 \big) \notag \\
&= \sum_{k = 0}^T \tr(\Lambda_k). \label{eq:Jstar}
\end{align}


We can therefore write
\begin{align}
 \lim_{T\rightarrow \infty}\frac{1}{T+1} J^*_{T} 
  &=  \lim_{T\rightarrow \infty}\frac{1}{T+1} \sum_{k = 0}^T \tr (\Lambda_k)
= \tr (\Lambda^*),
\label{limit_optimal_finite_cost}
\end{align}
where the last equality is correct because $\lim_{k \rightarrow \infty} Y_k^n = P^n_*$ for $n=0,1$.

Now, for any $g \in \mathcal G$,
\begin{align}
J_{\infty}(g)
= &\limsup_{T\rightarrow\infty} \frac{1}{T+1} J_{T}(g)
\notag\\
\geq &\limsup_{T\rightarrow\infty} \frac{1}{T+1} J_{T}^*
= \tr (\Lambda^*)
\geq J_{\infty}(g^*),
\label{eq:lowerbound}
\end{align}
where the first inequality is true because by definition $J_T^* = \inf_{g' \in\mathcal{G}} 
J_T(g') \leq J_T(g)$ for any $g \in \mathcal G$,  the  second equality is true because of \eqref{limit_optimal_finite_cost} and the last inequality is true because of \eqref{bound_J_infty}.
Hence, $g^*$ is optimal for Problem \ref{problem_infinite_2C}, and the optimal cost is finite and equal to $\tr (\Lambda^*)$.

%


\section{Proof of Claim \ref{claim:cost_optimal_2C}}
\label{proof_claim:cost_optimal_2C}
In order to show that \eqref{eq:costforgstar_2C} holds, it suffices to show that the following equation is true for all $t \geq 0$:
\begin{align}
\ee^{g^*} [ c(X_t,U_t^*) | H_t^0]=  \tr(\Lambda_*)  + \ee^{g^*} [V_t - V_{t+1} | H_t^0],
\label{main_goal_2C}
\end{align}
where $U^*_t$ are the control actions at time $t$ under $g^*$.
This is because by taking the expectation of \eqref{main_goal_2C} and summing it from $t=0$ to $T$, we obtain
\begin{align}
J_{T}(g^*) 
= & \sum_{t=0}^T \ee^{g^*} [ c(X_t,U_t^*) ]
\notag\\
=&(T+1) \tr(\Lambda_*) 
+ \ee^{g^*} [V_0 - V_{T+1} ]
\notag\\
=&(T+1) \tr(\Lambda_*) 
- \ee^{g^*} [V_{T+1} ],
\end{align}
where the last equality holds because $V_0 = 0$ (recall that $X_0 = \hat X_0 =0$). 


Now, to show that \eqref{main_goal_2C} holds, first note that $\ee^{g^*}[V_t | H_t^0] = V_t$ since $V_t$, given by \eqref{V_t_2C}, is a function of $H^0_t$.  Hence, \eqref{main_goal_2C} is equivalent to
\begin{align}
\ee^{g^*} [ c(X_t,U_t^*) | H_t^0] + \ee^{g^*} [ V_{t+1} | H_t^0] = \tr(\Lambda_*)  + V_t. \label{eq:main_goal_2C_2}
\end{align}
 In the following subsections we will calculate $\ee^{g^*} [ c(X_t,U_t^*) | H_t^0]$ and $\ee^{g^*}[V_{t+1} | H_t^0]$ and then simplify the left hand side of \eqref{eq:main_goal_2C_2}.  To do so, we define 
\begin{align}
&\hat X_{t+1|t} := \ee [X_{t+1}|H^0_{t}]
\\
&\Sigma_{t+1|t} := \cov(X_{t+1}|H^0_{t})
 \\
&\Sigma_t := \cov(X_{t}|H^0_{t}),
\label{Sigma_2C}
\end{align}
and recall that $\hat{X}_t = \ee [X_{t}|H^0_{t}]$.

\subsection{Calculating $\ee^{g^*} [ c(X_t,U_t^*) | H_t^0]$}
Note that 
\begin{small}
\begin{align}
&\ee^{g^*} [ c(X_t,U_t^*) | H_t^0]=  \underbrace{\ee^{g^*} [X_t^\tp Q X_t | H_t^0]}_{\mathbb{T}_4} + \underbrace{\ee^{g^*}[U_t^{*\tp} R U_t^* | H_t^0]}_{\mathbb{T}_5}. 
\label{cost_given_h_2C}
\end{align}
\end{small}
We can simplify the term $\mathbb{T}_4$ as follows
\begin{small}
\begin{align}
\mathbb{T}_4 &=  \hat X_t^\tp Q\hat X_t + \ee^{g^*} \Big[(X_t-\hat X_t)^\tp Q (X_t-\hat X_t)|H^0_t\Big] \notag \\
& =  \hat X_t^\tp Q\hat X_t + \tr\big(Q\Sigma_t \big). \label{eq:T4eq}
\end{align}
\end{small}

 From \eqref{eq:opt_U_2C_fixed}, we have $U_t^* = K_*^0\hat X_t + \tilde K_*^1 (X_t - \hat X_t)$, where $\tilde K_*^1 =  \bmat{\mathbf{0} \\ K_*^1 }$.  Therefore, we can simplify the term $\mathbb{T}_5$ as follows
\begin{small}
\begin{align}
\mathbb{T}_5 &= (K_*^0\hat X_t)^\tp R K_*^0\hat X_t \notag \\
& + \ee^{g^*} \Big[(X_t-\hat X_t)^\tp (\tilde K_*^1)^{\tp} R \tilde K_*^1 (X_t-\hat X_t)|H^0_t\Big]
\notag\\
&= \hat X_t^\tp(K_*^0)^\tp R K_*^0\hat X_t + \tr \big( (\tilde K_*^1)^\tp R \tilde K_*^1 \Sigma_t \big).
 \label{cost_on_U_t_2C}
\end{align}
\end{small}

Putting \eqref{cost_given_h_2C}, \eqref{eq:T4eq} and \eqref{cost_on_U_t_2C} together, we can write 
\begin{align}\label{eq:calculated_c}
&\ee^{g^*} [ c(X_t,U_t^*) | H_t^0]= \hat X_t^\tp Q\hat X_t +\tr(Q \Sigma_t) \notag \\
&+ \hat X_t^\tp(K_*^0)^\tp R K_*^0\hat X_t + \tr( (\tilde K_*^1)^\tp R \tilde K_*^1 \Sigma_t).
\end{align}

\subsection{Calculating $\ee^{g^*}[V_{t+1}|H^0_t]$}
From the definition of $V_{t+1}$ (see \eqref{V_t_2C})  we have
\begin{small}
\begin{align}
 \ee^{g^*}[V_{t+1}|H^0_t] &= \underbrace{\ee^{g^*}\Big[
\hat X_{t+1}^\tp P_*^0 \hat X_{t+1} \Big| H^0_t
\Big]}_{\mathbb{T}_6} \notag \\
&+ \underbrace{\ee^{g^*}\Big[
\tr \big(P_*^{1} \Sigma_{t+1} \big) \Big| H^0_t \Big]}_{\mathbb{T}_7}.\label{eq:T6T7}
\end{align}
\end{small}
Note that if $\Gamma_{t+1}^1 = 1$ (i.e., the link is active) $\hat X_{t+1} = X_{t+1}$ and   $\Sigma_{t+1} = 0$  and if $\Gamma_{t+1}^1 = 0$ $\hat X_{t+1} = \hat X_{t+1|t}$ and $\Sigma_{t+1} = \Sigma_{t+1|t}$\footnote{If $\Gamma_{t+1}^1=0$, the remote controller gets no new information about $X_{t+1}$. Hence, its belief on $X_{t+1}$ given $H_{t+1}^0$ remains the same as its belief on $X_{t+1}$ given $H_{t}^0$.}. That is, 
\begin{align}
\label{estimation_state_2C}
&\hat X_{t+1} = \Gamma_{t+1}^1 X_{t+1} +(1-\Gamma_{t+1}^1) \hat X_{t+1|t},  \\
 &\Sigma_{t+1}  = (1-\Gamma_{t+1}^1) \Sigma_{t+1|t}.
\label{estimation_covariance_2C}
\end{align}
Now, we use \eqref{estimation_state_2C} and \eqref{estimation_covariance_2C} to calculate  the terms $\mathbb{T}_6$ and $\mathbb{T}_7$ in \eqref{eq:T6T7}.

Note that from \eqref{estimation_state_2C}, $\hat X_{t+1}$ is equal to $\hat X_{t+1|t} + \Gamma_{t+1}(X_{t+1} -\hat X_{t+1|t})$. Therefore,  $\mathbb{T}_6$ can be written as

\vspace{-2mm}
\begin{small}
\begin{align}
&\mathbb{T}_6= 
\ee^{g^*}\Big[
\hat X_{t+1|t} ^\tp P_*^0 \hat X_{t+1|t} \Big| H^0_t \Big] \notag \\
&+ \ee^{g^*}\Big[
(X_{t+1} -\hat X_{t+1|t})^\tp \Gamma_{t+1}^1 P_*^0 \Gamma_{t+1}^1 (X_{t+1} -\hat X_{t+1|t}) \Big| H^0_t \Big] \notag \\ & = 
\hat X_{t+1|t} ^\tp P_*^0 \hat X_{t+1|t}+ (1-p^1) \tr (P_*^0 \Sigma_{t+1|t}).
\label{eq:T_4_1}
\end{align}
\end{small}
Furthermore, using \eqref{estimation_covariance_2C}, it is straightforward to see that 

\vspace{-2mm}
\begin{small}
\begin{align}
&\mathbb{T}_7
= p^1 \tr (P_*^1  \Sigma_{t+1|t}).\label{eq:T7}
\end{align}
\end{small}

Combining \eqref{eq:T6T7}, \eqref{eq:T_4_1} and \eqref{eq:T7}, we get
\begin{align}
 \ee^{g^*}[V_{t+1}|H^0_t]= &\hat X_{t+1|t} ^\tp P_*^0 \hat X_{t+1|t}+ (1-p^1) \tr (P_*^0 \Sigma_{t+1|t})\notag \\
& +  p^1 \tr (P_*^1  \Sigma_{t+1|t}). \label{eq:vt1}
\end{align}
Note that the right hand side of  \eqref{eq:vt1} involves  $\hat X_{t+1|t}$ and $\Sigma_{t+1|t}$. We will now try to write these in terms of  $\hat X_{t}$ and $\Sigma_{t}$. For that purpose,  note that under the strategies $g^*$
\vspace{-2mm}
\begin{small}
\begin{align}
X_{t+1} &= AX_t + B \Big[ 
K_*^0\hat X_t + \bmat{\mathbf{0} \\ K_*^1 }(X_t - \hat X_t)
 \Big] + W_t \notag \\
& = A_s(0) \hat X_t + A_s(1)  ( X_t - \hat X_t) + W_t.
\label{X_dynamics_proof_2C}
\end{align}
\end{small}
In the above equation, we have defined $A_s(0) = A+ B K_*^0$ and $A_s(1) = A + B \tilde K_*^1$ where   $\tilde K_*^1 =  \bmat{\mathbf{0} \\ K_*^1 }$. Now using \eqref{X_dynamics_proof_2C}, we can calculate $\hat X_{t+1|t}$ and $\Sigma_{t+1|t}$ as follows, 
\begin{align}
\label{estimation_X_2C}
&\hat X_{t+1|t} = A_s(0) \hat X_t, \\
&\Sigma_{t+1|t} =  \mathbf I + A_s(1) \Sigma_t A_s(1)^\tp.
\label{Sigma_t_2C}
\end{align}

Using \eqref{estimation_X_2C} and \eqref{Sigma_t_2C} in \eqref{eq:vt1}, we get
\begin{align}
 \ee^{g^*}[V_{t+1}|H^0_t]&= \hat X_t ^\tp A_s(0)^\tp P_*^0 A_s(0) \hat X_t+ (1-p^1) \tr (P_*^0) \notag \\
& +  p^1 \tr (P_*^1) + (1-p^1) \tr(P_*^0A_s(1) \Sigma_t A_s(1)^\tp) \notag \\
&+ p^1\tr(P_*^1A_s(1) \Sigma_t A_s(1)^\tp). \label{eq:vt1A}
\end{align}
Recall that $\Lambda_* = (1-p^1) P_{*}^0+p^1 P_{*}^1$. Thus, \eqref{eq:vt1A} can be written as
\begin{align}
 \ee^{g^*}[V_{t+1}|H^0_t]&= \hat X_t ^\tp A_s(0)^\tp P_*^0 A_s(0) \hat X_t+  \tr (\Lambda_*) \notag \\
& +   (1-p^1) \tr(P_*^0A_s(1) \Sigma_t A_s(1)^\tp) \notag \\
&+ p^1\tr(P_*^1A_s(1) \Sigma_t A_s(1)^\tp). \label{eq:vt1B}
\end{align}

\subsection{Simplifying the left hand side of \eqref{eq:main_goal_2C_2}}
Now that we have calculated $\ee^{g^*} [ c(X_t,U_t^*) | H_t^0]$ and $\ee^{g^*}[V_{t+1} | H_t^0]$, we will try to simplify the left hand side of \eqref{eq:main_goal_2C_2}.

Adding \eqref{eq:vt1B} and \eqref{eq:calculated_c} and grouping together the terms involving $\hat X_t$ and those involving $\Sigma_t$, we can write
\begin{align}
& \ee^{g^*}[c(X_t,U_t^*)|H^0_t] + \ee^{g^*}[V_{t+1}|H^0_t] 
= \tr(\Lambda_*) \notag \\
& + \hat X_t ^\tp \Phi(P_*^0,K^0_*) \hat X_t  + \tr(\Phi((1-p^1)P_*^0 +p^1P^1_*,\tilde K^1_*)\Sigma_t), \label{eq:vt1C}
\end{align}
where 
\begin{small}
\begin{align}
\Phi(P_*^0,K^0_*) := Q + (K^0_*)^\tp RK^0_*
+ (A+BK^0_*)^\tp P^0_*(A+BK^0_*),
\end{align}
\end{small}
and similarly
\begin{small}
\begin{align}
&\Phi((1-p^1)P_*^0 +p^1P^1_*,\tilde K^1_*):=Q + (\tilde K_*^1)^\tp R \tilde K_*^1 \notag \\
&~~+(A+B\tilde K^1_*)^\tp\big((1-p^1)P_*^0 +p^1P^1_* \big)(A+B\tilde K^1_*) \notag \\
&= Q + ( K_*^1)^\tp R^{11}  K_*^1 \notag \\
&~~+(A+B^{11} K^1_*)^\tp\big((1-p^1)P_*^0 +p^1P^1_* \big)(A+B^{11}K^1_*) 
\end{align}
\end{small}

Using the fact that $K^0_* = \Psi(P_{*}^0,R,A,B)$, it can be easily established that 
\begin{equation}
\Phi(P_*^0,K^0_*) = \Omega(P_{*}^0,Q,R,A,B).
\end{equation}
Further, since $P^0_* = \Omega(P_{*}^0,Q,R,A,B)$, we have 
\begin{equation}
\Phi(P_*^0,K^0_*) = P^0_*.\label{eq:phi0}
\end{equation}

Similarly, using the fact that $K_*^1 = \Psi((1-p^1)P_{*}^0+p^1 P_{*}^1,R^{11},A,B^{11})$, it can be established that 
\begin{align}
&\Phi((1-p^1)P_*^0 +p^1P^1_*,\tilde K^1_*) \notag \\
&= \Omega((1-p^1)P_*^0 +p^1P^1_*,Q,R^{11},A,B^{11}).
\end{align}
Further, since $P^1_* = \Omega((1-p^1)P_*^0 +p^1P^1_*,Q,R^{11},A,B^{11})$, we have 
\begin{equation}
\Phi((1-p^1)P_*^0 +p^1P^1_*,\tilde K^1_*) =  P^1_*.\label{eq:phi1}
\end{equation}

Using \eqref{eq:phi0} and \eqref{eq:phi1} in \eqref{eq:vt1C}, we get
\begin{align}
& \ee^{g^*}[c(X_t,U_t^*)|H^0_t] + \ee^{g^*}[V_{t+1}|H^0_t]  \notag \\
&= \tr(\Lambda_*)
 + \hat X_t ^\tp P^0_* \hat X_t  + \tr(P^1_*\Sigma_t), \notag \\
&= \tr(\Lambda_*) + V_t. \label{eq:vt1D}
\end{align}
This establishes \eqref{eq:main_goal_2C_2} and hence completes the proof of the claim.

\section{Proof of Lemma \ref{lm:Q2_2C}, part 3}\label{sec:stability_proof}
Let $\tilde X_t := X_t - \hat X_t$ denote the estimation error.  It suffices to show that $\hat X_t$ and  $\tilde X_t$ are mean square stable.
The optimal strategies can be written as 
\begin{align}
U_t = K^0_* \hat X_t + 
\begin{bmatrix}
0 \\ K^1_* 
\end{bmatrix}
\tilde X_t.
\end{align}
Then, from \eqref{eq:estimator_inf_2C} we have
\begin{align}
\tilde X_{t+1} 
 = & (1-\Gamma_{t+1}^1)(A_s(1) \tilde X_t + W_t),
\end{align}
where $A_s(1) = (A + B^{11}K^1_*)$.
If $p^1 = 0$ or $1$, the stability result follows from standard  linear system theory arguments.
If $0< p^1 < 1$, the estimation error $\tilde X_{t}$ is a MJLS with an i.i.d. switching process\footnote{Note that this MJLS is not the same as the auxiliary MJLS constructed in Section \ref{sec:Q1_2C}.}.
From \cite[Theorem 3.33]{costa2006discrete}, the estimation error process is mean square stable if the corresponding noiseless system (i.e., with $W_t = 0$) is mean square stable.
Because $\Gamma_{t+1}^1$ is an i.i.d. process, from \cite[Corollary 2.7]{fang2002stochastic}, the noiseless system is mean-square stable if
$p^1\rho( A_{s}(1) \otimes A_{s}(1)) < 1$.

Note that the gain matrices $K^0_* ,K^1_* $ are obtained from the DCARE in \eqref{eq:CARE_infinite} for the SD and SS auxiliary MJLS described by \eqref{eq:MJLS_A}-\eqref{eq:MJLS_theta}, so the corresponding gains stabilize the auxiliary MJLS \cite[Corollary A.16]{costa2006discrete}, \cite[Theorem A.15]{costa2006discrete}. That is, the following matrix
\begin{align}
\mathcal{A}_s = \begin{bmatrix}
A_s(0)\otimes A_s(0) & (1-p^1)A_s(1)\otimes A_s(1) \\
 \mathbf{0} & p^1A_s(1)\otimes A_s(1)
\end{bmatrix}
\label{eq:A_smat}
\end{align}
has a spectral radius less than one (see the proof of Lemma \ref{lm:pc_2C}),  where $A_s(0) = A + BK^0_*$. Thus, $\rho(A_s(0)) <  1$ and $p^1\rho( A_{s}(1) \otimes A_{s}(1)) < 1$. Consequently, the estimation error $\tilde X_{t}$ is mean-square stable.

Now, note that the estimate evolution can be written as
\begin{align}
\hat X_{t+1} 
 = & A_s(0) \hat X_t + \tilde W_t,
\end{align}
where $\tilde W_t = \Gamma_{t+1}^1(A_s(0) \tilde X_t + W_t)$ can be viewed as a ``noise" process.
The process $\tilde W_t$ is mean square stable because $\Gamma_{t+1}^1 \leq 1$, and $\tilde X_t$ and $W_t$ are both mean square stable.
Since $\rho(A_s(0)) <  1$, we conclude that $\hat X_t$ is mean square stable using standard linear system arguments \cite[Theorem 3.4]{KumarVaraiya:1986}.

%

\section{Proof of Lemma \ref{lm:Q3}} \label{sec:lmQ3}
 Consider the matrices $Y^n_k$, $n=0,1, k =0,1,\ldots,$ defined by $Y^n_0 =0$ and the recursions in \eqref{eq:Y_k_0_2C} and \eqref{eq:Y_k_n_2C}. Since  matrices $P_t^n$, $n =0,1,$ do not converge as $t \to -\infty$, it follows that  matrices $Y_k^n$, $n=0,1$, do not converge as $k \rightarrow \infty$ (recall that $Y_k^n = P_{T+1-k}^n $ for $n=0,1$ and $k \geq 0$). 
  
 Recall from  \eqref{eq:Jstar} that $J^*_T = \sum_{k = 0}^T \tr(\Lambda_k)$, where $\Lambda_k =  (1-p^1) Y_{k}^0+p^1 Y_{k}^1$. Also, from the first inequality in \eqref{eq:lowerbound}, recall that $J_{\infty}(g) \geq \limsup_{T\rightarrow\infty} \frac{1}{T+1} J_{T}^*$ for any strategy $g$. Therefore, to show that no strategy can achieve finite cost, it suffices to show that 
 \begin{equation}
 \limsup_{T\rightarrow\infty} \frac{1}{T+1} J_{T}^* =  \limsup_{T\rightarrow\infty} \frac{1}{T+1} \sum_{k = 0}^T \tr(\Lambda_k) = \infty.
 \end{equation}

To do so, we first show that the sequence $\{Y_k^n, k=0,1,\ldots\}$ is monotonically increasing\footnote{ in the sense of the partial order $\preceq$.} for $n =0,1$. To this end, note that $Y_1^n \succeq Y_0^n=0$ for $n \in \{0,1\}$. Furthermore, the monotonic property of the operator $\Omega(\cdot)$ (proved in part (ii) of Lemma \ref{lm:KPrelation_0} in Appendix \ref{app:lm:costgstar}) implies that for $n=0,1,$ and for all $k \geq 0$, $Y_{k+1}^n \succeq Y_{k}^n$. Now, if the sequences $\{Y_k^n, k\geq 0\}$,  $n =0,1,$ are bounded, they will converge due to the monotone behavior. This contradicts the fact that these sequences do not converge as $k \to \infty$. Therefore, at least one of the two sequences $\{Y_k^0, k\geq0\}$, $\{Y_k^1, k\geq0\}$  is unbounded. Consequently, the sequence $\{\Lambda_k, k\geq 0\}$ is unbounded. Hence, $\limsup_{T\rightarrow\infty} \frac{1}{T+1} \sum_{k = 0}^T \tr(\Lambda_k) = \infty.$ This completes the proof.

\section{Proof of Lemma \ref{lm:opt_strategies_new_rep}}
\label{proof_lm:opt_strategies_new_rep}

By comparing \eqref{eq:P_N_init}-\eqref{eq:P_finite} with  \eqref{eq:barP_init}-\eqref{eq:barP_finite_0}, it is straightforward to observe that $P_t^0 = \bar P_t^0$ for all $t$. We will now show by induction that at any time $t$, $P_t^n = \bar P_t^n$ for $n =1,\ldots,N$.
First note that by definition, $P_{T+1}^{n} = \mathbf{0}$ and $\bar P^n_{T+1} = \mathbf{0}$ for $n =1,\ldots,N$. Hence, \eqref{new_rep2} is correct at time $T+1$. Now, assume that \eqref{new_rep2} is correct at time $t+1$ (induction hypothesis). Then, from \eqref{eq:barP_finite} and  the induction hypothesis, we have for $n =1,\ldots,N$,
\begin{align}
\bar P_t^{n} &=  \Omega \big((1-p^n) P_{t+1}^0+p^n \mathcal{L}_{zero}(P^0_{t+1}, P_{t+1}^{n},n,n), \notag \\
& \hspace{1.0cm}  \mathcal{L}_{zero}(Q,Q^{nn},n,n), \mathcal{L}_{iden}(R,R^{nn},n+1), \notag \\
 & \hspace{1.0cm}  \mathcal{L}_{zero}(A,A^{nn},n,n), \mathcal{L}_{zero}(B,B^{nn},n,n+1) \big) \notag \\
&=
\mathcal{L}_{zero}(Q,Q^{nn},n,n)  + \mathbb{T}_1 - \mathbb{T}_2 (\mathbb{T}_3)^{-1} (\mathbb{T}_2)^{\tp},
\label{P_MJ_n}
\end{align}
where 
\begin{align}
\label{T_1_def}
&\mathbb{T}_1 = \mathcal{L}_{zero}(A,A^{nn},n,n)^{\tp} \bar{\bar{P}}_{t+1} \mathcal{L}_{zero}(A,A^{nn},n,n) \\
\label{T_2_def}
&\mathbb{T}_2 = \mathcal{L}_{zero}(A,A^{nn},n,n)^{\tp} \bar{\bar{P}}_{t+1} \mathcal{L}_{zero}(B,B^{nn},n,n+1) , \\
\label{T_3_def}
&\mathbb{T}_3 = \mathcal{L}_{iden}(R,R^{nn},n+1) 
\notag \\
&+ \mathcal{L}_{zero}(B,B^{nn},n,n+1)^{\tp} \bar{\bar{P}}_{t+1} \mathcal{L}_{zero}(B,B^{nn},n,n+1),
\end{align}
and we have defined $\bar{\bar{P}}_{t+1} = (1-p^n) P_{t+1}^0+p^n \mathcal{L}_{zero}(P^0_{t+1}, P_{t+1}^{n},n,n)$.

Note that from the definitions of operators $\mathcal{L}_{zero}$ and $\mathcal{L}_{iden}$ in \eqref{L_zero}-\eqref{L_iden}, it is straightforward to observe that the block dimensions of $\mathbb{T}_1, \mathbb{T}_2, \mathbb{T}_3$ are the same as the block dimensions of $A,B,B^{\tp} B$, respectively (They are block matrices of sizes $N \times N$, $N \times (N+1)$, and $(N+1) \times (N+1)$, respectively). Therefore, through straightforward algebraic manipulations, we can get
\begin{align}
\label{T_1}
&\mathbb{T}_1 = \mathcal{L}_{zero}(A,\mathbb{\tilde T}_1,n,n), \\
\label{T_2}
&\mathbb{T}_2 =  \mathcal{L}_{zero}(B,\mathbb{\tilde T}_2,n,n+1), \\
\label{T_3}
&\mathbb{T}_3 = \mathcal{L}_{iden}(B^{\tp}B,\mathbb{\tilde T}_3,n+1),
\end{align}
where
\begin{align}
\label{tilde_T_1}
&\mathbb{\tilde T}_1 = (A^{nn})^{\tp}[(1-p^n)[P_{t+1}^0]_{n,n}+p^n P_{t+1}^{n}]A^{nn}, \\
\label{tilde_T_2}
&\mathbb{\tilde T}_2 = (A^{nn})^{\tp}[(1-p^n)[P_{t+1}^0]_{n,n}+p^n P_{t+1}^{n}]B^{nn}, \\
\label{tilde_T_3}
&\mathbb{\tilde T}_3 = R^{nn} + (B^{nn})^{\tp}[(1-p^n)[P_{t+1}^0]_{n,n}+p^n P_{t+1}^{n}]B^{nn}.
\end{align}

Further, since $\mathbb{T}_3$ is a block diagonal matrix, we have 
\begin{align}
& (\mathbb{T}_3)^{-1} =\mathcal{L}_{iden} \big(B^{\tp}B,(\mathbb{\tilde T}_3)^{-1},n+1 \big).
\label{T_3_inv}
\end{align}

Now, using \eqref{T_1}-\eqref{T_3_inv} and the fact that  matrices $A, Q, BB^{\tp}$ have the same size as matrix $P^0_t$ (They are block matrices of size $N \times N$), \eqref{P_MJ_n} can be simplified to
\begin{align}
\bar P_t^{n} &= \mathcal{L}_{zero} \big(P^0_t,  Q^{nn} + \mathbb{\tilde T}_1 - \mathbb{\tilde T}_2 (\mathbb{\tilde T}_3)^{-1} (\mathbb{\tilde T}_2)^{\tp},n,n \big) \notag \\
&=   \mathcal{L}_{zero}(P^0_t, P_t^{n},n,n),
\end{align}
where the last equality is true because of the definition of $P_t^{n}$ in \eqref{eq:tildeP_finite}. Hence, \eqref{new_rep2} is true at time $t$. This completes the proof.

\section{Proof of Lemma \ref{lm:pc_NC}}
\label{proof_lm:pc}

\begin{figure*}[t]
\begin{small}
\begin{align}
\mathcal{A}_s =
\begin{blockarray}{ccccc}
\begin{block}{[ccccc]}
                     A_s(0)\otimes A_s(0) & (1-p^1)A_s(1)\otimes A_s(1)& (1-p^2)A_s(2)\otimes A_s(2) & \ldots & (1-p^n)A_s(N)\otimes A_s(N) \\ \\
                     \mathbf{0} & p^1A_s(1)\otimes A_s(1) &\mathbf{0} &\ldots & \mathbf{0} \\ \\
                     \vdots &\ddots  &p^2 A_s(2)\otimes A_s(2)  & \ddots & \vdots \\ \\
                     \vdots &   &\ddots  &\ddots & \mathbf{0} \\ \\
                     \mathbf{0} & \ldots &\ldots & \mathbf{0}  & p^n A_s(N)\otimes A_s(N)\\
\end{block}
\end{blockarray}
\label{eq:bigmatrix3}
\end{align}
\end{small}
\hrule
\end{figure*}
%
\begin{figure*}[t]
\begin{small}
\begin{align}
A_s(n) &= A^{\diamond}(n) + B^{\diamond}(n) K^{\diamond}(n)  \notag \\
&= \begin{blockarray}{cccl}
\text{$1:n-1$} &n &\text{$n+1:N$}  &  \\
\begin{block}{[ccc]l}
                      \text{\large 0}   & \text{\large 0} &\text{\large 0} & \text{$1:n-1$} \\
   B^{nn} [K^{\diamond}(n)]_{n+1,1:n-1} & A^{nn} + B^{nn} [K^{\diamond}(n)]_{n+1,n} & B^{nn} [K^{\diamond}(n)]_{n+1,n+1:N} &n \\
                      \text{\large 0}   & \text{\large 0} & \text{\large 0}&\text{$n+1:N$}  \\
\end{block}
\end{blockarray}.
\label{A_c_l}
\end{align}
\end{small}
\hrule
\end{figure*} 

From Lemma \ref{lm:opt_strategies_new_rep}, we know that the convergence of matrices $P_t^{n}$, $n \in \mathcal{\overline N}$, is equivalent to the convergence of matrices $\bar P_t^{n}$, $n \in \mathcal{\overline N}$. Further, because of Lemma \ref{equality_recursions_NC}, $\bar P^n_t = P^{\diamond}_t(n)$, $n \in \mathcal{\overline N}$, where matrices $P^{\diamond}_t(n)$, $n \in \mathcal{\overline N}$, are defined by \eqref{eq:P_N_MJ_init}-\eqref{P_MJ_cmp_NC_1} for the auxiliary MJLS. Thus, in order to study the the convergence of matrices $P_t^{n}$, $n \in \mathcal{\overline N}$, we can focus on the convergence of matrices $P^{\diamond}_t(n)$, $n \in \mathcal{\overline N}$.

To investigate the convergence of $P^{\diamond}_t(n)$, $n \in \mathcal{\overline N}$, we first show that the auxiliary MJLS described by \eqref{A_mj}-\eqref{transition_prob} is SS if and only if $p^n < p_c^n$ for all $n \in \mathcal{N}$. To do so, we can follow a methodology similar to the one used to prove Lemma \ref{lm:pc_2C} with $\mathcal{A}_s$ defined as in \eqref{eq:bigmatrix3} and $A_s(n)$, $n =1,\ldots,N,$ defined as in \eqref{A_c_l}. 

Next, we can use part 2 of Lemma \ref{lm:ss} to show that the auxiliary MJLS described by \eqref{A_mj}-\eqref{transition_prob} is SD if and only of $\mathcal{A}_d$ defined as in \eqref{eq:bigmatrix4} is Schur stable. Since the matrix $\mathcal{A}_{d}$ is upper-triangular, it is Schur stable if and only if there exist matrices $H^{\diamond}(0)$ and $H^{\diamond}(n)$ for $n \in \mathcal{N}$ such that $\rho\big(A_d(0)\otimes A_d(0)\big)<1$ and $\rho\big(p^n A_d(n)\otimes A_d(n)\big)<1$, where $A_d(n)$, $n \in \mathcal{N}$, defined as in \eqref{A_c_l_2}. The existence of these matrices follows from detectability of $(A,Q)$ and $\big(A^{nn},(Q^{nn})^{1/2} \big)$ for  $n \in \mathcal{N}$ (see Assumptions \ref{assum:det_stb} and \ref{assum:det_stb_2}). 
Hence, the MJLS is SD.


It then follows from Lemma \ref{lm:MJ_infinite} that matrices $P^{\diamond}_{t}(n)$, $n \in \overline{\mathcal{N}}$, converge as $t \to -\infty$  if and only if $p^n < p_c^n$ for all $n \in \mathcal{N}$.  Consequently,  matrices $ P^0_t,\ldots,  P^N_t$  converge as $t \to -\infty$ to matrices $P_*^0,\ldots, P_*^N$ that satisfy the coupled fixed point  equations \eqref{eq:P_finite_NC_fixed}-\eqref{eq:tildeP_finite_NC_fixed} if and only if  $p^n < p_c^n$ for all $n \in \mathcal{N}$. This proves the lemma.

\begin{figure*}[t]
\begin{small}
\begin{align}
\mathcal{A}_d =
\begin{blockarray}{ccccc}
\begin{block}{[ccccc]}
                     A_d(0)\otimes A_d(0) & (1-p^1)A_d(1)\otimes A_d(1)& (1-p^2)A_d(2)\otimes A_d(2) & \ldots & (1-p^n)A_d(N)\otimes A_d(N) \\ \\
                     \mathbf{0} & p^1A_d(1)\otimes A_d(1) &\mathbf{0} &\ldots & \mathbf{0} \\ \\
                     \vdots &\ddots  &p^2 A_d(2)\otimes A_d(2)  & \ddots & \vdots \\ \\
                     \vdots &   &\ddots  &\ddots & \mathbf{0} \\ \\
                     \mathbf{0} & \ldots &\ldots & \mathbf{0}  & p^n A_d(N)\otimes A_d(N)\\
\end{block}
\end{blockarray}
\label{eq:bigmatrix4}
\end{align}
\end{small}
\hrule
\end{figure*}
\begin{figure*}[t]
\begin{small}
\begin{align}
&A_d(n) = A^{\diamond} (n) +  H^{\diamond}(n) \big(Q^{\diamond}(n) \big)^{1/2}
= \begin{blockarray}{cccl}
\text{$1:n-1$} &n &\text{$n+1:N$}  &  \\
\begin{block}{[ccc]l}
                      \text{\large 0}   & [H^{\diamond}(n)]_{n,1:n-1} (Q^{nn})^{1/2} &\text{\large 0} & \text{$1:n-1$} \\
   \text{\large 0} & A^{nn} + [H^{\diamond}(n)]_{n,n} (Q^{nn})^{1/2} & \text{\large 0} &n \\
                      \text{\large 0}   & [H^{\diamond}(n)]_{n,n+1:N} (Q^{nn})^{1/2} & \text{\large 0}&\text{$n+1:N$}  \\
\end{block}
\end{blockarray}.
\label{A_c_l_2}
\end{align}
\end{small}
\hrule
\end{figure*} 


\section{Proof of Lemma \ref{lm:Q2_NC}} \label{sec:proof_lm_Q2_NC}
Let $g^*$ denote the strategies defined by \eqref{eq:K_finite_NC_fixed}-\eqref{eq:opt_U_NC_fixed}. We want to show that for any $g \in \mathcal G$, $J_{\infty}(g) \geq J_{\infty}(g^*)$ and that $J_{\infty}(g^*)$ is finite. We will make use of the following claim.

\begin{claim}\label{claim:cost_optimal_NC}
For the strategies $g^*$ described  by \eqref{eq:K_finite_NC_fixed}-\eqref{eq:opt_U_NC_fixed}, the following equation is true:
\begin{align}
J_{T}(g^{*}) =&(T+1) \tr (\Lambda_*) - \ee^{g*} [V_{T+1} ]
\label{eq:costforgstar_NC}
\end{align}
where $J_T(g^*)$ is the finite horizon cost of $g^*$ over a horizon of duration $T$, $\Lambda_* =  \sum_{n=1}^N \big( (1-p^n) [P_{*}^0]_{n,n}+p^n P_{*}^n \big)$ and for any $t\geq 0$,
\begin{align}
V_t = \hat X_{t}^\tp P_*^0 \hat X_t + \sum_{n=1}^N \tr \big(P_*^{n}\cov(X_{t}^n|H^0_{t}) \big).
\label{V_t_NC}
\end{align}
\end{claim}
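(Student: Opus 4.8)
The plan is to mirror the proof of Claim~\ref{claim:cost_optimal_2C} given in Appendix~\ref{proof_claim:cost_optimal_2C}, adapting each step to accommodate the $N$ plants. As in that proof, it suffices to establish the per-step identity
\begin{align}
\ee^{g^*}[c(X_t,U^*_t)\mid H^0_t] = \tr(\Lambda_*) + \ee^{g^*}[V_t - V_{t+1}\mid H^0_t]
\label{eq:pergoal_NC}
\end{align}
for every $t\ge 0$; summing \eqref{eq:pergoal_NC} from $t=0$ to $T$ and using $V_0=0$ (because $X_0=\hat X_0=0$) yields \eqref{eq:costforgstar_NC}. Since $V_t$ in \eqref{V_t_NC} is $H^0_t$-measurable, \eqref{eq:pergoal_NC} is equivalent to $\ee^{g^*}[c(X_t,U^*_t)\mid H^0_t] + \ee^{g^*}[V_{t+1}\mid H^0_t] = \tr(\Lambda_*)+V_t$, and I would verify this last equation.

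The key structural fact, which is genuinely new relative to the single-plant case, is that the conditional covariance $\Sigma_t := \cov(X_t\mid H^0_t)$ is \emph{block diagonal}, i.e.\ $\cov(X^m_t, X^n_t\mid H^0_t)=\mathbf 0$ for $m\ne n$, with diagonal blocks $\Sigma^n_t := \cov(X^n_t\mid H^0_t)$. I would prove this by induction on $t$: starting from $\Sigma_0=\mathbf 0$, the prediction error $X^n_{t+1}-\hat X^n_{t+1\mid t} = (A^{nn}+B^{nn}K^n_*)(X^n_t - \hat X^n_t) + W^n_t$ involves only plant $n$'s own state error and noise, so under the block-diagonal $A$, the mutually independent noises $W^n_t$, and the independent link variables $\Gamma^n_{t+1}$, distinct plants' errors remain uncorrelated through both the prediction step and the measurement update $\tilde X^n_{t+1}=(1-\Gamma^n_{t+1})(X^n_{t+1}-\hat X^n_{t+1\mid t})$. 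This block-diagonality is exactly what forces the off-diagonal blocks $[P^0_*]_{m,n}$ ($m\ne n$) to drop out of the trace terms and makes the whole computation decouple into $N$ copies of the single-plant calculation.

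With this in hand I would compute the two conditional expectations as in Appendix~\ref{proof_claim:cost_optimal_2C}. Writing $X_t=\hat X_t+\tilde X_t$ and $U^*_t=K^0_*\hat X_t+\tilde K_*\tilde X_t$, with $\tilde K_*$ the structured error-gain matrix multiplying $(X_t-\hat X_t)$ in \eqref{eq:opt_U_NC_fixed}, gives $\ee^{g^*}[c(X_t,U^*_t)\mid H^0_t]=\hat X_t^\tp\big(Q+(K^0_*)^\tp R K^0_*\big)\hat X_t + \sum_{n\in\mathcal N}\tr\big((Q^{nn}+(K^n_*)^\tp R^{nn}K^n_*)\Sigma^n_t\big)$, where block-diagonality of $\Sigma_t$ annihilates the off-diagonal $R^{mn}$ contributions. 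For $V_{t+1}$, using $\hat X^n_{t+1}=\Gamma^n_{t+1}X^n_{t+1}+(1-\Gamma^n_{t+1})\hat X^n_{t+1\mid t}$ and $\Sigma^n_{t+1}=(1-\Gamma^n_{t+1})\Sigma^n_{t+1\mid t}$ and taking expectation over the independent $\Gamma^n_{t+1}$ (so that $\ee[\Gamma^m_{t+1}\Gamma^n_{t+1}]$ for $m\ne n$ multiplies a vanishing cross-covariance), I obtain $\ee^{g^*}[V_{t+1}\mid H^0_t]=\hat X_{t+1\mid t}^\tp P^0_* \hat X_{t+1\mid t}+\sum_{n\in\mathcal N}\tr(\Xi_n\,\Sigma^n_{t+1\mid t})$ with $\Xi_n:=(1-p^n)[P^0_*]_{n,n}+p^n P^n_*$. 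Substituting $\hat X_{t+1\mid t}=(A+BK^0_*)\hat X_t$ and $\Sigma^n_{t+1\mid t}=(A^{nn}+B^{nn}K^n_*)\Sigma^n_t(A^{nn}+B^{nn}K^n_*)^\tp+\mathbf I$ produces the constant term $\sum_{n\in\mathcal N}\tr(\Xi_n)=\tr(\Lambda_*)$.

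Finally I would add the two expressions and group the $\hat X_t$-quadratic and the $\Sigma^n_t$-trace terms. The coefficient of the $\hat X_t$ quadratic form is $\Phi(P^0_*,K^0_*,Q,R,A,B)$, which equals $\Omega(P^0_*,Q,R,A,B)=P^0_*$ by Lemma~\ref{lm:KPrelation_0}(i) together with \eqref{eq:P_finite_NC_fixed} and $K^0_*=\Psi(P^0_*,R,A,B)$ from \eqref{eq:K_finite_NC_fixed}. The coefficient of each $\Sigma^n_t$ is $\Phi(\Xi_n,K^n_*,Q^{nn},R^{nn},A^{nn},B^{nn})$, which equals $\Omega(\Xi_n,Q^{nn},R^{nn},A^{nn},B^{nn})=P^n_*$ by Lemma~\ref{lm:KPrelation_0}(i) together with \eqref{eq:tildeP_finite_NC_fixed} and $K^n_*=\Psi(\Xi_n,R^{nn},A^{nn},B^{nn})$ from \eqref{eq:tildeK_finite_NC_fixed}. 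Hence the sum collapses to $\tr(\Lambda_*)+\hat X_t^\tp P^0_*\hat X_t+\sum_{n\in\mathcal N}\tr(P^n_*\Sigma^n_t)=\tr(\Lambda_*)+V_t$, which is the desired identity. I expect the main obstacle to be establishing and then carefully exploiting the block-diagonality of $\Sigma_t$; once that is secured, the remaining algebra is a plant-by-plant replay of the two-controller argument.
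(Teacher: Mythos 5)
Your proposal follows the paper's own argument step for step: the same reduction to the per-step identity \eqref{eq:main_goal_NC_2}, the same splitting and computation of $\ee^{g^*}[c(X_t,U^*_t)\mid H^0_t]$ and $\ee^{g^*}[V_{t+1}\mid H^0_t]$, and the same collapse of the $\hat X_t$-quadratic and $\Sigma^n_t$-trace coefficients to $P^0_*$ and $P^n_*$ via Lemma \ref{lm:KPrelation_0} and the fixed-point equations \eqref{eq:P_finite_NC_fixed}--\eqref{eq:tildeP_finite_NC_fixed}. The one substantive difference is how the cross-plant structure is secured, and there your argument as written has a gap. You propose to prove by induction only that $\cov(X_t\mid H^0_t)$ is block diagonal, i.e., that distinct plants' errors are \emph{uncorrelated} given $H^0_t$. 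Uncorrelatedness does suffice for the decoupling of the trace terms (the analogues of $\mathbb{T}_4$, $\mathbb{T}_5$, $\mathbb{T}_6$). But it does not justify the measurement-update relations you then invoke, $\hat X^n_{t+1}=\Gamma^n_{t+1}X^n_{t+1}+(1-\Gamma^n_{t+1})\hat X^n_{t+1\mid t}$ and $\Sigma^n_{t+1}=(1-\Gamma^n_{t+1})\Sigma^n_{t+1\mid t}$: when link $n$ fails but link $k$ succeeds, $H^0_{t+1}$ contains the realized state $X^k_{t+1}$, and for merely uncorrelated (but possibly dependent) plant states this observation can shift both the conditional mean and the conditional covariance of $X^n_{t+1}$. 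This matters here because the model explicitly does \emph{not} assume Gaussian noise, so zero cross-covariance does not imply independence. What is actually needed, and what the paper imports as \cite[Lemma 3]{asghari_ouyang_nayyar_tac_2018} (cited in its computation of $\mathbb{T}_4$ and in the footnote justifying \eqref{estimation_state}--\eqref{estimation_covariance}), is full \emph{conditional independence} of $X^1_t,\ldots,X^N_t$ given $H^0_t$; the same fact is what guarantees that the recursively defined estimate appearing in the strategy $g^*$ coincides with the conditional expectation $\ee[X_t\mid H^0_t]$ used throughout your computation.

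The repair is cheap and your induction skeleton already contains it: strengthen the induction hypothesis from zero cross-covariance to conditional independence. Given conditional independence at time $t$, each $X^n_{t+1}$ is a measurable function of $(X^n_t, W^n_t)$ and $H^0_t$-measurable quantities (the coupling terms through $U^0_t$ and $K^0_*\hat X_t$ are common-information terms, hence constants given $H^0_t$), and the noises $W^n_t$ and link states $\Gamma^n_{t+1}$ are mutually independent; hence the plant states remain conditionally independent after both the prediction step and the conditioning on $Z^{1:N}_{t+1}$, and the belief on a failed-link plant is genuinely unchanged. With that strengthening, the rest of your plan is exactly the paper's proof in Appendix \ref{Cost_of_the_Strategies}.
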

\begin{proof}See Appendix \ref{Cost_of_the_Strategies} for a proof of Claim \ref{claim:cost_optimal_NC}.
\end{proof}

Along the lines of the proof of Lemma \ref{lm:Q2_2C} in Appendix \ref{sec:lm_Q2_2C}, we define  $Y_{0}^0 = 0$, and for $k=0,1,2,\dots,$
\begin{align}
&Y_{k+1}^0 =  \Omega(Y_{k}^0,Q,R,A,B),
\end{align}
and for $n=1,\ldots,N,$
\begin{align}
&Y^n_0 =0, \\
&Y_{k+1}^n =  \Omega \big((1-p^n)[Y_{k}^0]_{n,n}+p^n Y_{k}^{n},Q^{nn},R^{nn},A^{nn},B^{nn}\big).
\end{align}
It's easy to check that for  $n=0,1,\ldots,N,$  $Y_k^n = P_{T+1-k}^n$ for all  $k \geq 0$, and that $\lim_{k \rightarrow \infty} Y_k^n = \lim_{t \rightarrow -\infty} P_t^n  = P^n_*$.
 The rest of the proof for parts 1 and 2 follows the same arguments as in Appendix \ref{sec:lm_Q2_2C} for the proof of parts 1 and 2 of Lemma \ref{lm:Q2_2C}. 
For the proof of part 3, define for $n=1,\ldots,N,$ 
 \begin{align*}
 \bar K_*^n := 
 \Psi \big( &(1-p^n)\bar P_*^0+p^n \bar P_{*}^n, R^{\diamond}(n),  A^{\diamond}(n), B^{\diamond}(n) \big),
 \end{align*}
 where $\bar P_*^{0:N}$ are the limits of $\bar P_t^{0:N}$ (see Lemmas \ref{lm:opt_strategies_new_rep} and \ref{lm:pc_NC} and the auxiliary MJLS in Section \ref{sec:model_N_controllers}). 
Then, it can be shown that  (i) $\bar K_*^n = \mathcal{L}_{zero}(K_*^0,K_*^{n},n+1,n)$ and hence (ii)\\ $p^n\rho( (A^{nn} +B^{nn}K^n_*) \otimes (A^{nn} +B^{nn}K^n_*)) = p^n \rho (( A^{\diamond}(n) + B^{\diamond}(n)\bar K_*^n ) \otimes (A^{\diamond}(n) + B^{\diamond}(n)\bar K_*^n )),  $ which  is less than 1 since the auxiliary MJLS of \eqref{A_mj}-\eqref{transition_prob} is SD and SS (see proof of Lemma \ref{lm:pc_NC}). The rest of the proof uses similar arguments as in Appendix \ref{sec:stability_proof} for the proof of part 3 of Lemma 7.


\section{Proof of Lemma \ref{lm:Q3_NC}} \label{sec:proof_lm_Q3_NC}
The proof can be obtained by following the arguments  in the proof of Lemma \ref{lm:Q3} and defining $\Lambda_k =  \sum_{n=1}^N \big( (1-p^n) [Y_{k}^0]_{n,n}+p^n Y_{k}^n \big)$, where $Y^0_k,Y^n_k$ are as defined in Appendix \ref{sec:proof_lm_Q2_NC}.

\section{Proof of Claim \ref{claim:cost_optimal_NC}}
\label{Cost_of_the_Strategies}
In order to show that \eqref{eq:costforgstar_NC} holds, it suffices to show that the following equation is true for all $t \geq 0$:
\begin{align}
\ee^{g^*} [ c(X_t,U_t^*) | H_t^0]=  \tr(\Lambda_*)  + \ee^{g^*} [V_t - V_{t+1} | H_t^0],
\label{main_goal_NC}
\end{align}
where $U^*_t$ are the control actions at time $t$ under $g^*$.
This is because by taking the expectation of \eqref{main_goal_NC} and summing it from $t=0$ to $T$, we obtain
\begin{align}
J_{T}(g^*) 
=(T+1) \tr(\Lambda_*) 
- \ee^{g^*} [V_{T+1} ].
\end{align}

Now, to show that \eqref{main_goal_NC} holds,  note that $\ee^{g^*}[V_t | H_t^0] = V_t$ since $V_t$ is a function of $H^0_t$.  Hence, \eqref{main_goal_NC} is equivalent to
\begin{align}
\ee^{g^*} [ c(X_t,U_t^*) | H_t^0] + \ee^{g^*} [ V_{t+1} | H_t^0] = \tr(\Lambda_*)  + V_t. \label{eq:main_goal_NC_2}
\end{align}
 In the following subsections we will calculate $\ee^{g^*} [ c(X_t,U_t^*) | H_t^0]$ and $\ee^{g^*}[V_{t+1} | H_t^0]$ and then simplify the left hand side of \eqref{eq:main_goal_NC_2}.  To do so, we define for $n=1,\ldots,N,$
\begin{align}
&\hat X_{t+1|t}^n := \ee [X_{t+1}^n|H^0_{t}]
\\
&\Sigma_{t+1|t}^n := \cov(X_{t+1}^n|H^0_{t})
 \\
&\Sigma_t^n := \cov(X_{t}^n|H^0_{t}).
\label{sigma_t}
\end{align}
and recall that $\hat{X}_t^n = \ee [X_{t}^n|H^0_{t}]$.
 
 \subsection{Calculating $\ee^{g^*} [ c(X_t,U_t^*) | H_t^0]$}
 Note that 
\begin{small}
\begin{align}
&\ee^{g^*} [ c(X_t,U_t^*) | H_t^0]=  \underbrace{\ee^{g^*} [X_t^\tp Q X_t | H_t^0]}_{\mathbb{T}_4} + \underbrace{\ee^{g^*}[U_t^{*\tp} R U_t^* | H_t^0]}_{\mathbb{T}_5}. 
\label{cost_given_h_NC}
\end{align}
\end{small}
$\mathbb{T}_4$ can be written as
\begin{small}
\begin{align}
\mathbb{T}_4 &=  \hat X_t^\tp Q\hat X_t + \ee \Big[(X_t-\hat X_t)^\tp Q (X_t-\hat X_t)|H^0_t\Big] \notag \\
& =  \hat X_t^\tp Q\hat X_t + \sum_{n=1}^N \tr\big(Q^{nn}\Sigma_t^n \big).
\end{align}
\end{small}
where the second equality is true because according to \cite[Lemma 3]{asghari_ouyang_nayyar_tac_2018} $X_t^n$ and $X_t^m$, $m \neq n,$ are conditionally independent given $H_t^0$.

Similarly, $\mathbb{T}_5$ can be written as
\begin{small}
\begin{align}
\mathbb{T}_5 &=\hat X_t^\tp(K_*^0)^\tp R K_*^0\hat X_t + \sum_{n=1}^N \tr \big((K_*^n)^{\tp} R^{nn} K_*^n \Sigma_{t}^n \big ).
\end{align}
\end{small}
Thus,
\begin{align}\label{eq:calculated_c_N}
&\ee^{g^*} [ c(X_t,U_t^*) | H_t^0]= \hat X_t^\tp Q\hat X_t + \sum_{n=1}^N \tr\big(Q^{nn}\Sigma_t^n \big) \notag \\
&~+  \hat X_t^\tp(K_*^0)^\tp R K_*^0\hat X_t + \sum_{n=1}^N \tr \big((K_*^n)^{\tp} R^{nn} K_*^n \Sigma_{t}^n \big ).
\end{align} 
 
 \subsection{Calculating $\ee^{g^*}[V_{t+1}|H^0_t]$}
 From the definition of $V_{t+1}$ (see \eqref{V_t_NC})  we have

\vspace{-2mm}
\begin{small}
\begin{align}
 \ee^{g^*}[V_{t+1}|H^0_t] &= \underbrace{\ee^{g^*}\Big[
\hat X_{t+1}^\tp P_*^0 \hat X_{t+1} \Big| H^0_t
\Big]}_{\mathbb{T}_6} \notag \\
&+ \underbrace{\ee^{g^*}\Big[
\sum_{n=1}^N \tr \big(P_*^n \Sigma_{t+1}^n \big )
\Big| H^0_t
\Big]}_{\mathbb{T}_7}.\label{eq:T6T7_N}
\end{align}
\end{small}
Note that if  $\Gamma_{t+1}^n = 1$, $\hat X_{t+1}^n = X_{t+1}^n$ and  $\Sigma_{t+1}^n = 0$ and if $\Gamma_{t+1}^n = 0$, $\hat X_{t+1}^n = \hat X_{t+1|t}^n$ and $\Sigma_{t+1}^n = \Sigma_{t+1|t}^n$\footnote{If $\Gamma^n_{t+1}=0$, the remote controller gets no new information about $X^n_{t+1}$. Hence, its belief on $X^n_{t+1}$ given $H_{t+1}^0$ remains the same as its belief on $X^n_{t+1}$ given $H_{t}^0$.}. Let $\Delta$ be a random block diagonal matrix defined as follows:
\begin{align}
\Delta := \begin{bmatrix}
   \Gamma_{t+1}^1 \mathbf{I}_{d_X^1} & & \text{\huge0}\\
          & \ddots & \\
     \text{\huge0} & & \Gamma_{t+1}^N \mathbf{I}_{d_X^N}
\end{bmatrix}.
\label{Delta}
\end{align}
Then, we can write 
\begin{align}
\label{estimation_state}
&\hat X_{t+1} = \Delta X_{t+1} +(\mathbf{I}-\Delta) \hat X_{t+1|t}, \\
 &\Sigma_{t+1}^n  = (1-\Gamma_{t+1}^n) \Sigma_{t+1|t}^n, \quad n \in \mathcal{N}.
\label{estimation_covariance}
\end{align}
Now, we use \eqref{estimation_state} and \eqref{estimation_covariance} to calculate  the terms $\mathbb{T}_6$ and $\mathbb{T}_7$ in \eqref{eq:T6T7_N}. It can be shown through some straightforward  manipulations that 
\vspace{-2mm}
\begin{small}
\begin{align}
&\mathbb{T}_6= 
  \hat X_{t+1|t} ^\tp P_*^0 \hat X_{t+1|t} + \sum_{n=1}^N (1-p^n) \tr \big ([P_*^0]_{n,n}  \Sigma_{t+1|t}^n  \big).
\label{T_4_N}
\end{align}
\end{small}
Similarly, it can be shown that
\vspace{-2mm}
\begin{small}
\begin{align}\label{eq:T7_N}
&\mathbb{T}_7= \sum_{n=1}^N p^n \tr \big (P_*^n \Sigma_{t+1|t}^n  \big).
\end{align}
\end{small}
Combining \eqref{eq:T6T7_N}, \eqref{T_4_N} and \eqref{eq:T7_N}, we get
 \begin{align}
 &\ee^{g^*}[V_{t+1}|H^0_t]= \hat X_{t+1|t} ^\tp P_*^0 \hat X_{t+1|t}+ \notag \\ &\sum_{n=1}^N (1-p^n) \tr \big ([P_*^0]_{n,n}  \Sigma_{t+1|t}^n  \big) +
 \sum_{n=1}^N p^n \tr \big (P_*^n \Sigma_{t+1|t}^n  \big). \label{eq:vt1_N}
\end{align}
Since the right hand side of  \eqref{eq:vt1_N} involves  $\hat X_{t+1|t}$ and $\Sigma^n_{t+1|t}$, we will now try to write these in terms of  $\hat X_{t}$ and $\Sigma^n_{t}$. It can be easily established that 
\begin{align}
\label{estimation_X_NC}
&\hat X_{t+1|t} = A_s(0) \hat X_t, \\
&\Sigma_{t+1|t}^n =  \mathbf I + A_s(n) \Sigma_t^n A_s(n)^\tp,
\label{Sigma_t_NC}
\end{align}
where $A_s(0) = A+ B K_*^0$ and $A_s(n) = A^{nn} + B^{nn}K^n_*$ for $n=1,\ldots,N$.

Using \eqref{estimation_X_NC},\eqref{Sigma_t_NC} and the definition of $\Lambda_*$  in \eqref{eq:vt1_N}, we get
\begin{align}
 &\ee^{g^*}[V_{t+1}|H^0_t]= \hat X_t ^\tp A_s(0)^\tp P_*^0 A_s(0) \hat X_t+ \tr(\Lambda_*)\notag \\ 
   &+\sum_{n=1}^N \big((1-p^n) \tr([P_{*}^0]_{n,n}A_s(n) \Sigma_t A_s(n)^\tp) \notag \\
&+ p^n\tr(P_*^nA_s(n) \Sigma_t A_s(n)^\tp)\big). \label{eq:vt1A_N}
\end{align}

 \subsection{Simplifying the left hand side of \eqref{eq:main_goal_NC_2}}
 Adding \eqref{eq:vt1A_N} and \eqref{eq:calculated_c_N} and grouping together the terms involving $\hat X_t$ and those involving $\Sigma^n_t$, we can write
\begin{small}
 \begin{align}
& \ee^{g^*}[c(X_t,U_t^*)|H^0_t] + \ee^{g^*}[V_{t+1}|H^0_t] 
 \notag \\
& = \tr(\Lambda_*)+ \hat X_t ^\tp \Phi(P_*^0,K^0_*) \hat X_t  + \notag \\
&\sum_{n=1}^N \tr(\Phi^n((1-p^n)[P_*^0]_{n,n} +p^nP^n_*, K^n_*)\Sigma^n_t), \label{eq:vt1C_N}
\end{align}
\end{small}
where \begin{small}
\begin{align}
\Phi(P_*^0,K^0_*) := Q + (K^0_*)^\tp RK^0_*
+ (A+BK^0_*)^\tp P^0_*(A+BK^0_*),
\end{align}
\end{small}
and
\begin{small}
\begin{align}
&\Phi^n((1-p^n)[P_*^0]_{n,n} +p^nP^n_*, K^n_*) :=Q^{nn} + (K_*^n)^{\tp} R^{nn} K_*^n \notag \\
&+(A^{nn}+B^{nn} K^n_*)^\tp\big((1-p^n)[P_*^0]_{n,n} +p^nP^n_* \big)(A^{nn}+B^{nn}K^n_*).
\end{align}
\end{small}
Using the fact that $K^0_* = \Psi(P_{*}^0,R,A,B)$ and that $P^0_* = \Omega(P_{*}^0,Q,R,A,B)$, it can be shown that 
\begin{equation}
\Phi(P_*^0,K^0_*) = P^0_*.\label{eq:phi0_N}
\end{equation}
Similarly, using the fact that $K_*^n = \Psi((1-p^n)[P_{*}^0]_{n,n}+p^n P_{*}^n,R^{nn},A^{nn},B^{nn})$ and that $P_*^{n} =  \Omega \big((1-p^n)[P_{*}^0]_{n,n}+p^n P_{*}^{n},Q^{nn},R^{nn},A^{nn},B^{nn}\big)$, it can be shown that 
\begin{equation}
\Phi^n((1-p^n)[P_*^0]_{n,n} +p^nP^n_*, K^n_*) =  P^n_*.\label{eq:phi1_N}
\end{equation}

Using \eqref{eq:phi0_N} and \eqref{eq:phi1_N} in \eqref{eq:vt1C_N}, we get
\begin{small}
\begin{align}
& \ee^{g^*}[c(X_t,U_t^*)|H^0_t] + \ee^{g^*}[V_{t+1}|H^0_t]  \notag \\
&= \tr(\Lambda_*)
 + \hat X_t ^\tp P^0_* \hat X_t  + \sum_{n=1}^N\tr(P^n_*\Sigma^n_t), \notag \\
&= \tr(\Lambda_*) + V_t. \label{eq:vt1D_N}
\end{align}
\end{small}
This establishes \eqref{eq:main_goal_NC_2} and hence completes the proof of the claim.

\end{document}